\documentclass[14pt,a4paper]{extarticle} 

\usepackage[margin=1in]{geometry}
\usepackage{pdflscape}
\usepackage{everypage}
\usepackage{lipsum}

\newcommand{\Lpagenumber}{\ifdim\textwidth=\linewidth\else\bgroup
	\dimendef\margin=0 
	\ifodd\value{page}\margin=\oddsidemargin
	\else\margin=\evensidemargin
	\fi
	\raisebox{\dimexpr -\topmargin-\headheight-\headsep-0.5\linewidth}[0pt][0pt]{%
		\rlap{\hspace{\dimexpr \margin+\textheight+\footskip}%
			\llap{\rotatebox{90}{\thepage}}}}%
	\egroup\fi}
\AddEverypageHook{\Lpagenumber}%
\usepackage[contents={}]{background}

\usepackage{amsmath}
\usepackage{amsthm}
\usepackage{amssymb}
\usepackage{algorithm}
\usepackage{graphicx}
\usepackage{graphicx,psfrag,epsf}
\usepackage{subcaption}
\usepackage{enumitem}
\usepackage{thmtools}
\usepackage{amsmath,siunitx,booktabs}
\usepackage{multirow}
\usepackage{float}
\usepackage{array} 
\usepackage[T1]{fontenc}
\usepackage{booktabs,graphicx,makecell,siunitx,eqparbox,ragged2e}

\usepackage{tikz}
\usetikzlibrary{arrows.meta}

\usepackage{comment}

\usepackage[utf8]{inputenc}
\usepackage[hidelinks]{hyperref}
\hypersetup{
	unicode,
	colorlinks=true,
	linktoc=none,
	urlcolor=blue, 
	linkcolor=blue, 
	citecolor=blue,
	pdfauthor={Author One, Author Two, Author Three},
	pdftitle={A simple article template},
	pdfsubject={A simple article template},
	pdfkeywords={article, template, simple},
	pdfproducer={LaTeX},
	pdfcreator={pdflatex}
}

\usepackage{url} 

\usepackage{natbib}

\usepackage{etoolbox}

\makeatletter

\patchcmd{\NAT@citex}
{\@citea\NAT@hyper@{%
		\NAT@nmfmt{\NAT@nm}%
		\hyper@natlinkbreak{\NAT@aysep\NAT@spacechar}{\@citeb\@extra@b@citeb}%
		\NAT@date}}
{\@citea\NAT@nmfmt{\NAT@nm}%
	\NAT@aysep\NAT@spacechar\NAT@hyper@{\NAT@date}}{}{}

\patchcmd{\NAT@citex}
{\@citea\NAT@hyper@{%
		\NAT@nmfmt{\NAT@nm}%
		\hyper@natlinkbreak{\NAT@spacechar\NAT@@open\if*#1*\else#1\NAT@spacechar\fi}%
		{\@citeb\@extra@b@citeb}%
		\NAT@date}}
{\@citea\NAT@nmfmt{\NAT@nm}%
	\NAT@spacechar\NAT@@open\if*#1*\else#1\NAT@spacechar\fi\NAT@hyper@{\NAT@date}}
{}{}

\makeatother

\usepackage{cleveref} 
\theoremstyle{plain}

\newtheorem{proposition}{Proposition}

\theoremstyle{definition}
\newtheorem{definition}{Definition}

\usepackage{graphicx, color}
\graphicspath{{fig/}}

\usepackage{algorithm, algpseudocode} 
\usepackage{mathrsfs} 

\usepackage{lipsum}

\usepackage{dashrule}

\title{Stationary birth-death processes generating inflation-deflation distributions: Avoiding the issue of dominance}
\author{Wanrudee Skulpakdee$^1$ \thanks{Corresponding author.} \and Mongkol Hunkrajok$^2$}

\date{
	$^1$Graduate School of Applied Statistics, National Institute of Development Administration, Bangkok, Thailand.\\ \texttt{wanrudee.sku@nida.ac.th}\\%
	$^2$Independent Researcher, Bangkok, Thailand.\\ 
	\texttt{hunkrajokmongkol@gmail.com}\\[2ex]%
}

\begin{document}
	\addtocontents{toc}{\protect\iffalse}
	\maketitle
	\sloppy
	\renewcommand\thmcontinues[1]{Continued}
	\begin{abstract}
	A mixture of two or more count distributions has become deeply embedded in the analysis of excess counts, often relative to the stationary (equilibrium) distributions of birth-death processes such as the geometric, Poisson, Poisson-Lindley (PL), negative binomial (NB), hyper-Poisson (HP), and Conway-Maxwell-Poisson (CMP) distributions. However, the mechanism by which excess counts arise—namely, through modifications of the birth and death rates in the base distributions—has not yet been directly examined in the research literature. All well-known inflation mixture distributions are, in fact, parameterizations of the stationary distributions of birth-death processes. Thus, although the resulting distributions share the same shapes, they arise from distinct mechanisms and are not equivalent in regression analyses. This paper focuses on inflation-deflation stationary distributions arising from modified birth-death processes that form an exponential family and introduces two types of such distributions.\\
		
	\noindent\textbf{Keywords:} zero-inflation, geometric, Poisson, Poisson-Lindley, negative binomial, hyper-Poisson, Conway-Maxwell-Poisson
	\end{abstract}

	
	\linespread{1.125}\selectfont
	
	\section{Introduction}
	\label{sec:intro}
	Count data often exhibit excess counts relative to the stationary distributions of birth-death processes, including the geometric, Poisson, PL, NB, HP, and CMP. This phenomenon has motivated the development of numerous alternative distributions for use in applied statistics. Such patterns may result from unobserved heterogeneity, which arises when responses are drawn from a population composed of multiple sub-populations. To address this, researchers frequently use inflated models based on mixture mechanisms. For example, a Google Scholar search from 2020 to 2025 identified 17,000 articles containing the phrase ``inflated count data model'' anywhere in the articles.

	The hurdle model (\citealt{Cragg71}, \citealt{Mullahy86}) is a form of zero-inflated mixture model that separates between zero and non-zero counts. A degenerate distribution (not a Bernoulli distribution) with probability $\pi$ represents zero counts, whereas non-zero counts are modeled using a zero-truncated distribution with probability $1-\pi$. This mixture model can be expressed as
	\begin{equation}
		p(n,\boldsymbol{\theta},\pi) =\left\{\begin{array}{cll}
			\displaystyle \pi & , & n=0 \\
			\displaystyle \frac{(1-\pi)b(n,\boldsymbol{\theta})}{1-b(0,\boldsymbol{\theta})} & , & n>0, \\
		\end{array}\right.
		\label{eq:hurdle}
	\end{equation}
	where $\pi$ is the probability of zero in a Bernoulli distribution, $\boldsymbol{\theta}=[\theta_1,\theta_2,...,\theta_l] \in R^l$, and $b(n,\boldsymbol{\theta})$ is a base distribution. Since the hurdle is set at zero, $p(0,\boldsymbol{\theta},\pi)$ is independent of $b(0,\boldsymbol{\theta})$. Additionally, $b(n,\boldsymbol{\theta})/(1-b(0,\boldsymbol{\theta}))$ refers to the zero-truncated form of a standard discrete distribution, which includes the six previously mentioned stationary distributions. \citet[p.~256]{Puig24} recently introduced a stationary birth-death process that generates the hurdle model \eqref{eq:hurdle} with a Poisson base distribution.
	
	This model can be expanded to include multiple hurdles. The $(l+q+1)$-parameter family of multiple hurdle mixture models may have a probability function
	\begin{equation}
		p(n,\boldsymbol{\theta},\boldsymbol{\pi}) =\left\{\begin{array}{cll}
			\displaystyle \pi_n & , & n\le q \\
			\displaystyle \frac{\left( 1-\sum_{k=0}^{q} \pi_k\right)b(n,\boldsymbol{\theta})}{ 1-\sum_{k=0}^{q} b(k,\boldsymbol{\theta})} & , & n>q, \\
		\end{array}\right.
	\end{equation}
	where $\boldsymbol{\pi}=\left[ \pi_0,\pi_1,...,\pi_q \right]$, $0<\pi_k<1$ for $k=0,1,...,q$ and $\sum_{k=0}^{q} \pi_k<1$. These extended distributions have received limited attention due to a lack of practical applications \citep[p.~181]{Winkelmann08}.
	
	If we let $\pi$ depend on $b(0,\boldsymbol{\theta})$, that is, $\pi=\omega+(1-\omega)b(0,\boldsymbol{\theta})$, then the mixture model \eqref{eq:hurdle} becomes the zero-inflated mixture model (\citealt{Singh63}, \citealt{Johnson69})
	\begin{equation}
		p(n,\boldsymbol{\theta},\omega) =\left\{\begin{array}{cll}
			\displaystyle \omega+(1-\omega)b(0,\boldsymbol{\theta}) & , & n=0 \\
			\displaystyle (1-\omega)b(n,\boldsymbol{\theta}) & , & n>0, \\
		\end{array}\right.
		\label{eq:zip}
	\end{equation}
	where $\omega$ is a probability at zero in a Bernoulli distribution, and $p(0,\omega ,\boldsymbol{\theta})$ depends on $b(0,\boldsymbol{\theta})$. Its zero probability is dominated by $\omega$ when $b(0,\boldsymbol{\theta})$ is small. One way to address this issue is to include covariates in $\omega$, but this may introduce unnecessary additional parameters. This distribution is a mixture of a degenerate distribution at zero and an untruncated count distribution, unlike \eqref{eq:hurdle}. Both models have the same shape, log-likelihood, and fitted probabilities when no regressors are included.
	
	Most applications use the untruncated Poisson distribution as the base for \eqref{eq:zip}, though other untruncated distributions, such as NB and CMP, are also considered. Each serves as the stationary distribution of a birth-death process. The Poisson and NB distributions also represent pure-birth processes (\citealt[p.~13-14]{Boswell70}, \citealt[p.~433]{Faddy97}, \citealt[p.~587]{Janardan05}). Since $\omega$ ranges from 0 to 1, the mixture concept applies to \eqref{eq:zip}. If $\omega$ is not restricted to values greater than or equal to 0, the non-mixture model \eqref{eq:zip} can address zero-deflated count data. Although the mixture mechanism does not produce a zero-deflated distribution, a stationary birth-death process can account for this phenomenon.
	
	Distinct parameterizations of $\pi$ yield different zero-inflation models but may remove the interpretation as a two-distribution mixture. \citet[p.~226]{Haslett22} replace $\pi$ with $e^{\psi}b(0,\boldsymbol{\theta})/(1+(e^{\psi}-1)b(0,\boldsymbol{\theta}))$ in \eqref{eq:hurdle}, yielding a new zero-inflation distribution
	\begin{equation}
		p(n,\boldsymbol{\theta},\psi)=\left\{\begin{array}{cll}
			\displaystyle \frac{e^{\psi}b(0,\boldsymbol{\theta})}{1+(e^{\psi}-1)b(0,\boldsymbol{\theta})} & , & n=0 \\
			\displaystyle \frac{b(n,\boldsymbol{\theta})}{1+(e^{\psi}-1)b(0,\boldsymbol{\theta})}  & , & n>0, \\
		\end{array}\right.
		\label{eq:haslett}
	\end{equation}
	 where $\psi$ is a real number. In contrast to \eqref{eq:zip}, the additional parameter $\psi$ does not dominate the zero probability when $b(0,\boldsymbol{\theta})$ is small. We can transform \eqref{eq:haslett} to demonstrate that it does not represent a mixture model. \citet[p.~226]{Haslett22} interpret \eqref{eq:haslett} as a multiplicative modification of the odds ratio for a zero count, that is, $p(0,\boldsymbol{\theta},\psi)/(1-p(0,\boldsymbol{\theta},\psi))=e^\psi b(0,\boldsymbol{\theta})/(1-b(0,\boldsymbol{\theta}))$. This approach merely rearranges \eqref{eq:haslett} to describe the model, leaving the underlying mechanism that gives rise to it unclear. Nevertheless, a birth-death process at equilibrium generates this distribution, as detailed in Section \ref{sec:IDS_model}. Since the shape distributions of the three models discussed above are identical, as indicated by their reparameterizations, their maximum likelihoods must also be equal. Consequently, it is not possible to determine, using maximum likelihood, from which of the three models the sample originated.

	\begin{figure}[!t]
		\centering
		\begin{tikzpicture}
			\draw[scale=1, thick] (-7.4, 0) -- (-2.8, 0) node[below=0.1cm] {$w$};
			\draw[ultra thick] (-5.1, 0) -- (-3.6, 0);
			\draw[thick] (-5.1, 0.1) -- (-5.1, -0.1) node[below] {$0$};
			\draw[thick] (-3.6, 0.1) -- (-3.6, -0.1) node[below] {$1$};
			\filldraw[black,fill=white] (-5.1,0) circle (3pt);
			\filldraw[black,fill=white] (-3.6,0) circle (3pt);
			
			\draw[scale=1, thick] (-2.3, 0) -- (2.3, 0) node[below=0.1cm] {$\alpha$};
			\draw[ultra thick] (-1, 0) -- (0.5, 0);
			\draw[thick] (-1, 0.1) -- (-1, -0.1) node[below] {$0$};
			\draw[thick] (0.5, 0.1) -- (0.5, -0.1) node[below] {$1$};
			\filldraw[black,fill=white] (-1,0) circle (3pt);
			\filldraw[black,fill=white] (0.5,0) circle (3pt);
			
			\draw[scale=1, ultra thick] (2.8, 0) -- (7.4, 0) node[below=0.05cm] {$\psi$};
			\draw[thick] (5.1, 0.1) -- (5.1, -0.1) node[below] {$0$};
			
			\draw[thick,-{Latex[length=3mm]}] (-5,0.5) arc (120:60:9.8);
			\node[fill=white] at (0.2,2) {$\psi = \log \left ( \frac{w}{1-w} \right )$};
			\draw[thick,-{Latex[length=3mm]}] (-4.9,-1) arc (230:310:3.5);
			\node[fill=white] at (-2.8,-2) {$\alpha=w$};
			\draw[thick,-{Latex[length=3mm]}] (0.4,-1) arc (230:310:3.5);
			\node[fill=white] at (2.8,-2) {$\psi=\log \left ( \frac{\alpha}{1-\alpha} \right )$};
		\end{tikzpicture}		
		\begin{tikzpicture}
			\draw[scale=1, thick] (-7.4, 0) -- (-2.8, 0) node[below=0.1cm] {$w$};
			\draw[ultra thick] (-6.6, 0) -- (-3.6, 0);
			\draw[thick] (-6.6, 0.1) -- (-6.6, -0.1) node[below]
			{$\frac{b(0,\boldsymbol{\theta})}{b(0,\boldsymbol{\theta})-1}$};
			\draw[thick] (-5.1, 0.1) -- (-5.1, -0.1) node[below] {$0$};
			\draw[thick] (-3.6, 0.1) -- (-3.6, -0.1) node[below] {$1$};
			\filldraw[black,fill=white] (-6.6,0) circle (3pt);
			\filldraw[black,fill=white] (-3.6,0) circle (3pt);
			
			\draw[scale=1, thick] (-2.3, 0) -- (2.3, 0) node[below=0.1cm] {$\alpha$};
			\draw[scale=1, ultra thick] (-1, 0) -- (2.3, 0);
			\draw[thick] (-1, 0.1) -- (-1, -0.1) node[below] {$0$};
			\draw[thick] (0.5, 0.1) -- (0.5, -0.1) node[below] {$1$};
			\filldraw[black,fill=white] (-1,0) circle (3pt);
			
			\draw[scale=1, ultra thick] (2.8, 0) -- (7.4, 0) node[below=0.05cm] {$\psi$};
			\draw[thick] (5.1, 0.1) -- (5.1, -0.1) node[below] {$0$};
			
			\draw[thick,-{Latex[length=3mm]}] (-5,0.5) arc (120:60:9.8);
			\node[fill=white] at (0.2,2) {$\psi = \log \left ( \frac{\omega+(1-\omega)b(0,\boldsymbol{\theta})}{(1-\omega)b(0,\boldsymbol{\theta})} \right )$};
			\draw[thick,-{Latex[length=3mm]}] (-4.9,-1) arc (230:310:3.5);
			\node[fill=white] at (-2.8,-2) {$\alpha =\frac{\omega+(1-\omega)b(0,\boldsymbol{\theta})}{(1-\omega)b(0,\boldsymbol{\theta})}$};
			\draw[thick,-{Latex[length=3mm]}] (0.4,-1) arc (230:310:3.5);
			\node[fill=white] at (2.8,-2) {$\psi=\log (\alpha)$};
		\end{tikzpicture}
		\caption{Mapping by the common (top) and new (bottom) link functions of \eqref{eq:zip}}
		\label{fig:w_map}
	\end{figure}
	
	A commonly used link function for \eqref{eq:zip} that connects $\omega$ to $\psi$ is the logit function, defined as $\psi=\log (\omega/(1-\omega))$, as shown in Figure \ref{fig:w_map} (top) (\citealt[p.~181]{Ridout98}, \citealt[p.~189]{Winkelmann08}, \citealt[p.~141]{Cameron13}, \citealt[p.~3]{Feng21}). This is just a parameterization of \eqref{eq:zip} with $\omega=e^\psi/(1+e^\psi)$, so
	\begin{equation}
		p(n,\boldsymbol{\theta},\psi) =\left\{\begin{array}{cll}
			\displaystyle \frac{e^\psi+b(0,\boldsymbol{\theta})}{1+e^\psi} & , & n=0 \\
			\displaystyle \frac{b(n,\boldsymbol{\theta})}{1+e^\psi} & , & n>0. \\
		\end{array}\right.
		\label{eq:zip_r}
	\end{equation} 
	The model \eqref{eq:zip} with $\omega=0$, representing the base distribution, is not defined because zero is not a possible argument for the logit function and cannot be included in \eqref{eq:zip_r}. The logit link function maps only the inflation parameter space $\omega \in
	\left(0, 1\right)$ onto $\psi \in (-\infty,\infty )$. If $\omega$ in \eqref{eq:zip} takes negative values, it results in a zero-deflation model; in this case, the distribution cannot be represented as a mixture distribution. We may map $\omega \in (0,1)$ by $\alpha=\omega$ onto $\alpha \in (0,1)$ and then onto $\psi \in (-\infty,\infty )$ by $\psi=\log (\alpha/(1-\alpha))$(see Figure \ref{fig:w_map}, top). If we set $\alpha =(\omega+(1-\omega)b(0,\boldsymbol{\theta}))/(1-\omega)b(0,\boldsymbol{\theta})$, the function maps $\omega \in (b(0,\boldsymbol{\theta})/(b(0,\boldsymbol{\theta})-1),0)$ and $\omega \in [0,1)$ onto $\alpha\in (0,1)$ and $\alpha\in [1,\infty)$, respectively (see Figure \ref{fig:w_map}, bottom). Therefore, the link function that maps both positive and negative parameter spaces of $\omega$ onto $\psi \in (-\infty,\infty )$ is
	\begin{equation}
		\psi = \log \left (\frac{\omega+(1-\omega)b(0,\boldsymbol{\theta})}{(1-\omega)b(0,\boldsymbol{\theta})}  \right ).
		\label{eq:ww_link}
	\end{equation}
	This is an immediate application of \Cref{equi_shape} in Section \ref{sec:IDS_model}. Reparameterizing $\omega$ as $(e^\psi-1)b(0,\boldsymbol{\theta})/(1+(e^\psi-1)b(0,\boldsymbol{\theta}))$ transforms \eqref{eq:zip} into \eqref{eq:haslett}.  In contrast to \eqref{eq:zip_r}, the zero-inflated-deflated model \eqref{eq:haslett} of \citet{Haslett22} includes the base distribution of \eqref{eq:zip}. Figure \ref{fig:w_map} (bottom) illustrates that the deflation $\omega \in (b(0,\boldsymbol{\theta})/(b(0,\boldsymbol{\theta})-1),0)$ and  inflation $\omega \in (0,1)$ intervals are mapped to the left and right real lines, respectively.
	
	Extensions of the mixture model \eqref{eq:zip} are relatively straightforward. An obvious approach is to include more than a single inflation. For example, the multiple-inflation mixture model (\citealt{Su13}, \citealt{Böh25}) specifies
	\begin{equation}
		p(n,\boldsymbol{\theta},\boldsymbol{\omega})=\left\{\begin{array}{cll}
			\displaystyle \omega_n+(1-\sum_{i=0}^{m}\omega_{n_i})b(n,\boldsymbol{\theta})& , &n \in \mathcal{F}  \\
			\displaystyle (1-\sum_{i=0}^{m}\omega_{n_i})b(n,\boldsymbol{\theta})& , &\text{otherwise},\\ 
		\end{array}\right.
		\label{eq:mix_zk}
	\end{equation}
	where $\boldsymbol{\omega}=[\omega_{n_0},\omega_{n_1},...,\omega_{n_m}]$, and $\mathcal{F}= \left\{n_0,n_1,...,n_m\right\} \subset \left\{ 0,1,2,...\right\}$. In \eqref{eq:mix_zk}, we scale the base distribution $b(n,\boldsymbol{\theta})$ by the mixing proportion $(1-\sum_{i=0}^{m}\omega_{n_i})$, and we must add the mixing proportions $\omega_{n_0},\omega_{n_1},...,$and $\omega_{n_m}$ of $n_0,n_1,...,$ and $n_m$ counts to the scaled base distribution to ensure that the probabilities sum to one. The most common cases are $\mathcal{F}= \left\{ 0\right\}$, $\mathcal{F}= \left\{ 0,1\right\}$, $\mathcal{F}= \left\{ 0,2\right\}$, and $\mathcal{F}= \left\{ 0,q\right\}$. Each one leads to a different model: zero-inflation \citep{Lambert92}, zero-and-one-inflation \citep{Zhang16}, zero-and-two-inflation \citep{Melkersson00}, and zero-and-$q$-inflation (\citealt{Lin13}, \citealt{Arora21}), in that order. The widespread popularity of the mixture mechanism is due, in part, to its ability to easily generate distributions for inflation count data, which, in turn, has led to an abundance of studies on these models over many years.
	
	\begin{proposition}
		\label{domination}
		(Domination) Suppose the base distribution $b(n,\boldsymbol{\theta})$ of \eqref{eq:mix_zk} has $\lambda$ as a main parameter and 
		\begin{equation*}
			\lim_{\lambda  \to \infty} b(n,\boldsymbol{\theta})=0 \:\:\: \text{for all}\:\: n.
		\end{equation*}
		Then
		\begin{equation*}
			\lim_{\lambda  \to \infty}p(n,\boldsymbol{\theta},\boldsymbol{\omega})=\left\{\begin{array}{cll}
				\displaystyle \omega_n& , &n\in \left\{n_0,n_1,...,n_m  \right\}  \\
				\displaystyle 0& , &\text{otherwise}.\\ 
			\end{array}\right.
		\end{equation*}
	\end{proposition}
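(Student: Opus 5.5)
The plan is to pass to the limit directly in the two branches of \eqref{eq:mix_zk}, treating the mixing proportions $\omega_{n_0},\dots,\omega_{n_m}$ as fixed constants that do not depend on $\lambda$. The only other parameters in the limit are those in $\boldsymbol{\theta}$ besides $\lambda$, which I also hold fixed. The key observation is that the scaling factor
$$c(\boldsymbol{\omega})=1-\sum_{i=0}^{m}\omega_{n_i}$$
is a finite constant independent of $\lambda$. Hence for every fixed $n$ the product $c(\boldsymbol{\omega})\,b(n,\boldsymbol{\theta})$ tends to $c(\boldsymbol{\omega})\cdot 0=0$ as $\lambda\to\infty$, by the hypothesis that $b(n,\boldsymbol{\theta})\to 0$ for all $n$.

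I would then split into the two cases of \eqref{eq:mix_zk}.

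\emph{Case $n\in\mathcal{F}$.} Here $p(n,\boldsymbol{\theta},\boldsymbol{\omega})=\omega_n+c(\boldsymbol{\omega})\,b(n,\boldsymbol{\theta})$. The first term is constant in $\lambda$ and the second tends to zero, so the limit is $\omega_n$.

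\emph{Case $n\notin\mathcal{F}$.} Here $p(n,\boldsymbol{\theta},\boldsymbol{\omega})=c(\boldsymbol{\omega})\,b(n,\boldsymbol{\theta})$, which tends to $0$.

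Combining the two cases gives exactly the stated piecewise limit. I would add a remark that this limit is a pointwise limit in $n$ and is consistent with total mass. The limiting masses sum to $\sum_i\omega_{n_i}$, so mass $1-\sum_i\omega_{n_i}$ escapes to infinity; this is in line with $b(\cdot,\boldsymbol{\theta})$ losing all its mass on every finite set. No uniformity in $n$ is claimed or needed.

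There is no genuine obstacle. The only point requiring care is to state explicitly that $\boldsymbol{\omega}$, and hence $c(\boldsymbol{\omega})$, does not vary with $\lambda$. If the $\omega$'s were allowed to depend on covariates that are tied to $\lambda$, the argument would require them to stay bounded, which holds anyway when $0<\omega_{n_i}<1$.
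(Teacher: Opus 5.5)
Your argument is correct and is exactly the termwise limit the paper has in mind when it omits the proof as obvious. With $\boldsymbol{\omega}$ held fixed, $(1-\sum_{i=0}^{m}\omega_{n_i})\,b(n,\boldsymbol{\theta})\to 0$ for every $n$, which leaves $\omega_n$ on $\mathcal{F}$ and $0$ elsewhere.
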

	\noindent We omit the proof as it is obvious. An immediate consequence of the proposition is that the model \eqref{eq:mix_zk} encounters a dominating issue, which is exactly the same as \eqref{eq:zip}. Varying all $\omega_n$ with covariates can lead to another issue, that is, non-parsimonious.
	
	This extension model has become fairly popular in the research literature; however, it may not correctly describe inflation count data in general. \citet{Lin13} analyze health survey count data to demonstrate the effectiveness of their zero-and-$q$-inflation Poisson model. We can view the model as a mixture of three count distributions: the degenerate distributions at zero and $q$, and the Poisson distribution. The health survey count data represent the number of Pap smear tests performed on women over a six-year period, sourced from the National Health Interview Survey (NHIS). The survey has two questions: `Have you EVER HAD a Pap test?' and `How many Pap tests have you had in the LAST 6 YEARS?' We can use the first question to divide the women into two groups: the first group has only a zero count, while the second group has both zero and non-zero counts. These health survey data have only two groups, and the second group contains an excess of sixes. Consequently, the zero-and-six-inflated Poisson model is inappropriate for modeling these count data. We suggest reverting to a zero-inflated model with a single-unusual-event (SUE) base distribution (\citealt{Skul22}, \citealt{Hun25}). The SUE distribution should have an unusual event at seven because the women in the second group may not want to have the seventh Pap test (i.e., an appearance of many sixes). In-depth experiments are required to prove this conjecture. The mechanism generating this inflated (SUE) distribution is a pure-birth process with a non-monotonic birth rate sequence, which is not the focus of this paper.
	
	A birth-death process at equilibrium is an effective mechanism for generating count data distributions. With the birth rate $\gamma_n=\gamma, n=0,1,...,$ and the death rate $\mu_n = n \mu, n=1,2,...,$, this process yields the widely used Poisson distribution for analyzing count data. In this case, the birth and death rates lead to equidispersion, where the variance equals the mean. If count data exhibit overdispersion (variance > mean) or underdispersion (variance < mean), using stationary distributions with modified Poisson birth and death rates, such as the geometric, PL, NB, HP, CMP, and weighted Poisson (WP) (\citealt[p.~569]{Castillo98}), is logical (see Tables \ref{table:table_ratio} and \ref{table:table_modify}). The birth-death rate sequences for the NB and CMP are $\left( \gamma_n=\left( n/r+1 \right)\gamma, \:\mu_n=n\mu \right)$ (\citealt[p.~14]{Boswell70}) and $\left( \gamma_n=\gamma, \:\mu_n=n^\nu\mu \right)$ (\citealt[p.~134]{Conway62}), respectively, but they are not uniquely determined. The Poisson distribution is a pure-birth process with $\gamma_n=\gamma$. This birth rate sequence results in equidispersion. Therefore, using pure-birth distributions with modified Poisson birth rates, such as geometric, NB, and SUE, is also logical. The birth-rate sequences are $\gamma_n=-(n/r+1) \log (1-\gamma/r)^r$ for the NB (\citealt[p.~433]{Faddy97}) and $\gamma_n= \alpha \gamma$ if $n=q$, and $\gamma$ otherwise for the SUE (\citealt[p.~44]{Skul22}).
	
	Equi-, over-, and underdispersed distributions can also display inflation or deflation. Figure \ref{fig:his_equi} in Section \ref{subsec:equidispersion} shows five equidispersed distributions: the first two exhibit zero-to-two inflation, and the last two show zero-to-two deflation compared to the Poisson distributions with the same ratio parameters. The parameter $\lambda$ is called the ratio parameter for the stationary birth-death process ($\lambda=\gamma/\mu$) and the rate parameter for the Poisson pure-birth process ($\lambda=\gamma$). Figure \ref{fig:wfig8} illustrates how these distributions represent zero-to-two inflation and deflation.
	
	\afterpage{
		\begin{table}[H]
			\setlength{\tabcolsep}{0.2em}	
			\small 
			\centering
			\caption{Birth-death-rate ratio sequences correspond to various examples of type 1 and type 2 stationary distributions with $\mathcal{F}= \left\{ 0,2\right\}$. The geometric, Poisson, PL, NB, HP, and CMP distributions are special cases within these families, so we can recognize them as stationary birth-death processes. Note that $\lambda>0$ except for the geometric and PL distribution, where $0<\lambda<1$.}    
			\label{table:table_ratio}	
			\makebox[\textwidth]{\begin{tabular}{ >{\raggedright}b{3cm} 	>{\raggedright\arraybackslash}b{12cm}}	
					
					\toprule
					Distribution & Ratio sequence; $\lambda=\gamma/\mu$\\[0.5ex]				
					\midrule
					Geometric & $\displaystyle \quad\,\lambda,\, \quad\lambda,\, \quad\;\lambda,\, \lambda,\,\lambda,\, ... $ \\[1ex]	
					
					Type 1 & $\displaystyle \frac{1}{\alpha_0} \lambda,\,\alpha_2 \lambda,\, \frac{1}{\alpha_2} \lambda,\, \lambda,\,\lambda,\, ... $ \\[1.5ex]			
					
					Type 2 & $\displaystyle \frac{1}{\varphi_0} \lambda,\,\quad\lambda,\, \frac{1}{\varphi_2}\lambda,\, \lambda,\,\lambda,\, ... $ \\[2.5ex]	
					
					Poisson & $\displaystyle \quad\,\lambda,\, \quad\frac{\lambda}{2},\, \quad\;\frac{\lambda}{3},\, \frac{\lambda}{4},\,\frac{\lambda}{5},\, ... $ \\[1.5ex]	
					
					Type 1 & $\displaystyle \frac{1}{\alpha_0} \lambda,\,\alpha_2 \frac{\lambda}{2},\, \frac{1}{\alpha_2} \frac{\lambda}{3},\, \frac{\lambda}{4},\,\frac{\lambda}{5},\, ... $ \\[1.5ex]			
					
					Type 2 & $\displaystyle \frac{1}{\varphi_0} \lambda,\,\quad\frac{\lambda}{2},\, \frac{1}{\varphi_2}\frac{\lambda}{3},\, \frac{\lambda}{4},\,\frac{\lambda}{5},\, ... $ \\[2.5ex]	
					
					PL & $\displaystyle \quad\,\frac{(1+2\lambda)\lambda}{1+\lambda},\, \quad\frac{(1+3\lambda)\lambda}{1+2\lambda},\, \quad\;\frac{(1+4\lambda)\lambda}{1+3\lambda},\, \frac{(1+5\lambda)\lambda}{1+4\lambda},\,\frac{(1+6\lambda)\lambda}{1+5\lambda},\, ... $ \\[1.5ex]	
					
					Type 1 & $\displaystyle \frac{1}{\alpha_0} \frac{(1+2\lambda)\lambda}{1+\lambda},\,\alpha_2 \frac{(1+3\lambda)\lambda}{1+2\lambda},\, \frac{1}{\alpha_2} \frac{(1+4\lambda)\lambda}{1+3\lambda},\, \frac{(1+5\lambda)\lambda}{1+4\lambda},\,\frac{(1+6\lambda)\lambda}{1+5\lambda},\, ... $ \\[1.5ex]			
					
					Type 2 & $\displaystyle \frac{1}{\varphi_0} \frac{(1+2\lambda)\lambda}{1+\lambda},\,\quad\frac{(1+3\lambda)\lambda}{1+2\lambda},\, \frac{1}{\varphi_2}\frac{(1+4\lambda)\lambda}{1+3\lambda},\, \frac{(1+5\lambda)\lambda}{1+4\lambda},\,\frac{(1+6\lambda)\lambda}{1+5\lambda},\, ... $ \\[2.5ex]				
					
					NB & $\displaystyle \quad\,\lambda,\,\quad\, \left ( \frac{1}{r}+1 \right ) \frac{\lambda}{2},\,\quad\;\; \left ( \frac{2}{r}+1 \right ) \frac{\lambda}{3},\, \left ( \frac{3}{r}+1 \right ) \frac{\lambda}{4},\,\left ( \frac{4}{r}+1 \right ) \frac{\lambda}{5},\, ... $ \\[1.5ex]
					
					Type 1 & $\displaystyle \frac{1}{\alpha_0} \lambda,\, \alpha_2 \left ( \frac{1}{r}+1 \right ) \frac{\lambda}{2},\, \frac{1}{\alpha_2} \left ( \frac{2}{r}+1 \right ) \frac{\lambda}{3},\, \left ( \frac{3}{r}+1 \right ) \frac{\lambda}{4},\,\left ( \frac{4}{r}+1 \right ) \frac{\lambda}{5},\, ... $	  \\[1.5ex]	
					
					Type 2 & $\displaystyle \frac{1}{\varphi_0} \lambda,\,\quad\, \left ( \frac{1}{r}+1 \right ) \frac{\lambda}{2},\, \frac{1}{\varphi_2} \left ( \frac{2}{r}+1 \right ) \frac{\lambda}{3},\, \left ( \frac{3}{r}+1 \right ) \frac{\lambda}{4},\,\left ( \frac{4}{r}+1 \right ) \frac{\lambda}{5},\, ... $ \\[2.5ex]	
					
					HP & $\displaystyle \quad\,\frac{\lambda}{\tau},\, \quad\frac{\lambda}{\tau+1},\, \quad\;\frac{\lambda}{\tau+2},\, \frac{\lambda}{\tau+3},\,\frac{\lambda}{\tau+4},\, ... $ \\[1.5ex]	
					
					Type 1 & $\displaystyle \frac{1}{\alpha_0} \frac{\lambda}{\tau},\,\alpha_2 \frac{\lambda}{\tau+1},\, \frac{1}{\alpha_2} \frac{\lambda}{\tau+2},\, \frac{\lambda}{\tau+3},\,\frac{\lambda}{\tau+4},\, ... $ \\[1.5ex]			
					
					Type 2 & $\displaystyle \frac{1}{\varphi_0} \frac{\lambda}{\tau},\,\quad\frac{\lambda}{\tau+1},\, \frac{1}{\varphi_2}\frac{\lambda}{\tau+2},\, \frac{\lambda}{\tau+3},\,\frac{\lambda}{\tau+4},\, ... $ \\[2.5ex]	
					
					CMP & $\displaystyle \quad\,\,\lambda,\,\quad\, \frac{\lambda}{2^\nu},\quad\;\frac{\lambda}{3^\nu},\, \frac{\lambda}{4^\nu},\,\frac{\lambda}{5^\nu},\, ... $ \\[1.5ex]
					
					Type 1 & $\displaystyle \,\frac{1}{\alpha_0} \lambda,\, \alpha_2 \frac{\lambda}{2^\nu},\,  \frac{1}{\alpha_2} \frac{\lambda}{3^\nu},\, \frac{\lambda}{4^\nu},\,\frac{\lambda}{5^\nu},\, ... $ \\[1.5ex]	
					
					Type 2  & $\displaystyle \,\frac{1}{\varphi_0} \lambda,\,\quad\,\frac{\lambda}{2^\nu},\,  \frac{1}{\varphi_2} \frac{\lambda}{3^\nu},\, \frac{\lambda}{4^\nu},\,\frac{\lambda}{5^\nu},\, ... $ \\[2ex]	
					\bottomrule	
			\end{tabular}}	
		\end{table}
	}

	This article presents two new families of inflation-deflation stationary distributions generated by stationary birth-death processes. The first family of inflation-deflation stationary distributions share the same shape as the mixture model \eqref{eq:mix_zk} but avoids the dominance issue associated with inflation-deflation parameters. The paper contrasts both mechanisms to highlight their distinct properties. Overall, these new models, not previously proposed or used, offer greater flexibility for modeling inflation and deflation in count data.

\section{Inflation-deflation stationary models}
\label{sec:IDS_model}

Let $X(t)$ be a discrete random variable representing the total number of particles in a system at time $t$. $\left\{ X(t): t\geq 0\right\}$ is a birth-death process with $X(0)=0$, birth rates $\gamma_n$ for $n=0,1,2,...$ and death rates $\mu_n$ for $n=1,2,...$. The three possible events that can lead the birth-death process to state $n$ at the current time $t$ (i.e., $X(t)=n$) are: no change from  state $n$, a birth from state $n-1$, or a death from state $n+1$. The pure birth process (i.e., $\mu_n=0$) has only two possible events: that is, no change from $n$, or a birth from $n-1$. The transition probabilities $p_n (t) = P \left\{ X(t)=n | X(0)=0\right\}$ from state $X(0)=0$ to $X(t)=n$ satisfy an infinite system of differential equations known as the kolmogorov forward differential equations. These are given by  
\begin{align} 
	p'_{0}(t)&= \mu_1 p_1(t)-\gamma_0 p_0(t), \nonumber \\ 
	p'_{n}(t)&=\gamma_{n-1} p_{n-1}(t) + \mu_{n+1} p_ {n+1}(t) -(\gamma_{n}+\mu_{n})p_{n}(t),\;\;\;\; n > 0,\label{eq:fde} 
\end{align}
with boundary conditions $p_{0}(0)=1\;\text{and}\; p_{n}(0)=0,\;n>0.$

\subsection{Stationary birth-death processes}
\label{subsec:SBD_process}
The system of differential equations \eqref{eq:fde} is extremely hard to solve in its generality (\citealt[p.~4]{Boswell70}, \citealt[p.~4]{Crawford18}, \citealt[p.~251]{Puig24}), but the asymptotic behavior of the process at $t\to \infty$ is an easier matter to determine. The stationary (limiting) distribution for a birth-death process is given by
\begin{equation}
	p_0 = \left( 1+\sum_{i=0}^{\infty}\lambda_0\lambda_1...\lambda_{i}\right)^{-1},
	\label{eq:p0}
\end{equation}
and
\begin{equation}
	p_n = \lambda_0\lambda_1...\lambda_{n-1}p_0 \:\:\:\:\:\:\:\:\:\:\:,n>0
	\label{eq:pn}
\end{equation}
if and only if the summation
\begin{equation}
	1+\sum_{i=0}^{\infty}\lambda_0\lambda_1...\lambda_{i}< \infty
	\label{eq:pcon}
\end{equation}
is finite, where $p_n= \lim_{t \to \infty} p_n(t)$, and $\lambda_n=\gamma_n/\mu_{n+1}$ (\citealt[p.~176]{Cox65}, \citealt[p.~269]{Grim01}). $\gamma_n$ is the rate at which a birth occurs when the process is in state $n$, and $\mu_{n+1}$ is the rate at which a death occurs when the process is in state $n+1$. Thus, if \eqref{eq:pcon} is divergent, a stationary distribution does not exist. A stationary distribution is time-invariant. Since $\lambda_n=p_{n+1}/p_n$, \eqref{eq:pcon} becomes $1+\sum_{i=0}^{\infty} p_{i+1}/p_0 <\infty$. The geometric, Poisson, PL, NB, HP, and CMP distributions are stationary solutions to system \eqref{eq:fde}. Table \ref{table:table_ratio} shows the birth-death-rate ratios $\lambda_n$ for these models, all of which satisfy condition \eqref{eq:pcon}. Observed data may display higher or lower relative frequencies of certain counts than those predicted by these models. In such cases, we can adjust their ratio sequences to better fit inflation-deflation count data.

\subsection{Type 1 model specification}
\label{subsec:type1_model}
The multiple-inflation mixture model \eqref{eq:mix_zk} with a stationary base distribution, such as geometric, Poisson, PL, NB, HP, and CMP, matches the shape of a modified version of that base distribution. This equivalence is achieved by carefully adjusting the birth-death-rate ratios of the base distribution. However, the details of this relationship may not be straightforward or intuitive. Let $\lambda^{b}_n$ denote the birth-death-rate ratios of a stationary base distribution that satisfy condition \eqref{eq:pcon}. The mixing proportions $ \omega_{n_0},\omega_{n_1},...,$ and $\omega_{n_m}$ in \eqref{eq:mix_zk} increase the probabilities only at $n_0,n_1,...,$ and $n_m$  for the scaled base distribution. For a stationary distribution with shape \eqref{eq:mix_zk}, the birth-death-rate ratios may satisfy 
\begin{equation}
	\lambda_{n}=\left\{\begin{array}{cll}
		\displaystyle \frac{\alpha_{n+1}^{\textbf{1}_{\mathcal{F}}(n+1)}}{\alpha_n^{\textbf{1}_{\mathcal{F}}(n)}}\lambda_n^b & , & n = 0,1,...,q-1 \\
		\displaystyle \frac{1}{\alpha_q^{\textbf{1}_{\mathcal{F}}(q)}}\lambda_q^b & , & n = q \\
		\displaystyle \lambda_n^b & , & n > q, \\
	\end{array}\right.
	\label{eq:ratios}
\end{equation}
where $0<\alpha_n<\infty$, $\mathcal{F}=\left\{ n_0,n_1,...,n_m \right\}\subseteq \left\{ 0,1,...,q \right\} \subset \left\{ 0,1,... \right\}$, and $\textbf{1}_{\mathcal{F}}(n)$ is the indicator function defined as $\textbf{1}_{\mathcal{F}}(n)=1$ if $n \in \mathcal{F}$ and $0$ otherwise.

\begin{proposition}
	\label{same ratio}
	If the series \eqref{eq:pcon} with $\lambda^{b}_n$ converges, then the series \eqref{eq:pcon} with the ratio sequence $\lambda_n$ \eqref{eq:ratios} converges.
\end{proposition}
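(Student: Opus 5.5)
The plan is to compare the partial products $\lambda_0\lambda_1\cdots\lambda_i$ built from \eqref{eq:ratios} with those built from $\lambda^b_n$. Write $a_n=\alpha_n^{\textbf{1}_{\mathcal{F}}(n)}$ for $n=0,1,\dots,q$. Each $a_n$ lies in $(0,\infty)$, and $a_n=1$ whenever $n\notin\mathcal{F}$. The key observation is that the modifying factors in \eqref{eq:ratios} telescope.

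First I would compute the products for $i<q$. Taking the product of the first line of \eqref{eq:ratios} over $n=0,\dots,i$ gives
\begin{equation*}
	\lambda_0\lambda_1\cdots\lambda_i=\frac{a_{i+1}}{a_0}\,\lambda^b_0\lambda^b_1\cdots\lambda^b_i,\qquad i=0,1,\dots,q-1.
\end{equation*}
Each of these finitely many terms is a positive finite number, so these terms cannot affect convergence.

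Next I would handle $i\ge q$. Multiplying the telescoped product up to $q-1$ by $\lambda_q=\lambda^b_q/a_q$ cancels $a_q$ and gives
\begin{equation*}
	\lambda_0\cdots\lambda_q=\frac{1}{a_0}\,\lambda^b_0\cdots\lambda^b_q.
\end{equation*}
Since $\lambda_n=\lambda^b_n$ for $n>q$, it follows that
\begin{equation*}
	\lambda_0\cdots\lambda_i=\frac{1}{a_0}\,\lambda^b_0\cdots\lambda^b_i\qquad\text{for all } i\ge q.
\end{equation*}

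Finally I would assemble the series. The series \eqref{eq:pcon} for $\lambda_n$ equals $1$, plus a finite sum of finite terms for $i<q$, plus $a_0^{-1}$ times the tail $\sum_{i\ge q}\lambda^b_0\cdots\lambda^b_i$ of the base series. The base series converges by hypothesis and has nonnegative terms, so this tail is finite. Because $0<a_0<\infty$, the whole series is finite. In fact the series is bounded above by $1+\max_{0\le n\le q}(a_n/a_0)\sum_i\lambda^b_0\cdots\lambda^b_i$, taking $a_{q+1}:=1$ in this bound, which gives a clean one-line bound if preferred.

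The only delicate point is the telescoping bookkeeping at the boundary index $n=q$. There the numerator pattern $a_{n+1}$ stops, and the second line of \eqref{eq:ratios} supplies exactly the missing $1/a_q$. I would check that step carefully, including the edge case $q=0$, where only the second and third lines of \eqref{eq:ratios} apply. Everything else is routine, because only finitely many ratios are altered and each by a positive finite factor.
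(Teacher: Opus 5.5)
Your proof is correct, but it takes a different route from the paper. The paper settles the proposition in one line with the ratio test. Since $\lambda_n=\lambda^b_n$ for $n>q$, it asserts $\lim_{n\to\infty}\lambda_n=\lim_{n\to\infty}\lambda^b_n<1$.

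That step silently assumes that $\lim_n\lambda^b_n$ exists and lies strictly below $1$. Convergence of the base series does not give this. It holds for the specific bases in Table \ref{table:table_ratio}, whose limits are $\lambda<1$, $0$, or $\lambda/r<1$. It fails, for instance, for $\lambda^b_n=\bigl((n+1)/(n+2)\bigr)^2$: there $\lambda^b_0\cdots\lambda^b_i=(i+2)^{-2}$ is summable, yet the ratio limit is $1$, so the ratio test is inconclusive.

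Your direct comparison needs nothing beyond the stated hypothesis. The telescoped products equal $(a_{i+1}/a_0)\,\lambda^b_0\cdots\lambda^b_i$ for $i<q$ and $a_0^{-1}\lambda^b_0\cdots\lambda^b_i$ for $i\ge q$. So it proves the proposition exactly as stated. It also yields the value of the series, which is essentially the computation the paper carries out separately in the supplementary derivation of the type 1 model to obtain $z(\boldsymbol{\theta},\boldsymbol{\alpha})$.

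Your argument also carries over verbatim to the type 2 ratios \eqref{eq:ratiosII}, where the paper reuses the same ratio-test remark. In both cases only finitely many ratios change, each by a positive finite factor. The ratio test buys only brevity, and only under the extra assumption on the limit.

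One small slip in your optional bound: the tail factor is $1/a_0=a_{q+1}/a_0$, so the maximum should run over $1\le n\le q+1$, not $0\le n\le q$. As written the bound can fail. For example, $\mathcal{F}=\{0,1\}$, $q=1$, $a_0=a_1=1/2$ gives a maximum of $1$ while the tail factor is $2$. This does not affect your main argument.
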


\Cref{same ratio} is valid because the ratio test confirms that both series have the same limit, $\lim_{n \to \infty } \lambda_n=\lim_{n \to \infty } \lambda^{b}_n<1$. As a result, the equilibrium distribution for the birth-death process with \eqref{eq:ratios} exists. By substituting \eqref{eq:ratios} into \eqref{eq:p0} and \eqref{eq:pn}, the equilibrium birth-death process produces a type 1 inflation-deflation stationary model, 
\begin{equation}
	p(n,\boldsymbol{\theta},\boldsymbol{\alpha})=\frac{f(n,\boldsymbol{\alpha})b(n,\boldsymbol{\theta})}{z(\boldsymbol{\theta},\boldsymbol{\alpha})},
	\label{eq:new_inflat}
\end{equation}
where 
\begin{equation}
	f(n,\boldsymbol{\alpha})=\prod_{i=0}^{m}\alpha_{n_i}^{\textbf{1}_{n_i}(n)},
	\label{eq:f1}
\end{equation}
\begin{equation}
	z(\boldsymbol{\theta},\boldsymbol{\alpha})=1+\sum_{i=0}^{m} \left( \alpha_{n_i}-1 \right)b(n_i,\boldsymbol{\theta}),
	\label{eq:z}
\end{equation}
$\boldsymbol{\alpha}=[\alpha_{n_0},\alpha_{n_1},...,\alpha_{n_m}]$, and $z(\boldsymbol{\theta},\boldsymbol{\alpha})$ is a normalizing constant ensuring $p(n,\boldsymbol{\theta},\boldsymbol{\alpha})$ sums to one. The function $f(n,\boldsymbol{\alpha})$ is a weight function of $m+1$ variables and reflects the inflation-deflation behavior of the type 1 model relative to its base distribution (see Figure \ref{fig:weight_functions}, left part). $\textbf{1}_{n_i}(n)$ is the indicator function, defined as $\textbf{1}_{n_i}(n)=1$ if $n=n_i$ and $0$ otherwise. If $\mathcal{F}=\left\{ 0 \right\}$ and $\alpha_0=e^{\psi}$, then \eqref{eq:new_inflat} simplifies to the distribution of Haslett et al. \eqref{eq:haslett}. The supplementary material provides the derivation of \eqref{eq:new_inflat} and the state-transition-rate ratio diagram for this type 1 process.

When $b(n,\lambda)=(1-\lambda)\lambda^n$ (geometric), where $\lambda$ is a ratio parameter, \eqref{eq:new_inflat} becomes a well-known class of power series distributions \citep[p.~75, eq.(2.1)]{Johnson05}; that is, $p(n,\lambda,\cdot)=f(n,\cdot)\lambda^n/\sum_{i=0}^{\infty} f(i,\cdot)\lambda^i$. We call $\lambda$ the ratio parameter of a stationary birth-death distribution to distinguish it from the Poisson pure-birth distribution. All distributions in Table \ref{table:table_modify}, except the PL distribution, fall into this class. Thus, these power series distributions can be viewed simply as modified geometric distributions.

 \afterpage{
	\begin{table}[H]
		\setlength{\tabcolsep}{0.2em}	
		\small 
		\centering
		\caption{Characteristics of stationary count distributions that modify the geometric and Poisson ratio sequences once. The propability mass function and birth-death-rate ratio sequence are $p(n,\lambda, \cdot)=f(n,\cdot)b(n,\lambda)/z(\lambda,\cdot)$, and $\lambda_n=g(n,\cdot)\lambda_n^b=f(n+1,\cdot)\lambda_n^b/f(n,\cdot)$, respectively.} 
		\label{table:table_modify}	
		\makebox[\textwidth]{\begin{tabular}{ >{\raggedright}b{2cm} >{\centering}p{3.2cm} >{\centering}p{3.2cm} >{\centering}p{3.2cm}	>{\centering\arraybackslash}b{3.2cm}}	
						\toprule
						$b(n,\lambda)$ & \multicolumn{2}{c}{Geometric} & \multicolumn{2}{c}{Poisson} \\
						
						$p(n,\lambda,\cdot)$ & \multicolumn{1}{c}{$f(n,\cdot)$} & \multicolumn{1}{c}{$g(n,\cdot)$} & \multicolumn{1}{c}{$f(n,\cdot)$} & \multicolumn{1}{c}{$g(n,\cdot)$}\\
						\midrule
						Geometric & $\displaystyle  1 $ & $\displaystyle 1 $ & $\displaystyle n!$ & $\displaystyle  n+1$ \\[2ex]
						
						Poisson & $\displaystyle  \frac{1}{n!} $ & $\displaystyle \frac{1}{n+1} $ & $\displaystyle  1 $ & $\displaystyle 1 $\\[2ex]
						
						PL & $\displaystyle 1+(n+1)\lambda$ & $\displaystyle  \frac{1+ (n+2)\lambda}{1+(n+1)\lambda}$ & $\displaystyle (1+(n+1)\lambda) n!$ & $\displaystyle  \frac{(1+(n+2)\lambda)}{(n+1)^{-1}+\lambda}$ \\[2ex]
						
						NB & $\displaystyle  \frac{\left ( r \right )_n}{r^nn!}$ & $\displaystyle \frac{n/r+1}{n+1} $ & $\displaystyle  \frac{\left ( r \right )_n}{r^n}$ & $\displaystyle n/r+1 $ \\[2ex]
						
						HP & $\displaystyle  \frac{1}{\left ( \tau \right )_n}$ & $\displaystyle  \frac{1}{n+\tau}$ & $\displaystyle  \frac{n!}{\left ( \tau \right )_n}$ & $\displaystyle  \frac{n+1}{n+\tau}$  \\[2ex]
						
						CMP & $\displaystyle \frac{1}{(n!)^{\nu}}$ & $\displaystyle \frac{1}{(n+1)^{\nu}}$ & $\displaystyle \frac{1}{(n!)^{\nu-1}}$ & $\displaystyle \frac{1}{(n+1)^{\nu-1}}$ \\[2ex]	
						
						WP & $\displaystyle \frac{(n+\tau)^r}{n!}$ & $\displaystyle \frac{(n+1+\tau)^r}{(n+\tau)^r(n+1)}$ & $\displaystyle (n+\tau)^r$ & $\displaystyle \frac{(n+1+\tau)^r}{(n+\tau)^r}$ \\[2ex]	
						
						Puig et al. & $\displaystyle \frac{e^{-n^2\tau}}{n!}$ & $\displaystyle  \frac{e^{-(2n+1)\tau }}{n+1}$ & $\displaystyle e^{-n^2\tau}$ & $\displaystyle  e^{- (2n+1)\tau}$ \\[2ex]	
						
						Böhning & $\displaystyle \frac{e^{-n^2\tau}}{(n!)^{\nu}}$ & $\displaystyle  \frac{e^{-(2n+1)\tau }}{(n+1)^{\nu}}$ & $\displaystyle \frac{e^{-n^2\tau}}{(n!)^{\nu-1}}$ & $\displaystyle  \frac{e^{-(2n+1)\tau }}{(n+1)^{\nu-1}}$ \\[2ex]	
						\bottomrule		
						\multicolumn{5}{l}{\rule{0pt}{3ex}Note that $\left( r\right)_n=r\left( r+1\right)...\left ( r+n-1 \right )$ is the rising factorial, $\left( 1\right)_n=n!$, and $\left( r\right)_0=1$.}\\
				\end{tabular}}	
			\end{table}
		}

The type 1 stationary model \eqref{eq:new_inflat} does not retain the same structure as the base for successive terms. The probability ratio of two consecutive values is
\begin{equation}
	\lambda_{n}=g(n,\boldsymbol{\alpha})\lambda_{n}^b,
	\label{eq:f_type_I}
\end{equation}
where
\begin{equation}
	g(n,\boldsymbol{\alpha})=\prod_{i=0}^{m} \alpha_{n_i}^{\textbf{1}_{n_i}(n+1)-\textbf{1}_{n_i}(n)},
	\label{eq:g_type_I}
\end{equation}
 $\lambda_n=p(n+1,\boldsymbol{\theta},\boldsymbol{\alpha})/p(n,\boldsymbol{\theta},\boldsymbol{\alpha})$, and  $\lambda_{n}^b=b(n+1,\boldsymbol{\theta})/b(n,\boldsymbol{\theta})$. $g(n,\boldsymbol{\alpha})$ is a weight function of $m+1$ variables, and \eqref{eq:f_type_I} is just another way to express \eqref{eq:ratios}. The relationship simplifies to $\lambda_{n}^b$ when $\alpha_{n_0}=...=\alpha_{n_m}=1$. If $g(n,\alpha)=\alpha^{-\textbf{1}_{0}(n)}$, and $\lambda_{n}^b=\lambda/(n+1)$ (Poisson), then \eqref{eq:f_type_I} reduces to the ratio sequence of \citet[p.~256]{Puig24}. 
 
Because $g(n,\cdot)=f(n+1,\cdot)/f(n,\cdot)$, many common count distributions are stationary birth-death processes. For example, if $f\left ( n,\tau \right )=  n!/ \left ( \tau  \right )_{n}$ and $\lambda_n^b=\lambda /\left ( n+1 \right )$ (Poisson), where $\left( \tau \right)_n=\tau \left(  \tau+1\right)... \left( \tau+n-1 \right)$ is the rising factorial and $\left ( \tau  \right )_{0}=1$, then $\lambda_0=\lambda_0^b /\tau , \lambda_1=2 \lambda_1^b /\left ( \tau+1 \right ), \lambda_2=3 \lambda_2^b /\left ( \tau+2 \right ),...$ forms a birth-death-rate ratio sequence for the HP distribution \citep[p.~134]{Bardwell64}, a single-modified Poisson ratio sequence. If $f(n,\tau,\alpha)=f(n,\tau)f(n,\alpha)$ with $f(n,\alpha)=\alpha^{\textbf{1}_{0}(n)}$, then $\lambda_0=\lambda_0^b /\left ( \tau \alpha \right ) , \lambda_1=2 \lambda_1^b /\left ( \tau+1 \right ), \lambda_2=3 \lambda_2^b /\left ( \tau+2 \right ),...$ is a birth-death-rate ratio sequence for the type 1 HP distribution with $\mathcal{F} = \left\{ 0 \right\}$, representing a double-modified Poisson ratio sequence. The Poisson-Lindley (PL) distribution \citep[p.~145]{Sankaran70} is also a stationary birth-death process (see Section S3 of the supplementary material). Table \ref{table:table_modify} provides examples of single-modified geometric and Poisson ratio sequences.

Suppose $g(n,\tau)=e^{-(2n+1)\tau}$, and $\lambda_{n}^b=\lambda/(n+1)^\nu$ (CMP). Then $\lambda_n=\lambda e^{-(2n+1)\tau }/(n+1)^\nu$. When $\nu=1$, $\lambda_n=\lambda e^{-(2n+1)\tau }/(n+1)$, representing the birth-death-rate ratio of \citet[p.~253, eq.(7)]{Puig24}, a single-modified Poisson ratio sequence. Taking the logarithm of both sides of $\lambda^b_n=(n+1)\lambda_n= \lambda (n+1)^{1-\nu} e^{-(2n+1)\tau}$ yields $\log \lambda_n^b= \log \lambda-\tau +(1- \nu) \log (n+1)-2\tau n$, which defines the birth-death-rate ratio regression model of \citet[p.~212]{Böh16}. The double-modified Poisson distribution (i.e., $f(n,\nu,\tau)=f(n,\nu)f(n,\tau)$, where $f(n,\nu)=(n!)^{1-\nu}$, and $f(n,\tau)=e^{-n^2\tau}$; see the fourth column in Table \ref{table:table_modify}.) is a useful candidate for the type 1 model, as its ratio sequence differs considerably from CMP when $\tau$ is significantly different from $0$. The stationary distributions of Puig et al. and Böhning can also be interpreted as weighted Poisson and CMP distributions, respectively.

\citet[p. 207, eq.(2)]{Böh16} thoroughly describes two special cases of \eqref{eq:f_type_I}, where $g(n)=1$ and $g(n)=1/(n+1)$ (see the third column in Table \ref{table:table_modify}), as power series distributions with $\lambda^b_n=r_x=\lambda$ (i.e, the geometric ratio sequence and a fundamental property of the power series distribution), and $f(n)=a_x$. When $g(n,\lambda,\tau)=\tau n/\lambda+1$ and $\lambda_n^b=\lambda/(n+1)$ (Poisson), \eqref{eq:f_type_I} defines the Katz family of count distributions \citep[p.~82, eq.(2.35)]{Johnson05}. This family modifies the Poisson birth-death-rate ratio sequence and, as a result, excludes the geometric distribution. Overall, the family of stationary birth-death distributions is broad and includes many well-known count distributions.

\afterpage{
	\begin{table}[H]
		\setlength{\tabcolsep}{0.2em}	
		\small 
		\centering
		\caption{Distributions, $p(n,\cdot)=f(n,\cdot)b(n, \cdot)/z(\cdot)$, for inflation-deflation count data generated by stationary birth-death processes using $\lambda_n$ in Table \ref{table:table_ratio}. Their base distributions are the geometric, Poisson, PL, NB, HP, and CMP, which are also stationary distributions. When $\alpha_2=\varphi_2=1$, these stationary distributions are zero-inflated ($1<\alpha_0,\varphi_0<\infty$) and zero-deflated ($0<\alpha_0,\varphi_0<1$).} 
		\label{table:table_prob}	
		\makebox[\textwidth]{\begin{tabular}{ >{\raggedright}b{1.5cm} >{\centering}p{2cm} >{\centering}p{3.5cm}	>{\raggedright\arraybackslash}b{8cm}}	
						\toprule 
						$p(n,\cdot)$ & $f(n,\cdot)$ &   $b(n, \cdot)$ & $z(\cdot)$\\				
						\midrule
						Geometric &  &  & \\[1ex]
						
						Type 1 & $\displaystyle \medmuskip=0mu \thickmuskip=0mu \alpha^{\textbf{1}_0(n)}_0\alpha^{\textbf{1}_2(n)}_2$ & $\displaystyle \medmuskip=0mu \thickmuskip=0mu (1-\lambda)\lambda^n$ & $\displaystyle \medmuskip=0mu \thickmuskip=0mu 1+ (\alpha_0-1)b(0,\lambda)+(\alpha_2-1)b(2,\lambda)$\\
						
						Type 2 & $\displaystyle \medmuskip=0mu \thickmuskip=0mu \varphi^{u_0(n)}_0\varphi^{u_2(n)}_2$ & $\displaystyle \medmuskip=0mu \thickmuskip=0mu (1-\lambda)\lambda^n$ & $\displaystyle \medmuskip=0mu \thickmuskip=0mu 1+(\varphi_0 \varphi_2-1)b(0,\lambda)+ ( \varphi_2-1)\sum_{k=1}^{2}b(k,\lambda)$ \\[2.5ex]
						
						Poisson &  &  & \\
						
						Type 1 & $\displaystyle \medmuskip=0mu \thickmuskip=0mu \alpha^{\textbf{1}_0(n)}_0\alpha^{\textbf{1}_2(n)}_2$ & $\displaystyle \medmuskip=0mu \thickmuskip=0mu \frac{\lambda^n e^{-\lambda}}{n!}$ & $\displaystyle \medmuskip=0mu \thickmuskip=0mu 1+ (\alpha_0-1)b(0,\lambda)+(\alpha_2-1)b(2,\lambda)$\\
						
						Type 2 & $\displaystyle \medmuskip=0mu \thickmuskip=0mu \varphi^{u_0(n)}_0\varphi^{u_2(n)}_2$ & $\displaystyle \medmuskip=0mu \thickmuskip=0mu \frac{\lambda^n e^{-\lambda}}{n!}$ & $\displaystyle \medmuskip=0mu \thickmuskip=0mu 1+(\varphi_0 \varphi_2-1)b(0,\lambda)+ ( \varphi_2-1)\sum_{k=1}^{2}b(k,\lambda)$ \\[2.5ex]
						
						PL &  &  & \\
						
						Type 1 & $\displaystyle \medmuskip=0mu \thickmuskip=0mu \alpha^{\textbf{1}_0(n)}_0\alpha^{\textbf{1}_2(n)}_2$ & $\displaystyle \medmuskip=0mu \thickmuskip=0mu \frac{\left( 1+\lambda+n\lambda \right)\lambda^n}{\left( 1-\lambda \right)^{-2}}$ & $\displaystyle \medmuskip=0mu \thickmuskip=0mu 1+ (\alpha_0-1)b(0,\lambda)+(\alpha_2-1)b(2,\lambda) $\\
						
						Type 2 & $\displaystyle \medmuskip=0mu \thickmuskip=0mu \varphi^{u_0(n)}_0\varphi^{u_2(n)}_2$ & $\displaystyle \medmuskip=0mu \thickmuskip=0mu \frac{\left( 1+\lambda+n\lambda \right)\lambda^n}{\left( 1-\lambda \right)^{-2}}$ & $\displaystyle \medmuskip=0mu \thickmuskip=0mu 1+(\varphi_0 \varphi_2-1)b(0,\lambda)+ ( \varphi_2-1)\sum_{k=1}^{2}b(k,\lambda)$ \\[2.5ex]
						
						NB &  &  & \\

						Type 1 & $\displaystyle \medmuskip=0mu \thickmuskip=0mu \alpha^{\textbf{1}_0(n)}_0\alpha^{\textbf{1}_2(n)}_2$ & $\displaystyle \medmuskip=0mu \thickmuskip=0mu \frac{ \left( r \right)_n \left( \lambda/r \right)^n}{n! \left( 1-\lambda/r \right)^{-r}}$ & $\displaystyle \medmuskip=0mu \thickmuskip=0mu 1+ (\alpha_0-1)b(0,\lambda,r)+(\alpha_2-1)b(2,\lambda,r) $\\
						
						Type 2 & $\displaystyle \medmuskip=0mu \thickmuskip=0mu \varphi^{u_0(n)}_0\varphi^{u_2(n)}_2$ & $\displaystyle \medmuskip=0mu \thickmuskip=0mu \frac{ \left( r \right)_n \left( \lambda/r \right)^n}{n! \left( 1-\lambda/r \right)^{-r}}$ & $\displaystyle \medmuskip=0mu \thickmuskip=0mu 1+(\varphi_0 \varphi_2-1)b(0,\lambda,r)+ ( \varphi_2-1)\sum_{k=1}^{2}b(k,\lambda,r)$ \\[2.5ex]					
						
						HP &  &  & \\
						
						Type 1 & $\displaystyle \medmuskip=0mu \thickmuskip=0mu \alpha^{\textbf{1}_0(n)}_0\alpha^{\textbf{1}_2(n)}_2$ & $\displaystyle \medmuskip=0mu \thickmuskip=0mu \frac{\lambda^n}{(\tau)_n \sum_{i=0}^{\infty }\frac{\lambda^i}{(\tau)_i}}$ & $\displaystyle \medmuskip=0mu \thickmuskip=0mu 1+ (\alpha_0-1)b(0,\lambda,\tau)+(\alpha_2-1)b(2,\lambda,\tau) $\\
						
						Type 2 & $\displaystyle \medmuskip=0mu \thickmuskip=0mu \varphi^{u_0(n)}_0\varphi^{u_2(n)}_2$ & $\displaystyle \medmuskip=0mu \thickmuskip=0mu \frac{\lambda^n}{(\tau)_n \sum_{i=0}^{\infty }\frac{\lambda^i}{(\tau)_i}}$ & $\displaystyle \medmuskip=0mu \thickmuskip=0mu 1+(\varphi_0 \varphi_2-1)b(0,\lambda,\tau)+ ( \varphi_2-1)\sum_{k=1}^{2}b(k,\lambda,\tau)$ \\[2.5ex]		
						
						CMP &  &  & \\
						
						Type 1 & $\displaystyle \medmuskip=0mu \thickmuskip=0mu \alpha^{\textbf{1}_0(n)}_0\alpha^{\textbf{1}_2(n)}_2$ & $\displaystyle \medmuskip=0mu \thickmuskip=0mu \frac{\lambda^n}{(n!)^\nu \sum_{i=0}^{\infty }\frac{\lambda^i}{\left( i! \right)^{\nu}}}$ & $\displaystyle \medmuskip=0mu \thickmuskip=0mu 1+ (\alpha_0-1)b(0,\lambda,\nu)+(\alpha_2-1)b(2,\lambda,\nu) $\\
						
						Type 2 & $\displaystyle \medmuskip=0mu \thickmuskip=0mu \varphi^{u_0(n)}_0\varphi^{u_2(n)}_2$ & $\displaystyle \medmuskip=0mu \thickmuskip=0mu \frac{\lambda^n}{(n!)^\nu \sum_{i=0}^{\infty }\frac{\lambda^i}{\left( i! \right)^{\nu}}}$ & $\displaystyle \medmuskip=0mu \thickmuskip=0mu 1+(\varphi_0 \varphi_2-1)b(0,\lambda,\nu)+ ( \varphi_2-1)\sum_{k=1}^{2}b(k,\lambda,\nu)$ \\[2.5ex]						
						\bottomrule		
						\multicolumn{4}{l}{\rule{0pt}{3ex}Note that $\left( r\right)_n=r\left( r+1\right)...\left ( r+n-1 \right )$ is the rising factorial, $\left( 1\right)_n=n!$, and $\left( r\right)_0=1$.}\\
		\end{tabular}}	
	\end{table}
}
		
The following simple result shows that the type 1 model does not have a dominating issue, unlike the multiple-inflation mixture model, so varying $\alpha_n$ with covariates may be not essential.

\begin{proposition}
	\label{non-domination}
	(Non-domination) Suppose the base distribution $b(n,\boldsymbol{\theta})$ of \eqref{eq:new_inflat} has $\lambda$ as a main parameter, and 
	\begin{equation*}
		\lim_{\lambda  \to \infty} b(n,\boldsymbol{\theta})=0 \:\:\: \text{for all}\:\: n.
	\end{equation*}
	Then
	\begin{equation*}
		\lim_{\lambda  \to \infty}p(n,\boldsymbol{\theta},\boldsymbol{\alpha})=0 \:\:\: \text{for all}\:\: n.
	\end{equation*}
\end{proposition}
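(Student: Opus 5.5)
The proof works directly from the closed form \eqref{eq:new_inflat} of the type 1 model, $p(n,\boldsymbol{\theta},\boldsymbol{\alpha})=f(n,\boldsymbol{\alpha})b(n,\boldsymbol{\theta})/z(\boldsymbol{\theta},\boldsymbol{\alpha})$. We show that the numerator tends to zero while the denominator tends to a positive constant. The weights $\boldsymbol{\alpha}$ and all components of $\boldsymbol{\theta}$ other than $\lambda$ are held fixed as $\lambda\to\infty$.

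\textbf{Step 1: the numerator.} By \eqref{eq:f1}, $f(n,\boldsymbol{\alpha})$ equals $\alpha_{n_i}$ if $n=n_i\in\mathcal{F}$ and $1$ otherwise. It is a finite positive constant that does not depend on $\lambda$, because $0<\alpha_{n_i}<\infty$. The hypothesis $b(n,\boldsymbol{\theta})\to0$ then gives $f(n,\boldsymbol{\alpha})b(n,\boldsymbol{\theta})\to0$ for every $n$.

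\textbf{Step 2: the denominator.} By \eqref{eq:z}, $z(\boldsymbol{\theta},\boldsymbol{\alpha})=1+\sum_{i=0}^{m}(\alpha_{n_i}-1)b(n_i,\boldsymbol{\theta})$. This is a \emph{finite} sum of $m+1$ terms, and each term tends to $0$ by hypothesis. Hence $z(\boldsymbol{\theta},\boldsymbol{\alpha})\to1$. Finiteness is the key point: we need no uniformity in $n$ and no interchange of limit and infinite sum. It is also worth noting that $z>0$ for every $\lambda$, since $z=\sum_{n}f(n,\boldsymbol{\alpha})b(n,\boldsymbol{\theta})$ is a sum of nonnegative terms with at least one positive. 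So the ratio is always well defined.

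\textbf{Step 3: conclusion.} The quotient rule for limits gives $p(n,\boldsymbol{\theta},\boldsymbol{\alpha})\to 0/1=0$ for all $n$. We could remark on the contrast with \Cref{domination}. In \eqref{eq:mix_zk} the inflation mass $\omega_n$ enters additively, so it survives the limit. In \eqref{eq:new_inflat} the parameter $\alpha_{n_i}$ enters multiplicatively on $b(n_i,\boldsymbol{\theta})$, so it is dragged to zero along with the base distribution.

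The only step requiring any care is Step 2, specifically checking that the normalizing constant stays bounded away from zero. The explicit formula \eqref{eq:z} makes this immediate, so I expect no real obstacle. The argument is essentially the ``obvious'' one parallel to \Cref{domination}.
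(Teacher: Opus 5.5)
Your proof is correct and is the same direct argument the paper relies on; the paper calls the result ``simple'' and its supplement gives no separate proof. The numerator $f(n,\boldsymbol{\alpha})b(n,\boldsymbol{\theta})$ vanishes, while the finite-sum normalizer $z(\boldsymbol{\theta},\boldsymbol{\alpha})=1+\sum_{i=0}^{m}(\alpha_{n_i}-1)b(n_i,\boldsymbol{\theta})$ tends to $1$, and your remark that $z>0$ for every $\lambda$ makes explicit a point the paper leaves implicit.
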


\subsection{Equivalence between the mixture and type 1 stationary models}
\label{subsec:mix_stationary}
The equivalence of the hurdle \eqref{eq:hurdle} and zero-inflated \eqref{eq:zip} models in the absence of regressors is well established (\citealt[p.~121]{Winkelmann08}, \citealt[p.~256]{Cameron13}). However, it remains unrecognized that, without regressors, the mixture model \eqref{eq:zip} and its extensions \eqref{eq:mix_zk} are also equivalent to the stationary model \eqref{eq:new_inflat}. This relationship suggests new link functions for the mixture model \eqref{eq:mix_zk}. Furthermore, the stationary mechanism provides an intuitive framework for understanding inflation or deflation, as the ratios $\lambda_n$ are directly interpretable. These findings may guide model selection and facilitate clearer interpretation in applied contexts.
 
To estimate $\omega_0$ and $\omega_q$ in \eqref{eq:mix_zk} without constraints, \citet[p.~195]{Melkersson00}, \citet[p.~1574]{Lin13}, and \citet[p.~1820]{Arora21} reparameterize them as 
\begin{equation}
	\psi_0=\log \left ( \frac{\omega_0}{1-\omega_0-\omega_q} \right )\;\;   \text{and}\;\; \psi_q=\log \left ( \frac{\omega_q}{1-\omega_0-\omega_q} \right ).
	\label{eq:w_link}
\end{equation} 
Applying these link functions yields a parameterization of \eqref{eq:mix_zk}. The resulting model excludes the base distribution in \eqref{eq:mix_zk} because $\omega_0=0$ and $\omega_q=0$ are not possible arguments for the log functions \eqref{eq:w_link}. Therefore, the base parameter estimates cannot serve as starting values for maximum likelihood estimation. These link functions also map only the inflation (positive) area I onto $R^2$, as shown in Figure \ref{fig:w_gamma} (top). The following proposition introduces alternative link functions to resolve these issues.

\begin{proposition}
	\label{equi_shape}
	The model \eqref{eq:new_inflat} for parameters $\alpha_n>0$ and the model \eqref{eq:mix_zk} for parameters $1-\sum_{i=0}^{m}\omega_{n_i}>0$ and $\omega_n+(1-\sum_{i=0}^{m}\omega_{n_i})b(n,\boldsymbol{\theta})>0$ are equivalent in exact shape with
	\begin{equation}
		\omega_n=\left\{\begin{array}{cll}
			\displaystyle \frac{(\alpha_n-1)b(n,\boldsymbol{\theta})}{z(\boldsymbol{\theta},\boldsymbol{\alpha})} & , & n \in \mathcal{F} = \left\{ n_0,...,n_m\right\} \\
			\displaystyle 0 & , & \text{otherwise} \\
		\end{array}\right.\\
	\label{eq:al_to_w}
	\end{equation}
	or
	\begin{equation}
		\alpha_n=\left\{\begin{array}{cll}
			\displaystyle \frac{\omega_n + (1-\sum_{i=0}^{m}\omega_{n_i}) b(n,\boldsymbol{\theta})}{(1-\sum_{i=0}^{m}\omega_{n_i}) b(n,\boldsymbol{\theta})} & , & n \in \mathcal{F} = \left\{ n_0,...,n_m\right\} \\
			\displaystyle 1 & , & \text{otherwise}. \\
		\end{array}\right.\\
	\label{eq:w_to_al}
	\end{equation}
\end{proposition}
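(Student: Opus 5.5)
The plan is to exhibit the two maps \eqref{eq:al_to_w} and \eqref{eq:w_to_al} as mutually inverse bijections between the admissible parameter sets. Under each map the two probability functions coincide for every $n$ and every fixed $\boldsymbol{\theta}$. Throughout, write $s=1-\sum_{i=0}^{m}\omega_{n_i}$ and use the shorthand $z=z(\boldsymbol{\theta},\boldsymbol{\alpha})$ from \eqref{eq:z}.

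\emph{Step 1: from $\boldsymbol{\alpha}$ to $\boldsymbol{\omega}$.} Start from \eqref{eq:new_inflat} with $\alpha_n>0$ and define $\omega_n$ by \eqref{eq:al_to_w}. Summing over $\mathcal{F}$ gives $\sum_i\omega_{n_i}=(z-1)/z$, so $s=1/z$. Since $z=\sum_n f(n,\boldsymbol{\alpha})b(n,\boldsymbol{\theta})$ is a sum of positive terms, $z>0$, and therefore $s>0$. For $n\notin\mathcal{F}$, \eqref{eq:mix_zk} gives $s\,b(n,\boldsymbol{\theta})=b(n,\boldsymbol{\theta})/z$, which equals $p(n,\boldsymbol{\theta},\boldsymbol{\alpha})$ because $f(n,\boldsymbol{\alpha})=1$ there. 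For $n\in\mathcal{F}$, \eqref{eq:mix_zk} gives $\omega_n+s\,b(n,\boldsymbol{\theta})=\bigl((\alpha_n-1)b(n,\boldsymbol{\theta})+b(n,\boldsymbol{\theta})\bigr)/z=\alpha_n b(n,\boldsymbol{\theta})/z$. This matches \eqref{eq:new_inflat} and is positive, so both constraints of the statement hold.

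\emph{Step 2: from $\boldsymbol{\omega}$ to $\boldsymbol{\alpha}$.} Conversely, take admissible $\boldsymbol{\omega}$, so $s>0$ and $\omega_n+s\,b(n,\boldsymbol{\theta})>0$, and define $\alpha_n$ by \eqref{eq:w_to_al}. We need $b(n,\boldsymbol{\theta})>0$, which holds for the stationary bases via \eqref{eq:pn}. Then $\alpha_n>0$, and $(\alpha_n-1)b(n,\boldsymbol{\theta})=\omega_n/s$. The key computation is $z=1+\sum_i\omega_{n_i}/s=(s+1-s)/s=1/s$. Hence $p(n,\boldsymbol{\theta},\boldsymbol{\alpha})=s\,f(n,\boldsymbol{\alpha})b(n,\boldsymbol{\theta})$. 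This equals $s\,b(n,\boldsymbol{\theta})$ off $\mathcal{F}$ and $\omega_n+s\,b(n,\boldsymbol{\theta})$ on $\mathcal{F}$, which is exactly \eqref{eq:mix_zk}.

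\emph{Step 3: the maps are mutual inverses.} Substituting \eqref{eq:al_to_w} into \eqref{eq:w_to_al} and using $s=1/z$ returns $\alpha_n$. Substituting in the other order, using $z=1/s$, returns $\omega_n$. This gives a bijection of parameter spaces, and hence equivalence in exact shape.

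The only delicate point, and the one I expect to need care, is the normalizing-constant identity $s=1/z$, i.e.\ $z=1/s$, in each direction. Together with positivity of $z$ and of $b(n,\boldsymbol{\theta})$, this identity is exactly what makes the stated constraint sets correspond. The remaining steps are routine substitution.
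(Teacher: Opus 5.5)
Your proof is correct and rests on the same identity the paper uses, $1-\sum_{i=0}^{m}\omega_{n_i}=1/z(\boldsymbol{\theta},\boldsymbol{\alpha})$. The paper gets this identity by equating the two probability functions at counts outside $\mathcal{F}$, then solves the resulting pair of equations for $\omega_n$ and $\alpha_n$. You instead verify both directions, which also shows that the stated positivity constraints correspond and that the maps are mutual inverses; the paper's derivation leaves both points implicit.
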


\begin{figure}[!t]
	\centering
	\begin{tikzpicture}
		\draw[black,thick] (-7.4, 0) -- (-5.1, 0);
		\draw[black,thick] (-3.6, 0) -- (-2.8, 0)node[below] {\small $\omega_0$};
		\draw[black,thick] (-5.1, -1.4) -- (-5.1, 0);
		\draw[black,thick] (-5.1, 1.5) -- (-5.1,2.2)node[above] {\small $\omega_q$};		
		\draw[thick] (-3.6, 0.1) -- (-3.6, -0.1) node[below] {\small $1$};
		\draw[thick] (0.1-5.1,1.5) -- (-0.1-5.1,1.5) node[left] {\small $1$};
		\node[below left] at (-5.1,0) {\small $0$};
		\draw[dashed,black,thick] (-3.6,0) -- (-5.1,1.5) -- (-5.1,0) -- cycle;
		\filldraw[black,fill=white] (-3.6,0) circle (3pt);
		\filldraw[black,fill=white] (-5.1,1.5) circle (3pt);
		\filldraw[black,fill=white] (-5.1,0) circle (3pt); 
		\draw[black] (0.4-5.1,0.4) node {\small I};

		\draw[black,thick] (-2.3, 0) -- (0, 0);
		\draw[black,thick] (1.5, 0) -- (2.3, 0)node[below] {\small $\alpha_0$};
		\draw[black,thick] (0,-1.4) -- (0, 0);
		\draw[black,thick] (0,1.5) -- (0, 2.2)node[above] {\small $\alpha_q$};
		\draw[thick] (1.5, 0.1) -- (1.5, -0.1) node[below] {\small $1$};
		\draw[thick] (0.1,1.5) -- (-0.1,1.5) node[left] {\small $1$};
		\node[below left] at (0,0) {\small $0$};
		\draw[dashed,black,thick] (0,0) -- (1.5,0) -- (1.5,1.5) -- (0,1.5) -- cycle;
		\filldraw[black,fill=white] (0,0) circle (3pt);
		\filldraw[black,fill=white] (1.5,0) circle (3pt);
		\filldraw[black,fill=white] (1.5,1.5) circle (3pt);
		\filldraw[black,fill=white] (0,1.5) circle (3pt);
		\draw[black] (0.4,0.4) node {\small I};

		\draw[scale=1, thick] (2.8, 0) -- (7.4, 0) node[below] {\small $\psi_0$};
		\draw[scale=1, thick] (5.1,-1.4) -- (5.1,2.2) node[above] {\small $\psi_q$};
		\draw[black] (-0.4+5.1,-0.4) node {\small I};
		\draw[black] ( 0.4+5.1,-0.4) node {\small I};
		\draw[black] ( 0.4+5.1, 0.4) node {\small I};
		\draw[black] (-0.4+5.1, 0.4) node {\small I};
		
		\draw[thick,-{Latex[length=3mm]}] (-4.8,3) arc (110:70:14);
		\node[fill=white] at (0,3.6) {\small $\begin{matrix}
				\psi_0 = \log \left ( \frac{\omega_0}{1-\omega_0-\omega_q} \right ) \\
				\psi_q = \log \left ( \frac{\omega_q}{1-\omega_0-\omega_q} \right )
			\end{matrix}$};
		
		\draw[thick,-{Latex[length=3mm]}] (-4.9,-1.5) arc (230:310:3.5);
		\node[fill=white] at (-3,-2.5) {\small $\begin{matrix}
				\alpha_0 = \frac{\omega_0}{1-\omega_q}  \\
				\alpha_q =  \frac{\omega_q}{1-\omega_0} 
			\end{matrix} $};
		\draw[thick,-{Latex[length=3mm]}] (0.4,-1.5) arc (230:310:3.5);
		\node[fill=white] at (2.5,-2.5) {\small $\begin{matrix}
				\psi_0=\log \left ( \frac{\alpha_0}{1-\alpha_0} \right ) \\
				\psi_q=\log \left ( \frac{\alpha_q}{1-\alpha_q} \right ) 
			\end{matrix} $};
	\end{tikzpicture}	
	\begin{tikzpicture}
		\draw[scale=1, thick] (-7.4, 0) -- (-2.8, 0) node[below] {\small $\omega_0$};
		\draw[scale=1, thick] (-5.1,-2.2) -- (-5.1,2.2) node[above] {\small $\omega_q$};
		\draw[thick] (-6.6, 0.1) -- (-6.6, -0.1) node[below] {\small $-1$};
		\draw[thick] (-3.6, 0.1) -- (-3.6, -0.1) node[below] {\small $1$};
		\draw[thick] (0.1-5.1,-1.5) -- (-0.1-5.1,-1.5) node[left] {\small $-1$};
		\draw[thick] (0.1-5.1,1.5) -- (-0.1-5.1,1.5) node[left] {\small $1$};
		\draw[dashed,black,thick] (-3.6,0) -- (-5.1,1.5) -- (-6.6,-1.5) -- cycle;
		\filldraw[black,fill=white] (-3.6,0) circle (3pt);
		\filldraw[black,fill=white] (-5.1,1.5) circle (3pt);
		\filldraw[black,fill=white] (-6.6,-1.5) circle (3pt); 
		\node[above right] at (-3.6,0) {\small $v_1$};
		\node[above right] at (-5.1,1.5) {\small $v_2$};
		\node[below right] at (-6.6,-1.5) {\small $v_3$}; 
		\draw[black] (0.3-5.1,0.3) node {\small I};
		\draw[black] (-0.3-5.1,0.3) node {\small II};
		\draw[black] (-0.3-5.1,-0.3) node {\small III};
		\draw[black] (0.3-5.1,-0.3) node {\small IV};
		
		\draw[black] (-4,1) node {\small $l_1$};
		\draw[black] (-6.1,0.5) node {\small $l_2$};
		\draw[black] (-4.5,-1) node {\small $l_3$};
		
		
		\draw[black,thick] (-2.3, 0) -- (0, 0);
		\draw[dashed,black,thick] (0, 0) -- (2.3, 0) node[below] {\small $\alpha_0$};
		\draw[black,thick] (0,-2.2) -- (0, 0);
		\draw[dashed,black,thick] (0,0) -- (0,2.2) node[above] {\small $\alpha_q$};
		\draw[black,thick] (0, 1.5) -- (2.3, 1.5);
		\draw[black,thick] (1.5,0) -- (1.5, 2.2);
		\draw[thick] (1.5, 0.1) -- (1.5, -0.1) node[below] {\small $1$};
		\draw[thick] (0.1,1.5) -- (-0.1,1.5) node[left] {\small $1$};
		\filldraw[black,fill=white] (0,0) circle (3pt);
		\node[below left] at (0,0) {\small $0$};
		\draw[black] ( 0.4+1.5, 0.4+1.5) node {\small I};
		\draw[black] (-0.4+1.5, 0.4+1.5) node {\small II};
		\draw[black] (-0.4+1.5,-0.4+1.5) node {\small III};
		\draw[black] ( 0.4+1.5,-0.4+1.5) node {\small IV};

		\draw[scale=1, thick] (2.8, 0) -- (7.4, 0) node[below] {\small $\psi_0$};
		\draw[scale=1, thick] (5.1,-2.2) -- (5.1,2.2) node[above] {\small $\psi_q$};
		\draw[black] ( 0.4+5.1, 0.4) node {\small I};
		\draw[black] (-0.4+5.1, 0.4) node {\small II};
		\draw[black] (-0.4+5.1,-0.4) node {\small III};
		\draw[black] ( 0.4+5.1,-0.4) node {\small IV};
		
		\draw[thick,-{Latex[length=3mm]}] (-4.8,3) arc (110:70:14);
		\node[fill=white] at (0,3.6) {\small $\begin{matrix}
				\psi_0 = \log \left ( \frac{\omega_0+(1-\omega_0-\omega_q)b(0,\boldsymbol{\theta})}{(1-\omega_0-\omega_q)b(0,\boldsymbol{\theta})} \right ) \\
				\psi_q = \log \left ( \frac{\omega_q+(1-\omega_0-\omega_q)b(q,\boldsymbol{\theta})}{(1-\omega_0-\omega_q)b(q,\boldsymbol{\theta})} \right )
			\end{matrix}$};
		
		\draw[thick,-{Latex[length=3mm]}] (-4.9,-2.5) arc (230:310:3.5);
		\node[fill=white] at (-3,-3.4) {\small $\begin{matrix}
				\alpha_0 = \frac{\omega_0+(1-\omega_0-\omega_q)b(0,\boldsymbol{\theta})}{(1-\omega_0-\omega_q)b(0,\boldsymbol{\theta})}  \\
				\alpha_q =  \frac{\omega_q+(1-\omega_0-\omega_q)b(q,\boldsymbol{\theta})}{(1-\omega_0-\omega_q)b(q,\boldsymbol{\theta})} 
			\end{matrix} $};
		\draw[thick,-{Latex[length=3mm]}] (0.4,-2.5) arc (230:310:3.5);
		\node[fill=white] at (2.5,-3.4) {\small $\begin{matrix}
				\psi_0 = \log (\alpha_0) \\
				\psi_q =  \log (\alpha_q) 
			\end{matrix} $};
	\end{tikzpicture}
	\caption{Mapping by the common (top) and new (bottom) link functions of \eqref{eq:mix_zk}. The lines $l_1$, $l_2$, and $l_3$ are $1-\omega_0-\omega_q=0$,  $\omega_0+(1-\omega_0-\omega_q))b(0,\boldsymbol{\theta})=0$, and $\omega_q+(1-\omega_0-\omega_q))b(q,\boldsymbol{\theta})=0$, respectively.}
	\label{fig:w_gamma}
\end{figure}

Guided by \Cref{equi_shape}, we introduce new link functions
\begin{equation}
	\psi_n=\log\left ( \frac{\omega_n + (1-\sum_{i=0}^{m}\omega_{n_i}) b(n,\boldsymbol{\theta})}{(1-\sum_{i=0}^{m}\omega_{n_i}) b(n,\boldsymbol{\theta})} \right), 
	\label{eq:new_link}
\end{equation}
for the inflation-deflation model \eqref{eq:mix_zk}. The proofs of propositions are available in the supplementary material. These functions enable unconstrained parameter estimation and facilitate modeling the relationship between $\boldsymbol{\omega}$ and covariates. The link functions bijectively map the $\boldsymbol{\omega}$-parameter space to the entire $(m+1)$-dimensional space $R^{m+1}$. To illustrate \eqref{eq:new_link}, we set $\mathcal{F}=\left\{ 0,q\right\}$, a special case of \eqref{eq:mix_zk} and \eqref{eq:new_inflat}. In this case, the functions depend on two additional parameters, allowing us to graph the mapping of $(\omega_0,\omega_q)$ onto $(\psi_0,\psi_q)$.

Figure \ref{fig:w_gamma} (bottom) shows how \eqref{eq:w_to_al} maps a triangle onto the first quadrant. The vertices $v_1=\lim_{\alpha_0 \to \infty} (\omega_0,\omega_q)=(1,0)$, $v_2=\lim_{\alpha_q \to \infty} (\omega_0,\omega_q)=(0,1)$, and $v_3=\lim_{\alpha_0 \to 0^+ } \lim_{\alpha_q \to 0^+ } (\omega_0,\omega_q)=(b(0,\lambda),b(q,\lambda))/(b(0,\lambda)+b(q,\lambda)-1)$ represent zero-degeneration, $q$-degeneration, and zero-and-$q$-truncation, respectively. On $l_1$, except at $v_1$ and $v_2$, the distributions $p(n,\boldsymbol{\theta},\omega_0,\omega_q)$ have non-zero probabilities only for $n=0$ and $q$. When points $(\omega_0,\omega_q)$ are on $l_2$ and $l_3$, $p(0,\boldsymbol{\theta},\omega_0,\omega_q)$ and $p(q,\boldsymbol{\theta},\omega_0,\omega_q)$ are always zero, respectively. Therefore, the distribution at $v_3$ is zero-and-$q$ truncated. Figure \ref{fig:w_gamma} (bottom) also highlights four regions of inflation and deflation. The areas I and III show only one type of inflation or deflation, while the areas II and IV allow both types within the same model. Specifically, the areas I and III indicate data inflation and deflation relative to a base distribution. Unlike \eqref{eq:w_link}, \eqref{eq:new_link} maps $(\omega_0,\omega_q)=(0,0)$ onto $(\psi_0,\psi_q)=(0,0)$, so base parameter estimates can be used as starting values for maximum likelihood estimation.

\subsection{Type 2 model specification}
\label{subsec:type2_model}
Apparent inflation and deflation in count data are observed only when compared to a specific base distribution and may differ with other base distributions. Modifying the birth-death-rate ratio sequence of a base distribution results in different inflation-deflation distributions. The type 1 model \eqref{eq:new_inflat}, with the ratio sequence \eqref{eq:ratios}, has the same shape as the mixture model \eqref{eq:mix_zk}, which may not reflect real-world phenomena. To address this, we introduce alternative birth-death-rate ratios
\begin{equation}
	\lambda_{n}=\left\{\begin{array}{cll}
		\displaystyle \frac{1}{\varphi_n^{\textbf{1}_{\mathcal{F}}({n})}}\lambda_n^b & , & n = 0,1,...,q \\
		\displaystyle \lambda_n^b & , & n>q, \\
	\end{array}\right.
	\label{eq:ratiosII}
\end{equation}
where $0<\varphi_n<\infty$. Applying the ratio test shows that the series \eqref{eq:pcon} with \eqref{eq:ratiosII} converges (i.e., $\lim_{n \to \infty } \lambda_n=\lim_{n \to \infty } \lambda^{b}_n<1$), confirming the existence of a solution for this stationary birth-death process.

By substituting \eqref{eq:ratiosII} into \eqref{eq:p0} and \eqref{eq:pn}, the stationary birth-death process generates a type 2 inflation-deflation stationary model
\begin{equation}
	p(n,\boldsymbol{\theta},\boldsymbol{\varphi})=\frac{f(n,\boldsymbol{\varphi}) b(n,\boldsymbol{\theta})}{z(\boldsymbol{\theta},\boldsymbol{\varphi})},
	\label{eq:new_probII}
\end{equation}
where 
\begin{equation}
	f(n,\boldsymbol{\varphi})=\prod_{i=0}^{m}\varphi_{n_i}^{u_{n_i}(n)},
	\label{eq:f2}
\end{equation}
\begin{equation}
	z(\boldsymbol{\theta},\boldsymbol{\varphi})=\displaystyle 1+ \sum_{i=0}^{m} \left( \prod_{j=i}^{m} \varphi_{n_j}-1 \right)\sum_{k=n_{i-1}+1 }^{n_i}b(k,\boldsymbol{\theta}),
	\label{eq:zII}
\end{equation}
$\boldsymbol{\varphi}=[\varphi_{n_0},\varphi_{n_1},...,\varphi_{n_m}]$, $n_{-1}=-1$, and $u_{n_i}(n)$ is the unit step function defined as $u_{n_i}(n)=1$ for $n\le n_i$ and $0$ otherwise. When $\mathcal{F} = \left\{ 0 \right\}$ and $\varphi_0 = e^{\psi}$, \eqref{eq:new_probII} simplifies to the distribution of Haslett et al.\eqref{eq:haslett}, which belongs to both type 1 and 2 families. The weight functions \eqref{eq:f1} and \eqref{eq:f2} demonstrate the inflation-deflation behavior of \eqref{eq:new_inflat} and \eqref{eq:new_probII} relative to their base distributions. For instance, with $\mathcal{F} = \left\{ q \neq0 \right\}$, $f(n,\varphi)=\varphi^{u_{q}(n)}$ displays an inflation-deflation interval (see Figures \ref{fig:wfig7} and \ref{fig:wfig8}), while $f(n,\alpha)=\alpha^{\textbf{1}_{q}(n)}$ displays an inflation-deflation point (see Figures \ref{fig:wfig2} and \ref{fig:wfig3}). Although these two distributions have the same base and number of parameters, their shapes differ. The supplementary material provides the derivation of \eqref{eq:new_probII} and the state-transition-rate-ratio diagram for this type 2 process.

\afterpage{
	\begin{figure}[H]
		\begin{subfigure}{0.5\textwidth}
			\centering
			\begin{tikzpicture}
				\newcommand\xd{0.2} 
				\newcommand\xs{1.3} 
				\newcommand\ys{1} 
				\path[draw=none,use as bounding box] (-1*\xs,-0.5*\ys) rectangle (5*\xs,3*\ys); 
				\draw[ultra thick] (-0.7*\xs, 0) -- (5*\xs, 0);
				
				
				
				\newcommand\xx{0} 
				\node[below] at (\xx*\xs,0) {\xx};
				\draw[thick] (\xx*\xs-\xd,0) -- (\xx*\xs-\xd,2*\ys) node[above] {$\scriptstyle \alpha_0$};
				\filldraw[black] (\xx*\xs-\xd,2*\ys) circle (2pt);
				\draw[thick,dashed] (\xx*\xs+\xd,0) -- (\xx*\xs+\xd,0.5*\ys) node[above] {$\frac{1}{\alpha_0}$};
				\filldraw[fill=white, draw=black] (\xx*\xs+\xd,0.5*\ys) circle (2pt);
				
				\renewcommand\xx{1} 
				\node[below] at (\xx*\xs,0) {\xx};
				\draw[thick] (\xx*\xs-\xd,0) -- (\xx*\xs-\xd,1*\ys) node[above] {$\scriptstyle 1$};
				\filldraw[black] (\xx*\xs-\xd,1*\ys) circle (2pt);
				\draw[thick,dashed] (\xx*\xs+\xd,0) -- (\xx*\xs+\xd,1*\ys) node[above] {$\scriptstyle 1$};
				\filldraw[fill=white, draw=black] (\xx*\xs+\xd,1*\ys) circle (2pt);
				
				\renewcommand\xx{2} 
				\node[below] at (\xx*\xs,0) {\xx};
				\draw[thick] (\xx*\xs-\xd,0) -- (\xx*\xs-\xd,1*\ys) node[above] {$\scriptstyle 1$};
				\filldraw[black] (\xx*\xs-\xd,1*\ys) circle (2pt);
				\draw[thick,dashed] (\xx*\xs+\xd,0) -- (\xx*\xs+\xd,1*\ys) node[above] {$\scriptstyle 1$};
				\filldraw[fill=white, draw=black] (\xx*\xs+\xd,1*\ys) circle (2pt);
				
				\renewcommand\xx{3} 
				\node[below] at (\xx*\xs,0) {\xx};
				\draw[thick] (\xx*\xs-\xd,0) -- (\xx*\xs-\xd,1*\ys) node[above] {$\scriptstyle 1$};
				\filldraw[black] (\xx*\xs-\xd,1*\ys) circle (2pt);
				\draw[thick,dashed] (\xx*\xs+\xd,0) -- (\xx*\xs+\xd,1*\ys) node[above] {$\scriptstyle 1$};
				\filldraw[fill=white, draw=black] (\xx*\xs+\xd,1*\ys) circle (2pt);
				
				\renewcommand\xx{4} 
				\node[below] at (\xx*\xs,0) {\xx};
				\draw[thick] (\xx*\xs-\xd,0) -- (\xx*\xs-\xd,1*\ys) node[above] {$\scriptstyle 1$};
				\filldraw[black] (\xx*\xs-\xd,1*\ys) circle (2pt);
				\draw[thick,dashed] (\xx*\xs+\xd,0) -- (\xx*\xs+\xd,1*\ys) node[above] {$\scriptstyle 1$};
				\filldraw[fill=white, draw=black] (\xx*\xs+\xd,1*\ys) circle (2pt);
				
				\node[] at (4.7*\xs,0.5*\ys) {$\boldsymbol{\dots}$};
			\end{tikzpicture}
			\caption{$\alpha_0=2$}
			\label{fig:wfig1}
			\begin{tikzpicture}
				\newcommand\xd{0.2} 
				\newcommand\xs{1.3} 
				\newcommand\ys{1} 
				\path[draw=none,use as bounding box] (-1*\xs,-0.5*\ys) rectangle (5*\xs,3*\ys); 
				\draw[ultra thick] (-0.7*\xs, 0) -- (5*\xs, 0);

				\newcommand\xx{0} 
				\node[below] at (\xx*\xs,0) {\xx};
				\draw[thick] (\xx*\xs-\xd,0) -- (\xx*\xs-\xd,1*\ys) node[above] {$\scriptstyle 1$};
				\filldraw[black] (\xx*\xs-\xd,1*\ys) circle (2pt);
				\draw[thick,dashed] (\xx*\xs+\xd,0) -- (\xx*\xs+\xd,2*\ys) node[above] {$\scriptstyle \alpha_1$};
				\filldraw[fill=white, draw=black] (\xx*\xs+\xd,2*\ys) circle (2pt);
				
				\renewcommand\xx{1} 
				\node[below] at (\xx*\xs,0) {\xx};
				\draw[thick] (\xx*\xs-\xd,0) -- (\xx*\xs-\xd,2*\ys) node[above] {$\scriptstyle \alpha_1$};
				\filldraw[black] (\xx*\xs-\xd,2*\ys) circle (2pt);
				\draw[thick,dashed] (\xx*\xs+\xd,0) -- (\xx*\xs+\xd,0.5*\ys) node[above] {$\frac{1}{\alpha_1}$};
				\filldraw[fill=white, draw=black] (\xx*\xs+\xd,0.5*\ys) circle (2pt);

				\renewcommand\xx{2} 
				\node[below] at (\xx*\xs,0) {\xx};
				\draw[thick] (\xx*\xs-\xd,0) -- (\xx*\xs-\xd,1*\ys) node[above] {$\scriptstyle 1$};
				\filldraw[black] (\xx*\xs-\xd,1*\ys) circle (2pt);
				\draw[thick,dashed] (\xx*\xs+\xd,0) -- (\xx*\xs+\xd,1*\ys) node[above] {$\scriptstyle 1$};
				\filldraw[fill=white, draw=black] (\xx*\xs+\xd,1*\ys) circle (2pt);
				
				\renewcommand\xx{3} 
				\node[below] at (\xx*\xs,0) {\xx};
				\draw[thick] (\xx*\xs-\xd,0) -- (\xx*\xs-\xd,1*\ys) node[above] {$\scriptstyle 1$};
				\filldraw[black] (\xx*\xs-\xd,1*\ys) circle (2pt);
				\draw[thick,dashed] (\xx*\xs+\xd,0) -- (\xx*\xs+\xd,1*\ys) node[above] {$\scriptstyle 1$};
				\filldraw[fill=white, draw=black] (\xx*\xs+\xd,1*\ys) circle (2pt);
				
				\renewcommand\xx{4} 
				\node[below] at (\xx*\xs,0) {\xx};
				\draw[thick] (\xx*\xs-\xd,0) -- (\xx*\xs-\xd,1*\ys) node[above] {$\scriptstyle 1$};
				\filldraw[black] (\xx*\xs-\xd,1*\ys) circle (2pt);
				\draw[thick,dashed] (\xx*\xs+\xd,0) -- (\xx*\xs+\xd,1*\ys) node[above] {$\scriptstyle 1$};
				\filldraw[fill=white, draw=black] (\xx*\xs+\xd,1*\ys) circle (2pt);
				
				\node[] at (4.7*\xs,0.5*\ys) {$\boldsymbol{\dots}$};
			\end{tikzpicture}
			\caption{$\alpha_1=2$}
			\label{fig:wfig2}
			\begin{tikzpicture}
				\newcommand\xd{0.2} 
				\newcommand\xs{1.3} 
				\newcommand\ys{1} 
				\path[draw=none,use as bounding box] (-1*\xs,-0.5*\ys) rectangle (5*\xs,3*\ys); 
				\draw[ultra thick] (-0.7*\xs, 0) -- (5*\xs, 0);

				\newcommand\xx{0} 
				\node[below] at (\xx*\xs,0) {\xx};
				\draw[thick] (\xx*\xs-\xd,0) -- (\xx*\xs-\xd,1*\ys) node[above] {$\scriptstyle 1$};
				\filldraw[black] (\xx*\xs-\xd,1*\ys) circle (2pt);
				\draw[thick,dashed] (\xx*\xs+\xd,0) -- (\xx*\xs+\xd,1*\ys) node[above] {$\scriptstyle 1$};
				\filldraw[fill=white, draw=black] (\xx*\xs+\xd,1*\ys) circle (2pt);
				
				\renewcommand\xx{1} 
				\node[below] at (\xx*\xs,0) {\xx};
				\draw[thick] (\xx*\xs-\xd,0) -- (\xx*\xs-\xd,1*\ys) node[above] {$\scriptstyle 1$};
				\filldraw[black] (\xx*\xs-\xd,1*\ys) circle (2pt);
				\draw[thick,dashed] (\xx*\xs+\xd,0) -- (\xx*\xs+\xd,2*\ys) node[above] {$\scriptstyle \alpha_2$};
				\filldraw[fill=white, draw=black] (\xx*\xs+\xd,2*\ys) circle (2pt);

				\renewcommand\xx{2} 
				\node[below] at (\xx*\xs,0) {\xx};
				\draw[thick] (\xx*\xs-\xd,0) -- (\xx*\xs-\xd,2*\ys) node[above] {$\scriptstyle \alpha_2$};
				\filldraw[black] (\xx*\xs-\xd,2*\ys) circle (2pt);
				\draw[thick,dashed] (\xx*\xs+\xd,0) -- (\xx*\xs+\xd,0.5*\ys) node[above] {$\frac{1}{\alpha_2}$};
				\filldraw[fill=white, draw=black] (\xx*\xs+\xd,0.5*\ys) circle (2pt);
				
				\renewcommand\xx{3} 
				\node[below] at (\xx*\xs,0) {\xx};
				\draw[thick] (\xx*\xs-\xd,0) -- (\xx*\xs-\xd,1*\ys) node[above] {$\scriptstyle 1$};
				\filldraw[black] (\xx*\xs-\xd,1*\ys) circle (2pt);
				\draw[thick,dashed] (\xx*\xs+\xd,0) -- (\xx*\xs+\xd,1*\ys) node[above] {$\scriptstyle 1$};
				\filldraw[fill=white, draw=black] (\xx*\xs+\xd,1*\ys) circle (2pt);
				
				\renewcommand\xx{4} 
				\node[below] at (\xx*\xs,0) {\xx};
				\draw[thick] (\xx*\xs-\xd,0) -- (\xx*\xs-\xd,1*\ys) node[above] {$\scriptstyle 1$};
				\filldraw[black] (\xx*\xs-\xd,1*\ys) circle (2pt);
				\draw[thick,dashed] (\xx*\xs+\xd,0) -- (\xx*\xs+\xd,1*\ys) node[above] {$\scriptstyle 1$};
				\filldraw[fill=white, draw=black] (\xx*\xs+\xd,1*\ys) circle (2pt);
				
				\node[] at (4.7*\xs,0.5*\ys) {$\boldsymbol{\dots}$};
			\end{tikzpicture}
			\caption{$\alpha_2=2$}
			\label{fig:wfig3}
			\begin{tikzpicture}
				\newcommand\xd{0.2} 
				\newcommand\xs{1.3} 
				\newcommand\ys{1} 
				\path[draw=none,use as bounding box] (-1*\xs,-0.5*\ys) rectangle (5*\xs,3*\ys); 
				\draw[ultra thick] (-0.7*\xs, 0) -- (5*\xs, 0);
				
				\newcommand\xx{0} 
				\node[below] at (\xx*\xs,0) {\xx};
				\draw[thick] (\xx*\xs-\xd,0) -- (\xx*\xs-\xd,2*\ys) node[above] {$\scriptstyle \alpha_0$};
				\filldraw[black] (\xx*\xs-\xd,2*\ys) circle (2pt);
				\draw[thick,dashed] (\xx*\xs+\xd,0) -- (\xx*\xs+\xd,0.5*\ys) node[above] {$\frac{1}{\alpha_0}$};
				\filldraw[fill=white, draw=black] (\xx*\xs+\xd,0.5*\ys) circle (2pt);
				
				\renewcommand\xx{1} 
				\node[below] at (\xx*\xs,0) {\xx};
				\draw[thick] (\xx*\xs-\xd,0) -- (\xx*\xs-\xd,1*\ys) node[above] {$\scriptstyle 1$};
				\filldraw[black] (\xx*\xs-\xd,1*\ys) circle (2pt);
				\draw[thick,dashed] (\xx*\xs+\xd,0) -- (\xx*\xs+\xd,1.5*\ys) node[above] {$\scriptstyle \alpha_2$}; 
				\filldraw[fill=white, draw=black] (\xx*\xs+\xd,1.5*\ys) circle (2pt);
				
				\renewcommand\xx{2} 
				\node[below] at (\xx*\xs,0) {\xx};
				\draw[thick] (\xx*\xs-\xd,0) -- (\xx*\xs-\xd,1.5*\ys) node[above] {$\scriptstyle \alpha_2$};
				\filldraw[black] (\xx*\xs-\xd,1.5*\ys) circle (2pt);
				\draw[thick,dashed] (\xx*\xs+\xd,0) -- (\xx*\xs+\xd,0.67*\ys) node[above] {$\frac{1}{\alpha_2}$};
				\filldraw[fill=white, draw=black] (\xx*\xs+\xd,0.67*\ys) circle (2pt);
				
				\renewcommand\xx{3} 
				\node[below] at (\xx*\xs,0) {\xx};
				\draw[thick] (\xx*\xs-\xd,0) -- (\xx*\xs-\xd,1*\ys) node[above] {$\scriptstyle 1$};
				\filldraw[black] (\xx*\xs-\xd,1*\ys) circle (2pt);
				\draw[thick,dashed] (\xx*\xs+\xd,0) -- (\xx*\xs+\xd,1*\ys) node[above] {$\scriptstyle 1$};
				\filldraw[fill=white, draw=black] (\xx*\xs+\xd,1*\ys) circle (2pt);
				
				\renewcommand\xx{4} 
				\node[below] at (\xx*\xs,0) {\xx};
				\draw[thick] (\xx*\xs-\xd,0) -- (\xx*\xs-\xd,1*\ys) node[above] {$\scriptstyle 1$};
				\filldraw[black] (\xx*\xs-\xd,1*\ys) circle (2pt);
				\draw[thick,dashed] (\xx*\xs+\xd,0) -- (\xx*\xs+\xd,1*\ys) node[above] {$\scriptstyle 1$};
				\filldraw[fill=white, draw=black] (\xx*\xs+\xd,1*\ys) circle (2pt);
				
				\node[] at (4.7*\xs,0.5*\ys) {$\boldsymbol{\dots}$};
			\end{tikzpicture}
			\caption{$\alpha_0=2, \alpha_2=1.5$}
			\label{fig:wfig4}
			\begin{tikzpicture}
				\newcommand\xd{0.2} 
				\newcommand\xs{1.3} 
				\newcommand\ys{1} 
				\path[draw=none,use as bounding box] (-1*\xs,-0.5*\ys) rectangle (5*\xs,3*\ys); 
				\draw[ultra thick] (-0.7*\xs, 0) -- (5*\xs, 0);
				
				\newcommand\xx{0} 
				\node[below] at (\xx*\xs,0) {\xx};
				\draw[thick] (\xx*\xs-\xd,0) -- (\xx*\xs-\xd,2*\ys) node[above] {$\scriptstyle \alpha_0$};
				\filldraw[black] (\xx*\xs-\xd,2*\ys) circle (2pt);
				\draw[thick,dashed] (\xx*\xs+\xd,0) -- (\xx*\xs+\xd,0.3*\ys) node[above] {$\frac{\alpha_1}{\alpha_0}$};
				\filldraw[fill=white, draw=black] (\xx*\xs+\xd,0.3*\ys) circle (2pt);
				
				\renewcommand\xx{1} 
				\node[below] at (\xx*\xs,0) {\xx};
				\draw[thick] (\xx*\xs-\xd,0) -- (\xx*\xs-\xd,0.6*\ys) node[above] {$\scriptstyle \alpha_1$};
				\filldraw[black] (\xx*\xs-\xd,0.6*\ys) circle (2pt);
				\draw[thick,dashed] (\xx*\xs+\xd,0) -- (\xx*\xs+\xd,2*\ys) node[above] {$\frac{\alpha_2}{\alpha_1}$};
				\filldraw[fill=white, draw=black] (\xx*\xs+\xd,2*\ys) circle (2pt);
				
				\renewcommand\xx{2} 
				\node[below] at (\xx*\xs,0) {\xx};
				\draw[thick] (\xx*\xs-\xd,0) -- (\xx*\xs-\xd,1.2*\ys) node[above] {$\scriptstyle \alpha_2$};
				\filldraw[black] (\xx*\xs-\xd,1.2*\ys) circle (2pt);
				\draw[thick,dashed] (\xx*\xs+\xd,0) -- (\xx*\xs+\xd,0.83*\ys) node[above] {$\frac{1}{\alpha_2}$};
				\filldraw[fill=white, draw=black] (\xx*\xs+\xd,0.83*\ys) circle (2pt);
				
				\renewcommand\xx{3} 
				\node[below] at (\xx*\xs,0) {\xx};
				\draw[thick] (\xx*\xs-\xd,0) -- (\xx*\xs-\xd,1*\ys) node[above] {$\scriptstyle 1$};
				\filldraw[black] (\xx*\xs-\xd,1*\ys) circle (2pt);
				\draw[thick,dashed] (\xx*\xs+\xd,0) -- (\xx*\xs+\xd,1*\ys) node[above] {$\scriptstyle 1$};
				\filldraw[fill=white, draw=black] (\xx*\xs+\xd,1*\ys) circle (2pt);
				
				\renewcommand\xx{4} 
				\node[below] at (\xx*\xs,0) {\xx};
				\draw[thick] (\xx*\xs-\xd,0) -- (\xx*\xs-\xd,1*\ys) node[above] {$\scriptstyle 1$};
				\filldraw[black] (\xx*\xs-\xd,1*\ys) circle (2pt);
				\draw[thick,dashed] (\xx*\xs+\xd,0) -- (\xx*\xs+\xd,1*\ys) node[above] {$\scriptstyle 1$};
				\filldraw[fill=white, draw=black] (\xx*\xs+\xd,1*\ys) circle (2pt);
				
				\node[] at (4.7*\xs,0.5*\ys) {$\boldsymbol{\dots}$};
			\end{tikzpicture}
			\caption{$\alpha_0=2, \alpha_1=0.6, \alpha_2=1.2$}
			\label{fig:wfig5}
		\end{subfigure}
		\begin{subfigure}{0.5\textwidth}
			\centering
			\begin{tikzpicture}
				\newcommand\xd{0.2} 
				\newcommand\xs{1.3} 
				\newcommand\ys{1} 
				\path[draw=none,use as bounding box] (-1*\xs,-0.5*\ys) rectangle (5*\xs,3*\ys); 
				\draw[ultra thick] (-0.7*\xs, 0) -- (5*\xs, 0);
				
				
				\newcommand\xx{0} 
				\node[below] at (\xx*\xs,0) {\xx};
				\draw[thick] (\xx*\xs-\xd,0) -- (\xx*\xs-\xd,2*\ys) node[above] {$\scriptstyle \varphi_0$};
				\filldraw[black] (\xx*\xs-\xd,2*\ys) circle (2pt);
				\draw[thick,dashed] (\xx*\xs+\xd,0) -- (\xx*\xs+\xd,0.5*\ys) node[above] {$\frac{1}{\varphi_0}$};
				\filldraw[fill=white, draw=black] (\xx*\xs+\xd,0.5*\ys) circle (2pt);
				
				\renewcommand\xx{1} 
				\node[below] at (\xx*\xs,0) {\xx};
				\draw[thick] (\xx*\xs-\xd,0) -- (\xx*\xs-\xd,1*\ys) node[above] {$\scriptstyle 1$};
				\filldraw[black] (\xx*\xs-\xd,1*\ys) circle (2pt);
				\draw[thick,dashed] (\xx*\xs+\xd,0) -- (\xx*\xs+\xd,1*\ys) node[above] {$\scriptstyle 1$};
				\filldraw[fill=white, draw=black] (\xx*\xs+\xd,1*\ys) circle (2pt);
				
				\renewcommand\xx{2} 
				\node[below] at (\xx*\xs,0) {\xx};
				\draw[thick] (\xx*\xs-\xd,0) -- (\xx*\xs-\xd,1*\ys) node[above] {$\scriptstyle 1$};
				\filldraw[black] (\xx*\xs-\xd,1*\ys) circle (2pt);
				\draw[thick,dashed] (\xx*\xs+\xd,0) -- (\xx*\xs+\xd,1*\ys) node[above] {$\scriptstyle 1$};
				\filldraw[fill=white, draw=black] (\xx*\xs+\xd,1*\ys) circle (2pt);
				
				\renewcommand\xx{3} 
				\node[below] at (\xx*\xs,0) {\xx};
				\draw[thick] (\xx*\xs-\xd,0) -- (\xx*\xs-\xd,1*\ys) node[above] {$\scriptstyle 1$};
				\filldraw[black] (\xx*\xs-\xd,1*\ys) circle (2pt);
				\draw[thick,dashed] (\xx*\xs+\xd,0) -- (\xx*\xs+\xd,1*\ys) node[above] {$\scriptstyle 1$};
				\filldraw[fill=white, draw=black] (\xx*\xs+\xd,1*\ys) circle (2pt);
				
				\renewcommand\xx{4} 
				\node[below] at (\xx*\xs,0) {\xx};
				\draw[thick] (\xx*\xs-\xd,0) -- (\xx*\xs-\xd,1*\ys) node[above] {$\scriptstyle 1$};
				\filldraw[black] (\xx*\xs-\xd,1*\ys) circle (2pt);
				\draw[thick,dashed] (\xx*\xs+\xd,0) -- (\xx*\xs+\xd,1*\ys) node[above] {$\scriptstyle 1$};
				\filldraw[fill=white, draw=black] (\xx*\xs+\xd,1*\ys) circle (2pt);
				
				\node[] at (4.7*\xs,0.5*\ys) {$\boldsymbol{\dots}$};
			\end{tikzpicture}
			\caption{$\varphi_0=2$}
			\label{fig:wfig6}
			\begin{tikzpicture}
				\newcommand\xd{0.2} 
				\newcommand\xs{1.3} 
				\newcommand\ys{1} 
				\path[draw=none,use as bounding box] (-1*\xs,-0.5*\ys) rectangle (5*\xs,3*\ys); 
				\draw[ultra thick] (-0.7*\xs, 0) -- (5*\xs, 0);
				
				\newcommand\xx{0} 
				\node[below] at (\xx*\xs,0) {\xx};
				\draw[thick] (\xx*\xs-\xd,0) -- (\xx*\xs-\xd,2*\ys) node[above] {$\scriptstyle \varphi_1$};
				\filldraw[black] (\xx*\xs-\xd,2*\ys) circle (2pt);
				\draw[thick,dashed] (\xx*\xs+\xd,0) -- (\xx*\xs+\xd,1*\ys) node[above] {$\scriptstyle 1$};
				\filldraw[fill=white, draw=black] (\xx*\xs+\xd,1*\ys) circle (2pt);
				
				\renewcommand\xx{1} 
				\node[below] at (\xx*\xs,0) {\xx};
				\draw[thick] (\xx*\xs-\xd,0) -- (\xx*\xs-\xd,2*\ys) node[above] {$\scriptstyle \varphi_1$};
				\filldraw[black] (\xx*\xs-\xd,2*\ys) circle (2pt);
				\draw[thick,dashed] (\xx*\xs+\xd,0) -- (\xx*\xs+\xd,0.5*\ys) node[above] {$\frac{1}{\varphi_1}$};
				\filldraw[fill=white, draw=black] (\xx*\xs+\xd,0.5*\ys) circle (2pt);
				
				\renewcommand\xx{2} 
				\node[below] at (\xx*\xs,0) {\xx};
				\draw[thick] (\xx*\xs-\xd,0) -- (\xx*\xs-\xd,1*\ys) node[above] {$\scriptstyle 1$};
				\filldraw[black] (\xx*\xs-\xd,1*\ys) circle (2pt);
				\draw[thick,dashed] (\xx*\xs+\xd,0) -- (\xx*\xs+\xd,1*\ys) node[above] {$\scriptstyle 1$};
				\filldraw[fill=white, draw=black] (\xx*\xs+\xd,1*\ys) circle (2pt);
				
				\renewcommand\xx{3} 
				\node[below] at (\xx*\xs,0) {\xx};
				\draw[thick] (\xx*\xs-\xd,0) -- (\xx*\xs-\xd,1*\ys) node[above] {$\scriptstyle 1$};
				\filldraw[black] (\xx*\xs-\xd,1*\ys) circle (2pt);
				\draw[thick,dashed] (\xx*\xs+\xd,0) -- (\xx*\xs+\xd,1*\ys) node[above] {$\scriptstyle 1$};
				\filldraw[fill=white, draw=black] (\xx*\xs+\xd,1*\ys) circle (2pt);
				
				\renewcommand\xx{4} 
				\node[below] at (\xx*\xs,0) {\xx};
				\draw[thick] (\xx*\xs-\xd,0) -- (\xx*\xs-\xd,1*\ys) node[above] {$\scriptstyle 1$};
				\filldraw[black] (\xx*\xs-\xd,1*\ys) circle (2pt);
				\draw[thick,dashed] (\xx*\xs+\xd,0) -- (\xx*\xs+\xd,1*\ys) node[above] {$\scriptstyle 1$};
				\filldraw[fill=white, draw=black] (\xx*\xs+\xd,1*\ys) circle (2pt);
				
				\node[] at (4.7*\xs,0.5*\ys) {$\boldsymbol{\dots}$};
			\end{tikzpicture}
			\caption{$\varphi_1=2$}
			\label{fig:wfig7}
			\begin{tikzpicture}
				\newcommand\xd{0.2} 
				\newcommand\xs{1.3} 
				\newcommand\ys{1} 
				\path[draw=none,use as bounding box] (-1*\xs,-0.5*\ys) rectangle (5*\xs,3*\ys); 
				\draw[ultra thick] (-0.7*\xs, 0) -- (5*\xs, 0);
				
				\newcommand\xx{0} 
				\node[below] at (\xx*\xs,0) {\xx};
				\draw[thick] (\xx*\xs-\xd,0) -- (\xx*\xs-\xd,2*\ys) node[above] {$\scriptstyle \varphi_2$};
				\filldraw[black] (\xx*\xs-\xd,2*\ys) circle (2pt);
				\draw[thick,dashed] (\xx*\xs+\xd,0) -- (\xx*\xs+\xd,1*\ys) node[above] {$\scriptstyle 1$};
				\filldraw[fill=white, draw=black] (\xx*\xs+\xd,1*\ys) circle (2pt);
				
				\renewcommand\xx{1} 
				\node[below] at (\xx*\xs,0) {\xx};
				\draw[thick] (\xx*\xs-\xd,0) -- (\xx*\xs-\xd,2*\ys) node[above] {$\scriptstyle \varphi_2$};
				\filldraw[black] (\xx*\xs-\xd,2*\ys) circle (2pt);
				\draw[thick,dashed] (\xx*\xs+\xd,0) -- (\xx*\xs+\xd,1*\ys) node[above] {$\scriptstyle 1$};
				\filldraw[fill=white, draw=black] (\xx*\xs+\xd,1*\ys) circle (2pt);
				
				\renewcommand\xx{2} 
				\node[below] at (\xx*\xs,0) {\xx};
				\draw[thick] (\xx*\xs-\xd,0) -- (\xx*\xs-\xd,2*\ys) node[above] {$\scriptstyle \varphi_2$};
				\filldraw[black] (\xx*\xs-\xd,2*\ys) circle (2pt);
				\draw[thick,dashed] (\xx*\xs+\xd,0) -- (\xx*\xs+\xd,0.5*\ys) node[above] {$\frac{1}{\varphi_2}$};
				\filldraw[fill=white, draw=black] (\xx*\xs+\xd,0.5*\ys) circle (2pt);
				
				\renewcommand\xx{3} 
				\node[below] at (\xx*\xs,0) {\xx};
				\draw[thick] (\xx*\xs-\xd,0) -- (\xx*\xs-\xd,1*\ys) node[above] {$\scriptstyle 1$};
				\filldraw[black] (\xx*\xs-\xd,1*\ys) circle (2pt);
				\draw[thick,dashed] (\xx*\xs+\xd,0) -- (\xx*\xs+\xd,1*\ys) node[above] {$\scriptstyle 1$};
				\filldraw[fill=white, draw=black] (\xx*\xs+\xd,1*\ys) circle (2pt);
				
				\renewcommand\xx{4} 
				\node[below] at (\xx*\xs,0) {\xx};
				\draw[thick] (\xx*\xs-\xd,0) -- (\xx*\xs-\xd,1*\ys) node[above] {$\scriptstyle 1$};
				\filldraw[black] (\xx*\xs-\xd,1*\ys) circle (2pt);
				\draw[thick,dashed] (\xx*\xs+\xd,0) -- (\xx*\xs+\xd,1*\ys) node[above] {$\scriptstyle 1$};
				\filldraw[fill=white, draw=black] (\xx*\xs+\xd,1*\ys) circle (2pt);
				
				\node[] at (4.7*\xs,0.5*\ys) {$\boldsymbol{\dots}$};
			\end{tikzpicture}
			\caption{$\varphi_2=2$}
			\label{fig:wfig8}
			\begin{tikzpicture}
				\newcommand\xd{0.2} 
				\newcommand\xs{1.3} 
				\newcommand\ys{1} 
				\path[draw=none,use as bounding box] (-1*\xs,-0.5*\ys) rectangle (5*\xs,3*\ys); 
				\draw[ultra thick] (-0.7*\xs, 0) -- (5*\xs, 0);
				
				\newcommand\xx{0} 
				\node[below] at (\xx*\xs,0) {\xx};
				\draw[thick] (\xx*\xs-\xd,0) -- (\xx*\xs-\xd,2*\ys) node[above] {$\scriptstyle \varphi_0 \varphi_2$};
				\filldraw[black] (\xx*\xs-\xd,2*\ys) circle (2pt);
				\draw[thick,dashed] (\xx*\xs+\xd,0) -- (\xx*\xs+\xd,0.75*\ys) node[above] {$\frac{1}{\varphi_0}$};
				\filldraw[fill=white, draw=black] (\xx*\xs+\xd,0.75*\ys) circle (2pt);
				
				\renewcommand\xx{1} 
				\node[below] at (\xx*\xs,0) {\xx};
				\draw[thick] (\xx*\xs-\xd,0) -- (\xx*\xs-\xd,1.5*\ys) node[above] {$\scriptstyle \varphi_2$};
				\filldraw[black] (\xx*\xs-\xd,1.5*\ys) circle (2pt);
				\draw[thick,dashed] (\xx*\xs+\xd,0) -- (\xx*\xs+\xd,1*\ys) node[above] {$\scriptstyle 1$};
				\filldraw[fill=white, draw=black] (\xx*\xs+\xd,1*\ys) circle (2pt);
				
				\renewcommand\xx{2} 
				\node[below] at (\xx*\xs,0) {\xx};
				\draw[thick] (\xx*\xs-\xd,0) -- (\xx*\xs-\xd,1.5*\ys) node[above] {$\scriptstyle \varphi_2$};
				\filldraw[black] (\xx*\xs-\xd,1.5*\ys) circle (2pt);
				\draw[thick,dashed] (\xx*\xs+\xd,0) -- (\xx*\xs+\xd,0.67*\ys) node[above] {$\frac{1}{\varphi_2}$};
				\filldraw[fill=white, draw=black] (\xx*\xs+\xd,0.67*\ys) circle (2pt);
				
				\renewcommand\xx{3} 
				\node[below] at (\xx*\xs,0) {\xx};
				\draw[thick] (\xx*\xs-\xd,0) -- (\xx*\xs-\xd,1*\ys) node[above] {$\scriptstyle 1$};
				\filldraw[black] (\xx*\xs-\xd,1*\ys) circle (2pt);
				\draw[thick,dashed] (\xx*\xs+\xd,0) -- (\xx*\xs+\xd,1*\ys) node[above] {$\scriptstyle 1$};
				\filldraw[fill=white, draw=black] (\xx*\xs+\xd,1*\ys) circle (2pt);
				
				\renewcommand\xx{4} 
				\node[below] at (\xx*\xs,0) {\xx};
				\draw[thick] (\xx*\xs-\xd,0) -- (\xx*\xs-\xd,1*\ys) node[above] {$\scriptstyle 1$};
				\filldraw[black] (\xx*\xs-\xd,1*\ys) circle (2pt);
				\draw[thick,dashed] (\xx*\xs+\xd,0) -- (\xx*\xs+\xd,1*\ys) node[above] {$\scriptstyle 1$};
				\filldraw[fill=white, draw=black] (\xx*\xs+\xd,1*\ys) circle (2pt);
				
				\node[] at (4.7*\xs,0.5*\ys) {$\boldsymbol{\dots}$};
			\end{tikzpicture}
			\caption{$\varphi_0=1.3, \varphi_2=1.5$}
			\label{fig:wfig9}
			\begin{tikzpicture}
				\newcommand\xd{0.2} 
				\newcommand\xs{1.3} 
				\newcommand\ys{1} 
				\path[draw=none,use as bounding box] (-1*\xs,-0.5*\ys) rectangle (5*\xs,3*\ys); 
				\draw[ultra thick] (-0.7*\xs, 0) -- (5*\xs, 0);
				
				\newcommand\xx{0} 
				\node[below] at (\xx*\xs,0) {\xx};
				\draw[thick] (\xx*\xs-\xd,0) -- (\xx*\xs-\xd,2*\ys) node[above] {$\scriptstyle \varphi_0 \varphi_1 \varphi_2$};
				\filldraw[black] (\xx*\xs-\xd,2*\ys) circle (2pt);
				\draw[thick,dashed] (\xx*\xs+\xd,0) -- (\xx*\xs+\xd,0.3*\ys) node[above] {$\frac{1}{\varphi_0}$};
				\filldraw[fill=white, draw=black] (\xx*\xs+\xd,0.3*\ys) circle (2pt);
				
				\renewcommand\xx{1} 
				\node[below] at (\xx*\xs,0) {\xx};
				\draw[thick] (\xx*\xs-\xd,0) -- (\xx*\xs-\xd,0.6*\ys) node[above] {$\scriptstyle \varphi_1 \varphi_2$};
				\filldraw[black] (\xx*\xs-\xd,0.6*\ys) circle (2pt);
				\draw[thick,dashed] (\xx*\xs+\xd,0) -- (\xx*\xs+\xd,2*\ys) node[above] {$\frac{1}{\varphi_1}$};
				\filldraw[fill=white, draw=black] (\xx*\xs+\xd,2*\ys) circle (2pt);
				
				\renewcommand\xx{2} 
				\node[below] at (\xx*\xs,0) {\xx};
				\draw[thick] (\xx*\xs-\xd,0) -- (\xx*\xs-\xd,1.2*\ys) node[above] {$\scriptstyle \varphi_2$};;
				\filldraw[black] (\xx*\xs-\xd,1.2*\ys) circle (2pt);
				\draw[thick,dashed] (\xx*\xs+\xd,0) -- (\xx*\xs+\xd,0.83*\ys) node[above] {$\frac{1}{\varphi_2}$};
				\filldraw[fill=white, draw=black] (\xx*\xs+\xd,0.83*\ys) circle (2pt);
				
				\renewcommand\xx{3} 
				\node[below] at (\xx*\xs,0) {\xx};
				\draw[thick] (\xx*\xs-\xd,0) -- (\xx*\xs-\xd,1*\ys) node[above] {$\scriptstyle 1$};
				\filldraw[black] (\xx*\xs-\xd,1*\ys) circle (2pt);
				\draw[thick,dashed] (\xx*\xs+\xd,0) -- (\xx*\xs+\xd,1*\ys) node[above] {$\scriptstyle 1$};
				\filldraw[fill=white, draw=black] (\xx*\xs+\xd,1*\ys) circle (2pt);
				
				\renewcommand\xx{4} 
				\node[below] at (\xx*\xs,0) {\xx};
				\draw[thick] (\xx*\xs-\xd,0) -- (\xx*\xs-\xd,1*\ys) node[above] {$\scriptstyle 1$};
				\filldraw[black] (\xx*\xs-\xd,1*\ys) circle (2pt);
				\draw[thick,dashed] (\xx*\xs+\xd,0) -- (\xx*\xs+\xd,1*\ys) node[above] {$\scriptstyle 1$};
				\filldraw[fill=white, draw=black] (\xx*\xs+\xd,1*\ys) circle (2pt);
				
				\node[] at (4.7*\xs,0.5*\ys) {$\boldsymbol{\dots}$};
			\end{tikzpicture}
			\caption{$\varphi_0=3.3, \varphi_1=0.5, \varphi_2=1.2$}
			\label{fig:wfig10}
		\end{subfigure}
		\caption{Several possible weight-function patterns for the type 1 (left part) and type 2 (right part) stationary models with $\mathcal{F}=\left\{ 0 \right\},\left\{ 1\right\},\left\{ 2\right\},\left\{ 0,2 \right\}$, and $\left\{0,1,2 \right\}$. The solid lines (\rule[0.5ex]{0.6cm}{0.5pt}) indicate $f(n,\cdot)$, while the dashed lines  ($\scriptstyle \text{- - -}$) indicate $g(n,\cdot)=f(n+1,\cdot)/f(n,\cdot)$.}
		\label{fig:weight_functions}
	\end{figure}
}

The ratio of type 2 probabilities for consecutive values typically differs from the ratio of base probabilities. For consecutive values at $n$ and $n+1$, the type 2 probability ratio is
\begin{equation}
	\lambda_{n}=g(n,\boldsymbol{\varphi})\lambda_{n}^b,
	\label{eq:prob_ratio}
\end{equation}
where
\begin{equation}
	g(n,\boldsymbol{\varphi})=\prod_{i=0}^{m} \varphi_{n_i}^{-\textbf{1}_{n_i}(n)},
	\label{eq:g_type_II}
\end{equation}
$\lambda_n=p(n+1,\boldsymbol{\theta},\boldsymbol{\varphi})/p(n,\boldsymbol{\theta},\boldsymbol{\varphi})$,
$g(n,\boldsymbol{\varphi})=f(n+1,\boldsymbol{\varphi})/f(n,\boldsymbol{\varphi})$,  $\lambda_{n}^b=b(n+1,\boldsymbol{\theta})/b(n,\boldsymbol{\theta})$, and $\textbf{1}_{n_i}(n)=u_{n_i}(n)-u_{n_i}(n+1)$. In contrast to $f(n,\boldsymbol{\varphi})$ \eqref{eq:f2}, $g(n,\boldsymbol{\varphi})$ does not represent the inflation-deflation behavior described in \eqref{eq:new_probII}. When $\varphi_{n_0}=...=\varphi_{n_m}=1$, this relationship reduces to $\lambda_{n}^b$.

Taking the logarithm of both sides of \eqref{eq:prob_ratio} gives
\begin{equation}
	\log \lambda_n= \log \lambda_n^b - \sum_{i=0}^{m} \textbf{1}_{n_i}(n) \log \varphi_{n_i}.
	\label{eq:log_prob_ratio}
\end{equation}
If $g(n,\varphi_1)=\varphi_1^{-\textbf{1}_1(n)}$ (see Figure \ref{fig:wfig7}), and $\lambda_n^b= \lambda/(n+1)^\nu$ (CMP), then $\lambda_0=\lambda, \lambda_1=\lambda/(\varphi_1 2^\nu), \lambda_2=\lambda/3^\nu, \lambda_3=\lambda/4^\nu,...$, so \eqref{eq:log_prob_ratio} simplifies to $\log \lambda_n= \log \lambda - \nu \log (n+1) - \textbf{1}_1(n) \log \varphi_1$. \citet[p.~3284]{Lerd25} recently introduced this ratio regression model to allow for one-inflation. Since $f(n,\varphi_1)=\varphi_{1}^{u_1(n)}$ (see Figure \ref{fig:wfig7}), the distribution is classified as zero-one-inflated rather than one-inflated. The one-inflation ratio regression model is $\log \lambda_n=\log \lambda - \nu \log (n+1) +(\textbf{1}_1(n+1)-\textbf{1}_1(n)) \log \alpha_1$, which belongs to the type 1 family (see Figure \ref{fig:wfig2}).

Generalizing the distribution of Lerdsuwansri et al. to cases where $\mathcal{F}=\left\{ q \neq 0\right\}$ is straightforward, as demonstrated below:
\begin{equation}
	p(n,\boldsymbol{\theta},\varphi) = \frac{\varphi^{u_q(n)} b(n,\boldsymbol{\theta})}{1+(\varphi-1)\sum_{k=0}^{q} b(k,\boldsymbol{\theta})}.
	\label{eq:prob_lerd}
\end{equation}
Assume $b(n,\boldsymbol{\theta})$ follows a Poisson distribution. When $\varphi=1$, the distribution remains Poisson. However, when $\varphi=1+\left ( \lambda-q \right )/\sum_{k=0}^{q-1}(q-k)\frac{\lambda^k e^{-\lambda}}{k!}$ for $\lambda \neq q$, the distribution is no longer Poisson, despite having both mean and variance equal to $q$ (see \Cref{lam_al_relation} and Figure \ref{fig:his_equi}).

Figure \ref{fig:weight_functions} illustrates how the weight functions $f(n,\cdot)$ and $g(n,\cdot)$ change as $\mathcal{F}$ varies to reflect different deviations from a base distribution. Figure \ref{fig:wfig8} demonstrates that the type 2 model with $\mathcal{F}=\left\{ 2 \right\}$ can account for zero-to-two inflation-deflation, while Figure \ref{fig:wfig3} shows that the type 1 model with $\mathcal{F}=\left\{ 2 \right\}$ only accounts for two inflation-deflation. Both stationary distributions above share the same base and number of parameters. Figures \ref{fig:wfig1} and \ref{fig:wfig6} present the weight functions $f(n,\cdot)$ of  \citet[p.~226]{Haslett22} and $g(n,\cdot)$ of \citet[p.~256]{Puig24}, respectively. Figures \ref{fig:wfig5} and \ref{fig:wfig10} indicate that type 1 and type 2 stationary models with $\mathcal{F}=\left\{ 0,1,...,q\right\}$ are identical in shape, as their weight functions match. Therefore, the type 2 model serves as an alternative parameterization of the mixture model \eqref{eq:mix_zk}.

For independent and identically distributed cases, the exponential-family distribution of Haslett et al \eqref{eq:haslett} has a key property: the sample mean is the maximum likelihood estimator (MLE) of the population mean, as seen in the Poisson and NB distributions \citep[p.~227]{Haslett22}. Because the type 1 and type 2 models include this distribution as a special case, they likely share this property.

\subsection{Exponential family}
\label{subsec:expo_family}
Before examining this possibility, we present the definition of an exponential family of count distributions \citep[p.~54]{Bickel15}. 

\begin{definition}
	\label{expo_family}
	A collection of count distributions
	\begin{equation*}
		\left\{ q(n,\boldsymbol{\eta}), n \in \left\{ 0,1,2,...\right\}, \boldsymbol{\eta} \in \mathcal{E} =\left\{ \boldsymbol{\eta} \in R^l : A(\boldsymbol{\eta})<\infty  \right\} \right\}
		\label{eq:collection}
	\end{equation*}
    of the form 
	\begin{equation*}
	q(n,\boldsymbol{\eta})=h(n) \text{exp}\left\{ \textbf{T}(n)\boldsymbol{\eta}^T-A(\boldsymbol{\eta})\right\}
	\label{eq:expo_form}
	\end{equation*}
	is a canonical (natural) $l$-parameter exponential family, where $h(n)$ is a real function, $\textbf{T}(n)$ is a $1 \times l$ vector of sufficient statistics, $\boldsymbol{\eta}$ is a $1 \times l$ canonical parameter vector, $A(\boldsymbol{\eta})=\log\sum_{n=0}^{\infty} h(n)\text{exp}[\textbf{T}(n) \boldsymbol{\eta}^T]$ is a cumulant (normalizing) function, and $\mathcal{E}$ is a canonical parameter space.
\end{definition}

By substituting $\lambda_n$ for the geometric, Poisson, NB, HP, and CMP distributions (see Table \ref{table:table_ratio}) into \eqref{eq:p0} and \eqref{eq:pn}, we obtain their probability mass functions (see Table \ref{table:table_prob}): 
\begin{equation} 
	p(n,\lambda)=\text{exp}\left\{ n\log\lambda+\log\left(1- \lambda \right) \right\},
	\label{eq:geometric}
\end{equation}
\begin{equation}
p(n,\lambda)=\frac{1}{n!}\text{exp}\left\{ n\log\lambda-\lambda \right\},
	\label{eq:poisson}
\end{equation}
\begin{equation} 
	p(n,\lambda,r)=\frac{\left( r \right)_n}{n!}\text{exp}\left\{ n\log\left( \frac{\lambda}{r} \right)+r\log\left(1- \frac{\lambda}{r} \right) \right\},
	\label{eq:nb}
\end{equation}
\begin{equation} 
	p(n,\lambda,\tau)=\frac{1}{\left( \tau \right)_n}\text{exp}\left\{ n\log\lambda-\log z\left(\lambda,\tau \right) \right\},
	\label{eq:hp}
\end{equation}
and
\begin{equation} 
	p(n,\lambda,\nu)=\text{exp}\left\{ n\log \lambda-\nu\log n!-\log z\left(\lambda,\nu \right) \right\}, 
	\label{eq:cmp}
\end{equation}
where $z(\lambda,\tau)=\sum_{i=0}^{\infty }\lambda^i/(\tau)_i$, and $z(\lambda,\nu)=\sum_{i=0}^{\infty }\lambda^i/(i!)^\nu$. These stationary distributions are members of the exponential family, with corresponding functions $h,\, \textbf{T},\, \boldsymbol{\eta}$, and $A$ listed in Table \ref{table:table_expo_family}. The geometric, Poisson, NB, and HP are one-parameter exponential family distributions, while the CMP is a two-parameter exponential family distribution. Table \ref{table:table_expo_family} indicates that for the geometric, Poisson, NB, HP, and CMP, $T_1(n)$ equals $n$. The Poisson, HP, and CMP have usual parameter spaces $0 < \lambda < \infty $ and canonical parameter spaces $-\infty <\eta_1<\infty$. The NB has a usual parameter space $0 < \lambda/r < 1 $ and a canonical parameter space $-\infty <\eta_1<0$. The geometric is a special case of the NB where $r=1$. When $\lambda=0$, these distributions are degenerate and concentrated at zero, so they are excluded from the exponential family. The PL distribution does not have the form of an exponential family as defined in \Cref{expo_family} because its $\log (1+\lambda+n\lambda)+n\log \lambda$ cannot be factored into separate functions of $\lambda$ and $n$.

For the geometric, Poisson, NB, HP, and CMP distributions, each $\lambda_n$ shares the common factor $\lambda=\gamma/\mu$ (see Table \ref{table:table_ratio}). Consequently, these stationary distributions have $T_1(n)=n$, which aids parameter estimation. The MLE of the population mean is the sample mean. However, because no components of $\textbf{T}(n)$ equal $n^2$ for these distributions, the MLE of the population variance is not equal to the sample variance. To address this, the birth-death-rate ratio sequences can be adjusted. For example, the modified Poisson ratio sequence of \citet[p.~253, eq.(7)]{Puig24} is $\lambda_n=e^{-(2n+1)\tau}\lambda/(n+1)$. With $f(n,\tau)=e^{-n^2\tau}$, and $b(n,\lambda)=\lambda^n e^{-\lambda}/n!$ (Poisson), this stationary distribution has $\textbf{T}(n)=[n,n^2]$.

By making modifications to $\lambda_n$ of the geometric, Poisson, NB, HP, and CMP distributions, the following proposition shows that the type 1 and type 2 models also belong to the exponential family. 

\begin{proposition}
	\label{ZkIS_expofamily}
	Suppose $b(n,\boldsymbol{\theta})$ form a canonical $l$-parameter exponential family with
	\begin{equation*}
		\begin{matrix}
			h(n), \\
			\textbf{T}_b (n)=[T_1(n),...,T_l(n)], \\
			\boldsymbol{\eta}_b=[\eta_1,...,\eta_l] \in \mathcal{E}_b,\, \text{and}  \\
			A_b (\boldsymbol{\eta}_b)=\log z_b (\boldsymbol{\eta}_b).
		\end{matrix}
	\end{equation*}
	Then \eqref{eq:new_inflat} form a canonical $(l+m+1)$-parameter exponential family with
	\begin{equation*}
		\begin{matrix}
			h(n), \\
			\textbf{T} (n)=[\textbf{T}_b(n),T_{l+1}(n)=\textbf{1}_{n_0}(n),...,T_{l+m+1}(n)=\textbf{1}_{n_m}(n)], \\
			\boldsymbol{\eta}=[\boldsymbol{\eta}_b,\eta_{l+1}=\log \alpha_{n_0},...,\eta_{l+m+1}=\log \alpha_{n_m} ] \in \mathcal{E}_b \times  R \times ...\times R,\, \text{and} \\
			A (\boldsymbol{\eta})=A_b (\boldsymbol{\eta}_b) + \log z (\boldsymbol{\eta}),
		\end{matrix}
	\end{equation*}
	and \eqref{eq:new_probII} form a canonical $(l+m+1)$-parameter exponential family with
	\begin{equation*}
		\begin{matrix}
			h(n), \\
			\textbf{T} (n)=[\textbf{T}_b(n),T_{l+1}(n)=u_{n_0}(n),...,T_{l+m+1}(n)=u_{n_m}(n)], \\
			\boldsymbol{\eta}=[\boldsymbol{\eta}_b,\eta_{l+1}=\log \varphi_{n_0},...,\eta_{l+m+1}=\log \varphi_{n_m} ] \in \mathcal{E}_b \times  R \times ...\times R,\, \text{and} \\
			A (\boldsymbol{\eta})=A_b (\boldsymbol{\eta}_b) + \log z (\boldsymbol{\eta}).
		\end{matrix}
	\end{equation*}
\end{proposition}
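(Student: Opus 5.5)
The plan is to substitute the canonical form of the base distribution into \eqref{eq:new_inflat} and \eqref{eq:new_probII} and rewrite the weight functions $f(n,\cdot)$ as exponentials that are linear in the new statistics. By hypothesis, $b(n,\boldsymbol{\theta})=h(n)\exp\{\textbf{T}_b(n)\boldsymbol{\eta}_b^T-A_b(\boldsymbol{\eta}_b)\}$.

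For type 1, write \eqref{eq:f1} as
\[
f(n,\boldsymbol{\alpha})=\prod_{i=0}^{m}\alpha_{n_i}^{\textbf{1}_{n_i}(n)}=\exp\Big\{\sum_{i=0}^{m}\textbf{1}_{n_i}(n)\log\alpha_{n_i}\Big\}.
\]
Then
\[
p(n,\boldsymbol{\theta},\boldsymbol{\alpha})=h(n)\exp\{\textbf{T}(n)\boldsymbol{\eta}^T-A_b(\boldsymbol{\eta}_b)-\log z\},
\]
with $\textbf{T}(n)$ and $\boldsymbol{\eta}$ exactly as in the statement. The normalizer \eqref{eq:z} is a function of $(\boldsymbol{\eta}_b,\eta_{l+1},\dots,\eta_{l+m+1})$ alone, because $\alpha_{n_i}=e^{\eta_{l+1+i}}$ and $b(n_i,\boldsymbol{\theta})$ depends only on $\boldsymbol{\eta}_b$. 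So $A(\boldsymbol{\eta})=A_b(\boldsymbol{\eta}_b)+\log z(\boldsymbol{\eta})$.

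Type 2 is identical, except that \eqref{eq:f2} gives $\exp\{\sum_i u_{n_i}(n)\log\varphi_{n_i}\}$, so the new statistics are the step functions $u_{n_i}(n)$. Here \eqref{eq:zII} is again a finite expression in the $\varphi$'s and finitely many base probabilities.

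I would then check that $A$ agrees with the cumulant function of \Cref{expo_family}, namely $A(\boldsymbol{\eta})=\log\sum_n h(n)\exp\{\textbf{T}(n)\boldsymbol{\eta}^T\}$. To do this I would split the sum over $n$ into $n\in\mathcal{F}$ (respectively $n\le n_m$) and the rest. For type 1,
\[
\sum_n h(n)e^{\textbf{T}(n)\boldsymbol{\eta}^T}=e^{A_b}\Big(1+\sum_i(\alpha_{n_i}-1)b(n_i)\Big)=e^{A_b}z.
\]
This confirms both the formula and that $z>0$ whenever all $\alpha_{n_i}>0$. The type 2 case regroups the sum over $k\le n_m$ into blocks $n_{i-1}<k\le n_i$. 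On each block the weight is $\prod_{j\ge i}\varphi_{n_j}$, which reproduces \eqref{eq:zII}.

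The main obstacle is identifying the canonical parameter space, that is, showing $\mathcal{E}=\mathcal{E}_b\times R^{m+1}$. One direction is easy: if $\boldsymbol{\eta}_b\in\mathcal{E}_b$, the modified sum differs from the base sum in finitely many terms, each finite, so $A<\infty$ for every real value of the added coordinates. This mirrors \Cref{same ratio} and the ratio-test argument for \eqref{eq:ratiosII}. Conversely, if $A(\boldsymbol{\eta})<\infty$, removing those finitely many terms leaves a tail common to both sums, so $A_b(\boldsymbol{\eta}_b)<\infty$. Finally, I would note that the components of $\textbf{T}$ are affinely independent, so the representation is minimal and genuinely $(l+m+1)$-parameter. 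If some $T_j(n)$ of the base coincided with an indicator or a step function on the support, this would need an extra assumption, which I would state explicitly.
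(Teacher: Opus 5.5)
Your proposal is correct and is essentially the paper's own argument. The paper gives no separate proof; it justifies the proposition in the discussion following it by writing \eqref{eq:f1} and \eqref{eq:f2} as $\exp\{\sum_i \textbf{1}_{n_i}(n)\log\alpha_{n_i}\}$ and $\exp\{\sum_i u_{n_i}(n)\log\varphi_{n_i}\}$ and absorbing $\log z$ into $A$, exactly as you do.

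Your finite-modification check that $\mathcal{E}=\mathcal{E}_b\times R^{m+1}$ and your minimality remark go beyond what the paper verifies. The minimality part is not needed, since Definition~\ref{expo_family} imposes no rank condition. The exceptional case you flag is, stated precisely, that some nontrivial linear combination of the base statistics lies, up to an additive constant, in the span of the added statistics.
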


\afterpage{
	\begin{table}[H]
		\setlength{\tabcolsep}{0.2em}	
		\centering
		\caption{Characteristics of some type 1 and type 2 stationary distributions in the exponential family. The normalizing constants, denoted by $z(\cdot)$, are defind by \eqref{eq:z} and \eqref{eq:zII}.}    
		\label{table:table_expo_family}	
		\makebox[\textwidth]{\begin{tabular}{ >{\raggedright}p{3.2cm} >{\raggedright}p{6.1cm}	>{\raggedright\arraybackslash}p{6.1cm}}	
						\toprule 
						Base & Type 1 & Type 2\\				
						\midrule
						\underline{Geometric} & & \\						
						$\medmuskip=0mu \thickmuskip=0mu h(n)=1$ & $\medmuskip=0mu \thickmuskip=0mu h(n)=1$ & $\medmuskip=0mu \thickmuskip=0mu h(n)=1$ \\			
						$\medmuskip=0mu \thickmuskip=0mu \textbf{T}_b(n)=n$ & $\medmuskip=0mu \thickmuskip=0mu \textbf{T}(n)=[n,\textbf{1}_{n_0}(n),...,\textbf{1}_{n_m}(n)]$ & $\medmuskip=0mu \thickmuskip=0mu \textbf{T}(n)=[n,u_{n_0}(n),...,u_{n_m}(n)]$ \\
						$\medmuskip=0mu \thickmuskip=0mu \boldsymbol{\eta}_b=\log \lambda$ & $\medmuskip=0mu \thickmuskip=0mu \boldsymbol{\eta}=[\boldsymbol{\eta}_b,\log \alpha_{n_0},...,\log \alpha_{n_m} ]$ & $\medmuskip=0mu \thickmuskip=0mu \boldsymbol{\eta}=[\boldsymbol{\eta}_b,\log \varphi_{n_0},...,\log \varphi_{n_m} ]$\\
						$\medmuskip=0mu \thickmuskip=0mu -\log \left ( 1-e^{\eta_1} \right )$ & $\medmuskip=0mu \thickmuskip=0muA(\boldsymbol{\eta})=A_b(\boldsymbol{\eta_b}) + \log z (\boldsymbol{\eta})$ &  $\medmuskip=0mu \thickmuskip=0mu A(\boldsymbol{\eta})=A_b(\boldsymbol{\eta_b}) + \log z (\boldsymbol{\eta})$\\
						$\medmuskip=0mu \thickmuskip=0mu \mathcal{E}_b= (-\infty,0)$ & $\medmuskip=0mu \thickmuskip=0mu \mathcal{E}=\mathcal{E}_b \times  R \times ...\times R$ & $\medmuskip=0mu \thickmuskip=0mu \mathcal{E}=\mathcal{E}_b \times  R \times ...\times R$
						\\
						
						\underline{Poisson} & & \\						
						$\medmuskip=0mu \thickmuskip=0mu h(n)=1/n!$ & $\medmuskip=0mu \thickmuskip=0mu h(n)=1/n!$ & $\medmuskip=0mu \thickmuskip=0mu h(n)=1/n!$\\			
						$\medmuskip=0mu \thickmuskip=0mu \textbf{T}_b(n)=n$ & $\medmuskip=0mu \thickmuskip=0mu \textbf{T}(n)= [n,\textbf{1}_{n_0}(n),...,\textbf{1}_{n_m}(n)]$  & $\medmuskip=0mu \thickmuskip=0mu \textbf{T}(n)=[n,u_{n_0}(n),...,u_{n_m}(n)]$ \\
						$\medmuskip=0mu \thickmuskip=0mu \boldsymbol{\eta}_b=\log \lambda$ & $\medmuskip=0mu \thickmuskip=0mu \boldsymbol{\eta}=[\boldsymbol{\eta}_b,\log \alpha_{n_0},...,\log \alpha_{n_m} ]$ & $\medmuskip=0mu \thickmuskip=0mu \boldsymbol{\eta}=[\boldsymbol{\eta}_b,\log \varphi_{n_0},...,\log \varphi_{n_m} ]$\\
						$\medmuskip=0mu \thickmuskip=0mu A_b(\boldsymbol{\eta_b})=e^{\eta_1}$ & $\medmuskip=0mu \thickmuskip=0mu A(\boldsymbol{\eta})=A_b(\boldsymbol{\eta_b}) + \log z (\boldsymbol{\eta})$ & $\medmuskip=0mu \thickmuskip=0mu A(\boldsymbol{\eta})=A_b(\boldsymbol{\eta_b}) + \log z (\boldsymbol{\eta})$\\
						$\medmuskip=0mu \thickmuskip=0mu \mathcal{E}_b=R$ & $\medmuskip=0mu \thickmuskip=0mu \mathcal{E}=\mathcal{E}_b \times  R \times ...\times R$ & $\medmuskip=0mu \thickmuskip=0mu \mathcal{E}=\mathcal{E}_b \times  R \times ...\times R$
						\\
						
						\underline{NB} & & \\						
						$\medmuskip=0mu \thickmuskip=0mu \left( r\right)_n/n!$ & $\medmuskip=0mu \thickmuskip=0mu \left( r\right)_n/n!$ & $\medmuskip=0mu \thickmuskip=0mu \left( r\right)_n/n!$ \\			
						$n$ & $\medmuskip=0mu \thickmuskip=0mu [n,\textbf{1}_{n_0}(n),...,\textbf{1}_{n_m}(n)]$ & $\medmuskip=0mu \thickmuskip=0mu [n,u_{n_0}(n),...,u_{n_m}(n)]$ \\
						$\medmuskip=0mu \thickmuskip=0mu \log (\lambda/r)$ & $\medmuskip=0mu \thickmuskip=0mu [\log (\lambda/r),\log \alpha_{n_0},...,\log \alpha_{n_m} ]$ & $\medmuskip=0mu \thickmuskip=0mu [\log (\lambda/r),\log \varphi_{n_0},...,\log \varphi_{n_m} ]$\\
						$\medmuskip=0mu \thickmuskip=0mu -r \log \left ( 1-e^{\eta_1} \right )$ & $\medmuskip=0mu \thickmuskip=0mu -r \log \left ( 1-e^{\eta_1} \right ) + \log z (\boldsymbol{\eta})$ &  $\medmuskip=0mu \thickmuskip=0mu -r \log \left ( 1-e^{\eta_1} \right ) + \log z (\boldsymbol{\eta})$\\
						$\medmuskip=0mu \thickmuskip=0mu (-\infty,0)$ & $\medmuskip=0mu \thickmuskip=0mu (-\infty,0) \times  R \times ...\times R$ & $\medmuskip=0mu \thickmuskip=0mu (-\infty,0) \times  R \times ...\times R$
						\\
						
						\underline{HP} & & \\						
						$\medmuskip=0mu \thickmuskip=0mu 1/\left( \tau\right)_n$ & $\medmuskip=0mu \thickmuskip=0mu 1/\left( \tau\right)_n$ & $\medmuskip=0mu \thickmuskip=0mu 1/\left( \tau\right)_n$ \\			
						$n$ & $\medmuskip=0mu \thickmuskip=0mu [n,\textbf{1}_{n_0}(n),...,\textbf{1}_{n_m}(n)]$ & $\medmuskip=0mu \thickmuskip=0mu [n,u_{n_0}(n),...,u_{n_m}(n)]$ \\
						$\medmuskip=0mu \thickmuskip=0mu \log \lambda$ & $\medmuskip=0mu \thickmuskip=0mu [\log \lambda,\log \alpha_{n_0},...,\log \alpha_{n_m} ]$ & $\medmuskip=0mu \thickmuskip=0mu [\log \lambda,\log \varphi_{n_0},...,\log \varphi_{n_m} ]$\\
						$\medmuskip=0mu \thickmuskip=0mu \log \sum e^{\eta_1 i}/(\tau)_i$ & $\medmuskip=0mu \thickmuskip=0mu \log \sum e^{\eta_1 i}/(\tau)_i + \log z (\boldsymbol{\eta})$ &  $\medmuskip=0mu \thickmuskip=0mu \log \sum e^{\eta_1 i}/(\tau)_i + \log z (\boldsymbol{\eta})$\\
						$R$ & $\medmuskip=0mu \thickmuskip=0mu R \times  R \times ...\times R$ & $\medmuskip=0mu \thickmuskip=0mu R \times  R \times ...\times R$
						\\
						
						\underline{CMP} & & \\						
						$1$ & $1$ & $1$\\			
						$\medmuskip=0mu \thickmuskip=0mu [n, \log n!]$ & $\medmuskip=0mu \thickmuskip=0mu [n, \log n!,\textbf{1}_{n_0}(n),...,\textbf{1}_{n_m}(n)]$  & $\medmuskip=0mu \thickmuskip=0mu [n, \log n!,u_{n_0}(n),...,u_{n_m}(n)]$ \\
						$\medmuskip=0mu \thickmuskip=0mu [\log \lambda,-\nu]$ & $\medmuskip=0mu \thickmuskip=0mu [\log \lambda,-\nu,\log \alpha_{n_0},...,\log \alpha_{n_m} ]$ & $\medmuskip=0mu \thickmuskip=0mu [\log \lambda,-\nu,\log \varphi_{n_0},...,\log \varphi_{n_m} ]$\\ 
						$\medmuskip=0mu \thickmuskip=0mu \log \sum e^{\eta_1 i}(i!)^{\eta_2}$ & $\medmuskip=0mu \thickmuskip=0mu \log \sum e^{\eta_1 i}(i!)^{\eta_2} + \log z (\boldsymbol{\eta})$ & $\medmuskip=0mu \thickmuskip=0mu \log \sum e^{\eta_1 i}(i!)^{\eta_2} + \log z (\boldsymbol{\eta})$\\
						$\medmuskip=0mu \thickmuskip=0mu R \times (-\infty,0)$ & $\medmuskip=0mu \thickmuskip=0mu R \times (-\infty,0) \times  R \times ...\times R$ & $\medmuskip=0mu \thickmuskip=0mu R \times (-\infty,0) \times  R \times ...\times R$
						\\
						\bottomrule
						\multicolumn{3}{l}{\rule{0pt}{3ex}Note that $\sum =\sum_{i=0}^{\infty}$, $\left( r\right)_n=r\left( r+1\right)...\left ( r+n-1 \right )$, and $\left( r\right)_0=1$.}\\
				\end{tabular}}	
			\end{table}
		}

Table \ref{table:table_expo_family} demonstrates how several stationary distributions can be expressed as exponential-family distributions. The inflation-deflation distributions in the last two columns of Table \ref{table:table_expo_family} inherit the property of their base distributions: the MLE of the population mean equals the sample mean. This holds because $T_1(n)$ equals $n$. The exponential family structures in \eqref{eq:new_inflat} and \eqref{eq:new_probII} are quite simple. For example, the functions $h(n)$ in \eqref{eq:new_inflat} and in the base distribution are the same. The vector-valued function $\textbf{T}(n)$ in \eqref{eq:new_inflat} combines $\textbf{T}_b(n)$ and $\textbf{1}_{n_i}(n)$ for $i=0,...,m$. $\boldsymbol{\eta}$ can be divided into $\boldsymbol{\eta}_b$, which contains the base distribution parameters, and $[\log \alpha_{n_0},...,\log \alpha_{n_m} ]$, which contains the logarithms of the inflation-deflation parameters. Adding $A_b(\boldsymbol{\eta_b})$ and the logarithm of the normalizing constant function \eqref{eq:z} yields $A(\boldsymbol{\eta})$.

The exponential family has an interesting connection to the stationary distribution that satisfies \eqref{eq:pcon}. That is,
\begin{equation}
	A(\boldsymbol{\eta}) - \log h\left ( 0 \right ) - \textbf{T}\left ( 0 \right ) \boldsymbol{\eta}^T=\log \left( 1+\lambda_0 + \lambda_0\lambda_1 + \lambda_0\lambda_1\lambda_2 + ...\right),
	\label{eq:A_relation}
\end{equation}
where 
\begin{equation*}
	\lambda_0\lambda_1...\lambda_n = \frac{h\left ( n+1 \right )\exp \left\{ \textbf{T}\left ( n+1 \right ) \boldsymbol{\eta}^T \right\}}{h\left ( 0 \right )\exp \left\{ \textbf{T}\left ( 0 \right ) \boldsymbol{\eta}^T \right\}}.
\end{equation*}

\noindent Obviously if $h\left ( 0 \right )=1$, and $\textbf{T}\left ( 0 \right )=\left [ 0,...,0 \right ]$, then $A(\boldsymbol{\eta})=\log ( 1+\lambda_0+\lambda_0\lambda_1 + \lambda_0\lambda_1\lambda_2 + ...)$, where $\lambda_0\lambda_1...\lambda_n =h\left ( n+1 \right )\exp \left\{ \textbf{T}\left ( n+1 \right ) \boldsymbol{\eta}^T \right\}$. For example, the geometric, Poisson, NB, HP, and CMP have this equality. The type 1 Poisson distribution with $\mathcal{F}=\left\{ 0,1,2 \right\}$ has $A(\boldsymbol{\eta})=\log ( \alpha_0+\alpha_1\lambda + \alpha_2 \lambda^2/2! + \lambda^3/3! + \lambda^4/4! + ...)$, where $\boldsymbol{\eta}=[ \log \lambda, \log \alpha_0, \log \alpha_1, \log \alpha_2 ]$, because $h\left ( 0 \right )=1$, and $\textbf{T}\left ( 0 \right )= [ 0,1,0,0  ]$. The type 2 distribution with the same base and $\mathcal{F}$ has $A(\boldsymbol{\eta})=\log ( \varphi_0 \varphi_1 \varphi_2+\varphi_1 \varphi_2\lambda + \varphi_2 \lambda^2/2! +\lambda^3/3! + \lambda^4/4!+ ...)$, where $\boldsymbol{\eta}= [ \log \lambda, \log \varphi_0, \log \varphi_1, \log \varphi_2 ]$, because $h\left ( 0 \right )=1$, and $\textbf{T}\left ( 0 \right )= [ 0,1,1,1 ]$. The derivation of \eqref{eq:A_relation} is given in the supplementary material. 

The CMP, a stationary distribution, can model both underdispersed ($1<\nu<\infty$) and overdispersed ($0<\nu<1$) data (\citealt[p.~1294]{Kokonendji08}, \citealt[p.~252]{Puig24}). However, it is not always the most suitable choice, as its shape can differ substantially from the observed data. For instance, \citet[p.~471]{Winkelmann95} reports data on the number of children born, with a variance and mean of approximately 2.328 and 2.384, respectively. Although these pure-birth data exhibit mild underdispersion, the CMP does not provide the best fit according to the log-likelihood (see \citealt[Table~4]{Skul22}, \citealt[Table~1]{Hun25}, \citealt[Table~3]{Baker26}). Additionally, as a birth-death process, the CMP cannot describe pure-birth count data. This case demonstrates that dispersion alone does not guarantee an adequate fit for equi-, over-, or underdispersed count data. To provide further clarification, the next section examines the variance-mean relationship in the type 2 model \eqref{eq:new_probII} using the following proposition, which demonstrates that many equidispersed distributions are not Poisson, and many overdispersed and underdispersed distributions are not CMP. The mean and variance of the type 1 model \eqref{eq:new_inflat} are provided in Section S8 of the supplementary material.

\begin{proposition}
	\label{ZkIS_meanvariance}
	Suppose $N$ is a type 2 stationary distribution \eqref{eq:new_probII}, and $b(n,\boldsymbol{\theta})$ is a member of the exponential family with $T_1(n)=n$. Then
	\begin{equation*} 
		E[N]=E_b[N]+\frac{1}{z\left ( \boldsymbol{\theta},\boldsymbol{\varphi} \right )}\sum_{i=0}^{m} \left ( \prod_{j=i}^{m}\varphi_{n_j}-1 \right )  \sum_{k=n_{i-1}+1}^{n_i}b\left ( k,\boldsymbol{\theta} \right ) \left ( k-E_b[N] \right ),
	\end{equation*}
	and
	\begin{align*}
		&V[N] =V_b [N]-\left ( E[N]- E_b[N]\right )^2 \\
		& +\frac{1}{z\left ( \boldsymbol{\theta},\boldsymbol{\varphi} \right )}\sum_{i=0}^{m} \left ( \prod_{j=i}^{m}\varphi_{n_j}-1 \right )  \sum_{k=n_{i-1}+1}^{n_i}b\left ( k,\boldsymbol{\theta} \right ) \left ( \left ( k-E_b[N] \right )^2-V_b[N] \right ),
	\end{align*}
	where $n_{-1}=-1$. $E_b[N]$ and $V_b[N]$ denote the mean and variance of $b(n,\boldsymbol{\theta})$, respectively.
\end{proposition}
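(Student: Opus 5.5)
The plan is a direct computation from the explicit form \eqref{eq:new_probII}, treating the type 2 law as a reweighting of the base law. Write $w(n)=f(n,\boldsymbol{\varphi})-1$. Because $u_{n_i}(n)=1$ exactly when $n\le n_i$, for $n_{i-1}<n\le n_i$ we have $f(n,\boldsymbol{\varphi})=\prod_{j=i}^{m}\varphi_{n_j}$, and $f(n,\boldsymbol{\varphi})=1$ for $n>n_m$. So $w$ is supported on the finite set $\{0,\dots,n_m\}$ and is constant, equal to $c_i=\prod_{j=i}^{m}\varphi_{n_j}-1$, on each block $B_i=\{n_{i-1}+1,\dots,n_i\}$ (with $n_{-1}=-1$). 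This also recovers \eqref{eq:zII} as $z=\sum_n f(n)b(n)=1+\sum_i c_i\sum_{k\in B_i}b(k)$.

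For the mean, write
\[
E[N]=\frac{1}{z}\sum_n n\,(1+w(n))\,b(n)=\frac{E_b[N]+\sum_i c_i\sum_{k\in B_i}k\,b(k)}{z}.
\]
Subtracting $E_b[N]=E_b[N]\,z/z$ and using $z-1=\sum_i c_i\sum_{k\in B_i}b(k)$ gives
\[
E[N]-E_b[N]=\frac{1}{z}\sum_i c_i\sum_{k\in B_i}b(k)\bigl(k-E_b[N]\bigr),
\]
which is the first claim. Interchanging sums is harmless because the correction sums are finite and $E_b[N]$ is finite. The exponential-family hypothesis with $T_1(n)=n$ only ensures that the moments $E_b[N]$ and $V_b[N]$ exist, being derivatives of $A_b$.

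For the variance, center at the base mean $\mu_b=E_b[N]$ and use $V[N]=E[(N-\mu_b)^2]-(E[N]-\mu_b)^2$. Then
\[
E[(N-\mu_b)^2]=\frac{1}{z}\Bigl(V_b[N]+\sum_i c_i\sum_{k\in B_i}b(k)(k-\mu_b)^2\Bigr).
\]
Rewrite $V_b[N]/z=V_b[N]-V_b[N](z-1)/z$, and expand $z-1$ again as a sum over blocks. This combines into
\[
V_b[N]+\frac{1}{z}\sum_i c_i\sum_{k\in B_i}b(k)\bigl((k-\mu_b)^2-V_b[N]\bigr),
\]
and subtracting $(E[N]-E_b[N])^2$ yields the stated formula.

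The only real obstacle is the bookkeeping: correctly identifying the value of $f$ on each block from the step functions, and the "add and subtract $z$" trick that converts the normalized expressions into base moments plus finite corrections. Everything else is routine algebra.
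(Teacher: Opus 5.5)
Your proof is correct, but it takes a different route from the paper's.

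\textbf{The paper's route.} It first appeals to \Cref{ZkIS_expofamily} to write the type 2 law as a canonical exponential family with cumulant function $A(\boldsymbol{\eta})=A_b(\boldsymbol{\eta}_b)+\log z(\boldsymbol{\eta})$. It then uses the standard identities $E[T_1(N)]=\partial A/\partial\eta_1$ and $V[T_1(N)]=\partial^2 A/\partial\eta_1^2$. The correction terms come from differentiating $\log z$ in $\eta_1$, using $\partial b(k,\boldsymbol{\eta}_b)/\partial\eta_1=b(k,\boldsymbol{\eta}_b)\bigl(k-\partial A_b/\partial\eta_1\bigr)$ once for the mean and twice for the variance. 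The term $-(E[N]-E_b[N])^2$ arises as $-(z'/z)^2$ in $(\log z)''=z''/z-(z'/z)^2$, with primes denoting $\partial/\partial\eta_1$.

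\textbf{Your route.} You compute the moments directly as reweighted base moments, using the block-constant form of $f(n,\boldsymbol{\varphi})$ and $z-1=\sum_i c_i\sum_{k\in B_i}b(k)$. You get the $-(E[N]-E_b[N])^2$ term from the shift identity $V[N]=E[(N-\mu_b)^2]-(E[N]-\mu_b)^2$.

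\textbf{What each buys.} The summands produced are identical, but your argument is more elementary and more general. It needs neither \Cref{ZkIS_expofamily} nor the cumulant-function calculus, only $E_b[N^2]<\infty$. So the same formulas hold for any base with finite variance, for instance the PL base, which the paper notes is not an exponential family. The hypothesis $T_1(n)=n$ is not actually needed. The paper's route, in return, ties the result to its exponential-family framework. It also extends mechanically, by further differentiation of $A$, to higher cumulants or to joint moments with the other sufficient statistics $u_{n_i}(N)$.
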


\section{Variance-mean relationship}
\label{sec:variance_mean}
\begin{figure}[!t]
	\centering
	\includegraphics[width=0.9\textwidth]{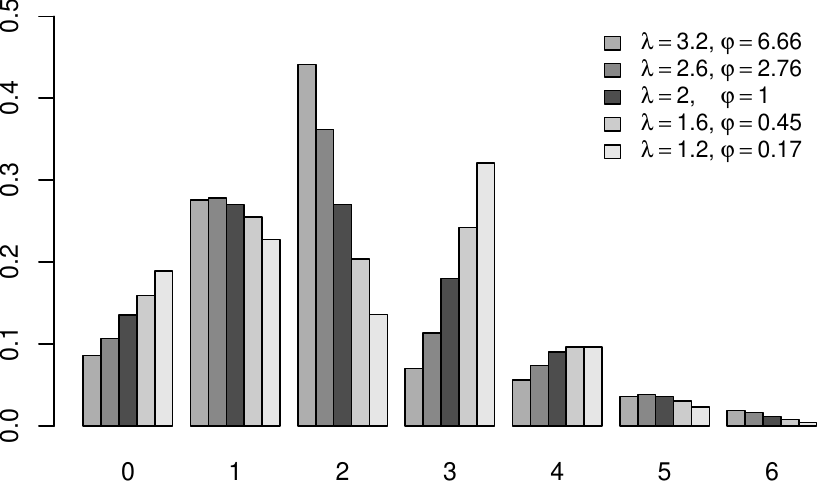}
	
	\caption{Count distributions with equal mean = 2, variance = 2, and variance-to-mean ratio = 1.  These five distributions are all equidispersed, but only the black histogram is a Poisson distribution.}
	\label{fig:his_equi}
\end{figure}

The dispersion index, defined as the variance-to-mean ratio in a count distribution, classifies distributions as equidispersed (ratio = 1), overdispersed (ratio > 1), or underdispersed (ratio < 1). The Poisson distribution has a dispersion index of one. Wikipedia (\url{https://en.wikipedia.org/wiki/Index_of_dispersion}, accessed 2/23/2026) reports ``The relevance of the index of dispersion is that it has a value of 1 when the probability distribution of the number of occurrences in an interval is a Poisson distribution. Thus the measure can be used to assess whether observed data can be modeled using a Poisson process.'' \citet[p.~357]{Karlis00} state ``\ldots the index of dispersion test is very common in testing the Poisson assumption\ldots This quantity is directly related to the variance-to-mean ratio\ldots .'' \citet[p.~1287]{Kokonendji08} write, ``One of the popular measures to detect such departures from the Poisson distribution is the so-called \textit{Fisher index} which is the ratio of variance to the mean ($\lessgtr 1$) of the count distribution.''

\subsection{Equidispersion}
\label{subsec:equidispersion}

\begin{table}[!t]
	\setlength{\tabcolsep}{0.2em}	
	\centering
	\caption{Skewness ($\gamma$), kurtosis ($\kappa_1$) within 1 standard deviation of the mean, kurtosis ($\kappa$) and the ratio between $\kappa_1$ and $\kappa$ for the five count distributions in  Figure \ref{fig:his_equi} with various birth-death-rate ratio sequences ($\lambda_0,\lambda_1,\lambda_2,...$)}    \
	\label{table:table_seq}
	\resizebox{\textwidth}{!}{\begin{tabular}{ >{\centering}p{0.7cm} >{\centering}p{1.4cm} >{\centering}p{0.8cm} >{\centering}p{0.8cm} >{\centering}p{0.8cm} >{\centering}p{0.8cm} >{\centering}p{0.8cm} >{\centering}p{0.8cm} >{\centering}p{1.5cm} >{\centering}p{1.5cm} >{\centering}p{0.8cm}  >{\centering\arraybackslash}p{1.6cm}}
							
			\toprule 
			$\lambda$ & $\varphi$ & $\lambda_0$ & $\lambda_1$ & $\lambda_2$ & $\lambda_3$ & $\lambda_4$ & $\lambda_5$ & $\gamma$ & $\kappa_1$ & $\kappa$ & $\kappa_1/\kappa$ \\
			\midrule
			3.2 & 6.661 & 3.20 & 1.60 & \underline{0.16} & 0.80 & 0.64 & 0.53 & 1.5556 & 0.0866 & 6.62  & 0.0131\\
			2.6 & 2.756 & 2.60 & 1.30 & \underline{0.31} & 0.65 & 0.52 & 0.43 & 1.1314 & 0.0981 & 4.88  & 0.0201\\
			2.0 & 1.000 & 2.00 & 1.00 & \underline{0.67} & 0.50 & 0.40 & 0.33 & 0.7071 & 0.1128 & 3.50  & 0.0322\\
			1.6 & 0.450 & 1.60 & 0.80 & \underline{1.18} & 0.40 & 0.32 & 0.27 & 0.4243 & 0.1244 & 2.78  & 0.0447\\
			1.2 & 0.170 & 1.20 & 0.60 & \underline{2.35} & 0.30 & 0.24 & 0.20 & 0.1414 & 0.1372 & 2.22  & 0.0618\\
			\bottomrule  
			\multicolumn{12}{l}{\rule{0pt}{3ex}Note that $\lambda_2=\lambda/(3\varphi)$, and  $\lambda_n=\lambda/(n+1)$ for $n \neq2$.}\\
			\end{tabular}}
	\end{table}

Figure \ref{fig:his_equi} presents five probability mass functions for various $\lambda\in\left\{ 1.2,1.6,2,2.6,3.2 \right\}$ and $\varphi\in\left\{ 0.17,0.45,1,2.76,6.66 \right\}$ values, each with a mean and variance of 2. These examples illustrate how shape, skewness, and kurtosis vary among equidispersed distributions. We construct these distributions using the following proposition.

	\begin{proposition}
		\label{lam_al_relation}
		If $b(n,\boldsymbol{\theta})$ is a Poisson distribution, $\mathcal{F}= \left\{q \neq0 \right\}$, and
		\begin{equation*}
			\varphi=1+\frac{\lambda-q}{\sum_{k=0}^{q-1}(q-k)\frac{\lambda^k e^{-\lambda}}{k!}},
		\end{equation*}
		then the type 2 stationary model \eqref{eq:new_probII} is equidispersed with mean and variance equal to $q$.
	\end{proposition}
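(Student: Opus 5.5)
The plan is to avoid the general formulas of \Cref{ZkIS_meanvariance} and work directly with factorial moments, using the Poisson identities $k\,b(k,\lambda)=\lambda\, b(k-1,\lambda)$ and $k(k-1)\,b(k,\lambda)=\lambda^2 b(k-2,\lambda)$. Here $b(k,\lambda)=0$ for $k<0$.

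\textbf{Setup.} With $\mathcal{F}=\{q\}$ and $m=0$, \eqref{eq:new_probII} reduces to \eqref{eq:prob_lerd}, so
\[
p(n)=\frac{\varphi^{u_q(n)}b(n,\lambda)}{z},\qquad z=1+(\varphi-1)S_q ,
\]
where $S_j=\sum_{k=0}^{j}b(k,\lambda)$ and $D=\varphi-1$. Then
\[
E[N]=\frac{\lambda+D\sum_{k\le q}k\,b(k,\lambda)}{z}=\frac{\lambda+D\lambda S_{q-1}}{z},
\qquad
E[N(N-1)]=\frac{\lambda^2+D\lambda^2S_{q-2}}{z}.
\]

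\textbf{Step 1 (mean).} Requiring $E[N]=q$ is equivalent to $\lambda+D\lambda S_{q-1}=q+DqS_q$, that is,
\[
D\,(qS_q-\lambda S_{q-1})=\lambda-q .
\]
I will rewrite $qS_q-\lambda S_{q-1}$ as $\sum_{k=0}^{q}q\,b(k)-\sum_{k=0}^{q}k\,b(k)=\sum_{k=0}^{q-1}(q-k)\frac{\lambda^ke^{-\lambda}}{k!}$. This shows the stated $\varphi$ is exactly the solution, so $E[N]=q$.

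\textbf{Step 2 (variance).} Equidispersion with mean $q$ means $E[N^2]=q^2+q$, equivalently $E[N(N-1)]=q^2$. That is,
\[
D\,(q^2S_q-\lambda^2S_{q-2})=\lambda^2-q^2 .
\]
The key step is to show this follows from the mean condition. I will prove the algebraic identity
\[
q^2S_q-\lambda^2S_{q-2}=(\lambda+q)(qS_q-\lambda S_{q-1}).
\]
To do so, substitute $\lambda S_{q-2}=\lambda S_{q-1}-\lambda b(q-1)=\lambda S_{q-1}-q\,b(q)$. After expanding, the difference between the two sides reduces to $\lambda q\,b(q)-\lambda q(S_q-S_{q-1})=0$. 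Multiplying the mean condition by $(\lambda+q)$ then gives the variance condition, so $V[N]=q$.

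\textbf{Main obstacle and side checks.} The main obstacle is spotting this factorization, after which everything is bookkeeping; the boundary case $q=1$ is covered by $S_{-1}=0$. As a side remark I will note that the model requires $\varphi>0$. This holds automatically when $\lambda\ge q$. When $\lambda<q$ it needs $q-\lambda<\sum_{k<q}(q-k)b(k)$, which I will take as an implicit admissibility condition on $\lambda$ (it holds for the values used in Figure \ref{fig:his_equi}). Alternatively, one could verify both results via \Cref{ZkIS_meanvariance} with $m=0$, where the same identities appear.
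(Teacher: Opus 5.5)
Your proof is correct, but it takes a different route from the paper. The paper specializes the central-moment formulas of Proposition~\ref{ZkIS_meanvariance} to $\mathcal{F}=\{q\}$; those formulas come from differentiating the cumulant function. It rewrites the normalizer as $z=\sum_{k=0}^{q}(\lambda-k)b(k,\lambda)\big/\sum_{k=0}^{q-1}(q-k)b(k,\lambda)$ and reads off $E[N]=q$. It then verifies $V[N]=q$ by a long chain of index shifts showing that $\sum_{k=0}^{q}\bigl((k-\lambda)^2-\lambda\bigr)b(k,\lambda)\big/\sum_{k=0}^{q}(\lambda-k)b(k,\lambda)=\lambda-q-1$.

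You instead compute raw factorial moments directly from the pmf, so both requirements become linear equations in $D=\varphi-1$. Your identity $q^2S_q-\lambda^2S_{q-2}=(\lambda+q)(qS_q-\lambda S_{q-1})$ then shows that the second-moment equation is exactly $(\lambda+q)$ times the mean equation. This is shorter, needs no exponential-family machinery, and explains why a single choice of $\varphi$ fixes both the mean and the variance at $q$. The paper's route has the advantage of reusing formulas valid for any exponential-family base with $T_1(n)=n$ and any $\mathcal{F}$. Even so, its decisive simplification rests on the same Poisson identity $k\,b(k,\lambda)=\lambda\,b(k-1,\lambda)$ that drives yours.

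One correction to your side remark: no admissibility condition on $\lambda$ is needed. Because $\sum_{k\ge 0}(q-k)b(k,\lambda)=q-\lambda$, you have $\sum_{k=0}^{q-1}(q-k)b(k,\lambda)-(q-\lambda)=\sum_{k>q}(k-q)b(k,\lambda)>0$. Hence
\[
\varphi=\frac{\sum_{k>q}(k-q)\,b(k,\lambda)}{\sum_{k<q}(q-k)\,b(k,\lambda)}\in(0,\infty)
\]
for every $\lambda>0$. Likewise $z=\lambda\, b(q,\lambda)\big/\sum_{k<q}(q-k)b(k,\lambda)>0$ always. So your divisions by $z$ are legitimate, and the proposition holds as stated with no restriction on $\lambda$.
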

	
	The count distributions with $(\lambda = 3.2, \varphi = 6.661)$ and $(\lambda = 2.6, \varphi = 2.756)$ are unimodal, each with a single peak at 2. These distributions show greater right skewness than the Poisson distribution $(\lambda = 2, \varphi = 1)$ as shown in the first three lines of Table \ref{table:table_seq}. In contrast, the distributions with $(\lambda = 1.6, \varphi = 0.45)$ and $(\lambda = 1.2, \varphi = 0.17)$ are bimodal, with peaks at 1 and 3, and display less right skewness than the Poisson (see the last two lines in Table \ref{table:table_seq}). Skewness for a count distribution is calculated as $\gamma = E[((N-\mu)/ \sigma)^3]$, and as $1/\sqrt{\lambda}$ for a Poisson distribution. The dispersion index alone does not indicate the shape of equidispersed distributions, whether unimodal, bimodal, or multimodal. ``Dispersion index'' and ``departure from Poissonity'' are unrelated. Therefore, using the dispersion index to measure ``departure from Poissonity'' is incorrect.

\afterpage{
	\begin{table}[H]
		\setlength{\tabcolsep}{0.2em}	
		\small 
		\centering
		\caption{The sequence $\left\{ a_n = (n+1)\lambda_n,\: n=0,1,2,... \right\}$ behavior of the geometric, Poisson, PL, NB, HP, CMP, type 1, and type 2 distributions, all of which are stationary distributions. Both type 1 and type 2 Poisson distributions have $\mathcal{F}=\left\{ 0 \right\}$. Note that $0<\lambda<\infty$ except for the geometric and PL distributions where $0<\lambda<1.$} 
		\label{table:theta_sequence}	
				\makebox[\textwidth]{\begin{tabular}{lcccc}
						\toprule 
						\multirow{2}{*}{Distribution} & \multicolumn{1}{c}{sequence} & \multicolumn{1}{c}{constant} & \multicolumn{1}{c}{increasing} & \multicolumn{1}{c}{decreasing}\\
						& \multicolumn{1}{c}{$a_n=(n+1)\lambda_n$} & \multicolumn{1}{c}{(equidispersion)} & \multicolumn{1}{c}{(overdispersion)} & \multicolumn{1}{c}{(underdispersion)}\\	
						\midrule
						Geometric & $(n+1)\lambda$ & - & \checkmark & -\\[2ex]
						
						Poisson & $\lambda$ & \checkmark & - & -\\[2ex]
						
						PL & $\displaystyle  \frac{(n+1)(1+(n+2)\lambda)\lambda}{1+(n+1)\lambda}$ & - & \checkmark & -\\[2ex]
						
						\multirow{2}{*}{NB} &  \multirow{2}{*}{$\displaystyle \left ( \frac{n}{r}+1 \right ) \lambda$}& \multirow{2}{*}{-}&\multicolumn{1}{c}{\checkmark}& \multirow{2}{*}{-}\\
						& & &\multicolumn{1}{c}{$(0<\lambda /r<1)$}&\\[2ex]
						
						\multirow{2}{*}{HP} &  \multirow{2}{*}{$\displaystyle \frac{(n+1)\lambda}{n+\tau}$}&
						\multicolumn{1}{c}{\checkmark}&\multicolumn{1}{c}{\checkmark}& \multicolumn{1}{c}{\checkmark}\\
						& &\multicolumn{1}{c}{$(\tau=1)$} &\multicolumn{1}{c}{$(1<\tau<\infty)$}&\multicolumn{1}{c}{$(0<\tau<1)$}\\[2ex]

						\multirow{2}{*}{CMP} &  \multirow{2}{*}{$\displaystyle  \frac{(n+1)\lambda}{(n+1)^{\nu}}$}&
						\multicolumn{1}{c}{\checkmark}&\multicolumn{1}{c}{\checkmark}& \multicolumn{1}{c}{\checkmark}\\
						& &\multicolumn{1}{c}{$(\nu=1)$} &\multicolumn{1}{c}{$(0<\nu<1)$}&\multicolumn{1}{c}{$(1<\nu<\infty)$}\\[2ex]
						
						\multirow{2}{*}{Type 1} &  \multirow{2}{*}{$\displaystyle  \frac{\lambda}{\alpha^{\textbf{1}_{0}(n)}}$}&
						\multicolumn{1}{c}{\checkmark}&\multicolumn{1}{c}{\checkmark}& \multicolumn{1}{c}{\checkmark}\\
						& &\multicolumn{1}{c}{$(\alpha=1)$} &\multicolumn{1}{c}{$(1<\alpha<\infty)$}&\multicolumn{1}{c}{$(0<\alpha<1)$}\\[2ex]
						
						\multirow{2}{*}{Type 2} &  \multirow{2}{*}{$\displaystyle  \frac{\lambda}{\varphi^{\textbf{1}_{0}(n)}}$}&
						\multicolumn{1}{c}{\checkmark}&\multicolumn{1}{c}{\checkmark}& \multicolumn{1}{c}{\checkmark}\\
						& &\multicolumn{1}{c}{$(\varphi=1)$} &\multicolumn{1}{c}{$(1<\varphi<\infty)$}&\multicolumn{1}{c}{$(0<\varphi<1)$}\\
						
						\bottomrule	
				\end{tabular}}	
			\end{table}
		}

	Figure \ref{fig:his_equi} shows five distributions with a dispersion index of one; only one is Poisson. In count-data terms, all are equidispersed. We omit the tails, which extend to infinity, as the central portion (about 99.4\%) sufficiently represents their shapes. Table \ref{table:table_seq} provides the kurtosis within one standard deviation of the mean $\kappa_1$ and the kurtosis $\kappa$ for each distribution. The $\kappa_1$ to $\kappa$ ratios are all below 7\% (see Table \ref{table:table_seq}, last column), indicating that kurtosis primarily reflects the tails \citep{Westfall14}. For count distributions, kurtosis is calculated as $\kappa = E[((N-\mu)/ \sigma)^4]$. For a Poisson distribution, it is even simpler than skewness: $1/\lambda + 3$.

\subsection{Combination of equi-, over-, and underdispersion}
\label{subsec:combi_dispersion}
The NB, HP, and CMP are flexible count distributions that address overdispersion or underdispersion and include the Poisson distribution as a special case. However, their equidispersed distributions are always Poisson, which does not accurately represent real equidispersed (non-Poisson) count data. As a result, using statistical methods based on NB, HP, or CMP assumptions (i.e., birth-death-rate ratio sequences) can significantly affect the validity of statistical inference. The geometric, Poisson, and PL are one-parameter ($\lambda$) distributions with different birth-death-rate ratio sequences (see Table \ref{table:table_ratio}), so the geometric and PL do not include the Poisson as a special case and are always overdispersed. In contrast, the type 1 and type 2 Poisson models include the Poisson distribution as a special case and also offer many equidispersed distributions that are not Poisson, as discussed in the previous section. These alternative models may not have a simple relationship between mean and variance.

Analysis of sequence $\left\{ a_n = (n+1)\lambda_n,\: n=0,1,2,... \right\}$ behavior has been employed to demonstrate the relationship between the mean and variance in stationary birth-death processes. \citet{Wise62} and \citet[Corollary~1]{Puig24} show that an increasing sequence $a_n$ leads to an overdispersed distribution, whereas a decreasing sequence $a_n$ leads to an underdispersed distribution. Table \ref{table:theta_sequence} summarizes the sequence behavior for various stationary distributions. The parameters $\tau, \nu, \alpha$, and $\varphi$, except for $\lambda$, indicate overdispersion or underdispersion and are referred to as dispersion parameters. The geometric, Poisson, PL, and NB distributions cannot accommodate datasets that combine equi-, over-, and underdispersion. The HP and CMP models, when applied with varying ratio ($\lambda$) and dispersion ($\tau$ and $\nu$) parameters in a double-regression model, can address such tridispersed data. However, this approach often doubles the number of parameters without significantly increasing the log-likelihood (\citealt{Skul22}, \citealt{Hun25}).

\afterpage{
	\begin{figure}[H]
		\begin{subfigure}{0.5\textwidth}
			\centering
			\includegraphics[height=4.6cm]{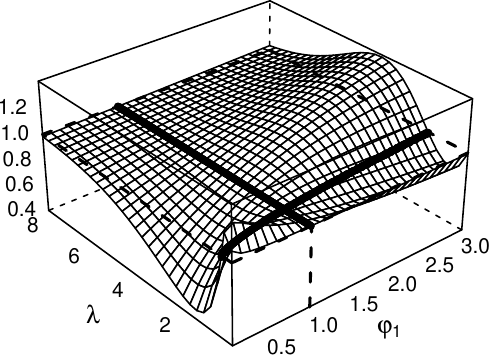}
			\caption{$q=1$}
			\label{fig:sfig1}
			\includegraphics[height=4.6cm]{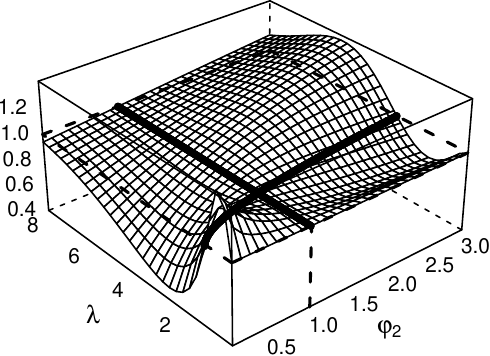}
			\caption{$q=2$}
			\label{fig:sfig2}
			\includegraphics[height=4.6cm]{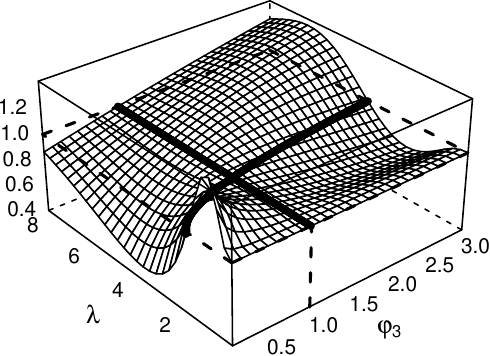}
			\caption{$q=3$}
			\label{fig:sfig3}
			\includegraphics[height=4.6cm]{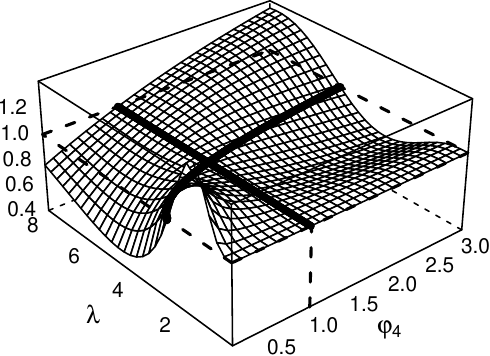}
			\caption{$q=4$}
			\label{fig:sfig4}
		\end{subfigure}
		\begin{subfigure}{0.5\textwidth}
			\centering
			\includegraphics[height=4.6cm]{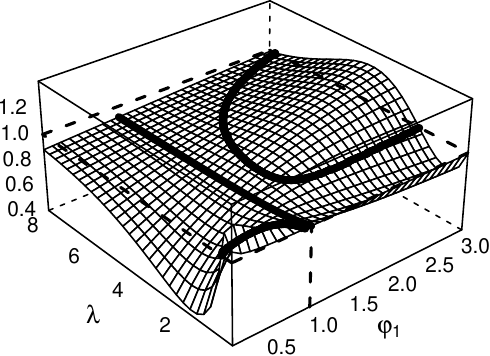}
			\caption{$q=1$, $\nu=1.1$}
			\label{fig:sfig5}
			\includegraphics[height=4.6cm]{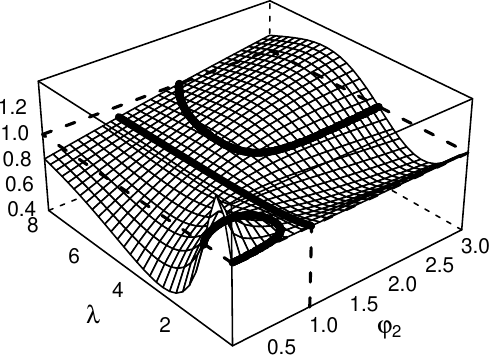}
			\caption{$q=2$, $\nu=1.1$}
			\label{fig:sfig6}
			\includegraphics[height=4.6cm]{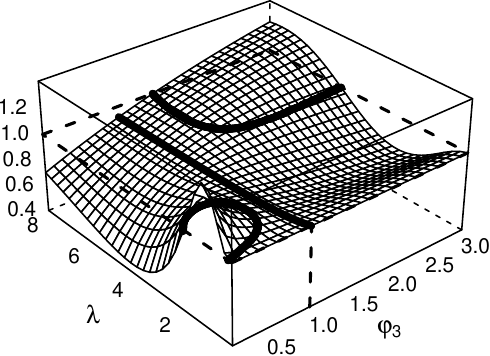}
			\caption{$q=3$, $\nu=1.1$ }
			\label{fig:sfig7}
			\includegraphics[height=4.6cm]{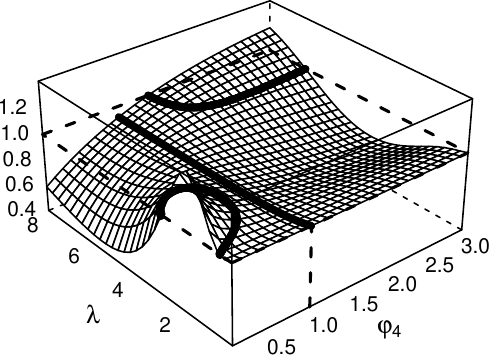}
			\caption{$q=4$, $\nu=1.1$}
			\label{fig:sfig8}
		\end{subfigure}
		\caption{Dispersion surfaces of the type 2 Poisson (left) and CMP (right) models with $\mathcal{F}=\left\{ q \right\}$. The straight (Poisson) and curved (non-Poisson) lines display equidispersion, that is, the level lines are equal to one. An exception is the CMP lines with $\varphi_q=1$, which show underdispersion.}
		\label{fig:vmr}
	\end{figure}
}

Typically, the sequence $a_n$ of the type 1 or type 2 stationary models is not monotonic, so Corollary 1 of \citet{Puig24} cannot be applied. An exception occurs with the type 1 and type 2 Poisson models when $\mathcal{F}= \left\{ 0\right\}$ (see the last two lines in Table \ref{table:theta_sequence}). We now examine the relationship that explains the inequality between the mean and variance, using the simplest inflation-deflation stationary models. Figure \ref{fig:vmr} presents the variance-mean ratios for the type 2 Poisson and CMP models with $\mathcal{F}= \left\{ q \neq 0\right\}$. $a_n=\lambda/\varphi_q^{\textbf{1}_{q}(n)}$ and $a_n=(n+1)^{1-\nu}\lambda/\varphi_q^{\textbf{1}_{q}(n)}$  are the non-monotonic sequences for the Poisson and CMP bases, respectively. The curved (non-Poisson) lines in Figures \ref{fig:sfig1}-\ref{fig:sfig4} show the functions $\varphi_q=1+(\lambda-q)/\sum_{k=0}^{q-1}\left ( q-k \right )\left ( \lambda ^{k}e^{-\lambda }/k! \right )$ in \Cref{lam_al_relation}. We observe that $\lambda$ indicates equidispersion, overdispersion, or underdispersion when $\varphi_q$ is fixed. For instance, if $\varphi_3=0.2$, the type 2 Poisson model can display equidispersion ($\lambda=2.055$), overdispersion ($0<\lambda<2.055$), or underdispersion ($2.055<\lambda<8$) (see Figure \ref{fig:sfig3}), and the type 2 CMP model can display equidispersion ($\lambda=0.237, 2.159$), overdispersion ($0.237<\lambda<2.159$), or underdispersion ($0<\lambda<0.237$, and $2.159<\lambda<8$) (see Figure \ref{fig:sfig7}). Unlike the classical HP and CMP models, $\lambda$ serves as a dispersion parameter. This feature is essential for regression analysis of count data, including count time-series models. Although the HP and CMP are stationary birth-death processes, they lack this property. We conclude that even the simplest models in \eqref{eq:new_probII}, $\mathcal{F}= \left\{ q \neq 0\right\}$, offer considerable flexibility for count data. Note that when $\varphi_q=1$, the lines on the left (right) correspond to equidispersed Poisson distributions (underdispersed CMP distributions).


\begin{thebibliography}{99}
\renewcommand{\itemsep}{0pt}

\bibitem[\protect\citeauthoryear{Arora et al.}{2021}]{Arora21} Arora, M., Rao Chaganty, N., and Sellers, K. F. (2021). A flexible regression model for zero-and k-inflated count data. {\it Journal of Statistical Computation and Simulation} 91(9), 1815--1845.

\bibitem[\protect\citeauthoryear{Baker}{2026}]{Baker26} Baker, R. (2026). A model for underdispersed count data. {\it Statistical Papers}, 67(2):30, 1--14.

\bibitem[\protect\citeauthoryear{Bardwell and Crow}{1964}]{Bardwell64} Bardwell, G. E. and Crow, E. L. (1964). A two-parameter family of hyper-Poisson distributions. {\it Journal of the American Statistical Association} 59(305), 133--141.

\bibitem[\protect\citeauthoryear{Bickel and Doksum}{2015}]{Bickel15} Bickel, P. J. and Doksum, K. A. (2015). {\it Mathematical Statistics: Basic Ideas and Selected Topics Volume I}, 2nd Ed. Chapman and Hall/CRC.

\bibitem[\protect\citeauthoryear{Böhning and Junnumtuam}{2025}]{Böh25} Böhning, D. and Junnumtuam, S. (2025). Some general points for inflation models. {\it Statistics and Probability Letters} 219, 1--6.

\bibitem[\protect\citeauthoryear{Böhning}{2016}]{Böh16} Böhning, D. (2016). Ratio plot and ratio regression with applications to social and medical sciences. {\it Statistical Science}, 205--218.

\bibitem[\protect\citeauthoryear{Boswell and Patil}{1970}]{Boswell70} Boswell, M. T. and Patil, G. P. (1970). Chance mechanisms generating the negative binomial distributions. {\it Random counts in models and structures} 1, 3--22.

\bibitem[\protect\citeauthoryear{Cameron}{2013}]{Cameron13} Cameron, A.C. and Trivedi, P.K. (2013). {\it Regression Analysis of Count Data}. Cambridge University Press.

\bibitem[\protect\citeauthoryear{Conway and Maxwell}{1962}]{Conway62} Conway, R. W. and Maxwell, W. L. (1962). A queuing model with state dependent service rates. {\it Journal of Industrial Engineering} 12, 132--136.

\bibitem[\protect\citeauthoryear{Cox and Miller}{1965}]{Cox65} Cox, D. R. and Miller, H.D. (1965). {\it The Theory of Stochastic Processes}. Chapman and Hall, London, UK.

\bibitem[\protect\citeauthoryear{Cragg}{1971}]{Cragg71} Cragg, J. G. (1971). Some statistical models for limited dependent variables with application to the demand for durable goods. {\it Econometrica: Journal of the Econometric Society} 39(5), 829--844.

\bibitem[\protect\citeauthoryear{Crawford et al.}{2018}]{Crawford18} Crawford, F. W., Ho, L. S. T., and Suchard, M. A. (2018). Computational methods for birth-death processes. {\it WIREs Computational Statistics} 10(2), 1--22.

\bibitem[\protect\citeauthoryear{Del Castillo and Pérez-Casany}{1998}]{Castillo98} Del Castillo, J. and Pérez-Casany, M. (1998). Weighted Poisson distributions for overdispersion and underdispersion situations. {\it Annals of the Institute of Statistical Mathematics} 50(3), 567-585.

\bibitem[\protect\citeauthoryear{Faddy}{1997}]{Faddy97} Faddy, M. J. (1997). Extended Poisson process modelling and analysis of count data. {\it Biometrical Journal} 39(4), 431--440.

\bibitem[\protect\citeauthoryear{Feng}{2021}]{Feng21} Feng, C. X. (2021). A comparison of zero-inflated and hurdle models for modeling zero-inflated count data. {\it Journal of statistical distributions and applications} 8(1), 1--19.

\bibitem[\protect\citeauthoryear{Grimmett and Stirzaker}{2001}]{Grim01} Grimmett, G. and Stirzaker, D. (2001). {\it Probability and Random Processes}, 3rd Ed. Oxford University Press, USA.

\bibitem[\protect\citeauthoryear{Haslett et al.}{2022}]{Haslett22} Haslett, J., Parnell, A. C., Hinde, J., and de Andrade Moral, R. (2022). Modelling excess zeros in count data: A new perspective on modelling approaches. {\it International statistical review} 90(2), 216--236.

\bibitem[\protect\citeauthoryear{Hunkrajok and Skulpakdee}{2025}]{Hun25} Hunkrajok, M. and Skulpakdee, W. (2025). A simple algorithm for computing the probabilities of count models based on pure birth processes. {\it Computational Statistics} 40(1), 249--272.

\bibitem[\protect\citeauthoryear{Janardan}{2005}]{Janardan05} Janardan, K. G. (2005). A discrete distribution associated with a pure birth process. {\it Statistical Papers} 46(4), 587-597.

\bibitem[\protect\citeauthoryear{Johnson and Kotz}{1969}]{Johnson69} Johnson, N. L. and Kotz, S. (1969). {\it Distributions in Statistics: Discrete Distributions}, Wiley, New York.

\bibitem[\protect\citeauthoryear{Johnson et al.}{2005}]{Johnson05} Johnson, N. L., Kemp, A. W., and Kotz, S. (2005). {\it Univariate discrete distributions}, John Wiley \& Sons.

\bibitem[\protect\citeauthoryear{Karlis and Xekalaki}{2000}]{Karlis00} Karlis, D. and Xekalaki, E. (2000). A simulation comparison of several procedures for testing the Poisson assumption. {\it Journal of the Royal Statistical Society Series D: The Statistician} 49(3), 355--382.

\bibitem[\protect\citeauthoryear{Kokonendji et al.}{2008}]{Kokonendji08} Kokonendji, C. C., Mizere, D., and Balakrishnan, N. (2008). Connections of the Poisson weight function to overdispersion and underdispersion. {\it Journal of Statistical Planning and Inference} 138(5), 1287--1296.


\bibitem[\protect\citeauthoryear{Lambert}{1992}]{Lambert92} Lambert, D. (1992). Zero-inflated Poisson regression, with an application to defects in manufacturing. {\it Technometrics} 34(1), 1--14.

\bibitem[\protect\citeauthoryear{Lerdsuwansri et al.}{2025}]{Lerd25} Lerdsuwansri, R., Pijitrattana, P., Sangnawakij, P., Lanumteang, K., Maruotti, A., Friedl, H., and Böhning, D. (2025). Identifying one-inflation in regression models for ratio estimators in single-source capture-recapture problems. {\it Journal of Statistical Computation and Simulation} 95(15), 3279--3299.

\bibitem[\protect\citeauthoryear{Lin and Tsai}{2013}]{Lin13} Lin, T. H. and Tsai, M. H. (2013). Modeling health survey data with excessive zero and K responses. {\it Statistics in Medicine} 32(9), 1572--1583.

\bibitem[\protect\citeauthoryear{Melkersson and Rooth}{2000}]{Melkersson00} Melkersson, M. and Rooth, D. O. (2000). Modeling female fertility using inflated count data models. {\it Journal of Population Economics} 13(2), 189--203.

\bibitem[\protect\citeauthoryear{Mullahy}{1986}]{Mullahy86} Mullahy, J. (1986). Specification and testing of some modified count data models. {\it Journal of econometrics} 33(3), 341--365.

\bibitem[\protect\citeauthoryear{Puig et al.}{2024}]{Puig24} Puig, P., Valero, J., and Fernández‐Fontelo, A. (2024). Some mechanisms leading to underdispersion: Old and new proposals. {\it Scandinavian Journal of Statistics} 51(1), 245--267.

\bibitem[\protect\citeauthoryear{R Core Team}{2019}]{RCore} R Core Team (2019). R: A language and environment for statistical computing. R Foundation for Statistical Computing, Vienna, Austria. 

\bibitem[\protect\citeauthoryear{Ridout et al.}{1998}]{Ridout98} Ridout, M., Demétrio, C. G., and Hinde, J. (1998). Models for count data with many zeros. In {\it Proceedings of the XIXth international biometric conference}, vol. 19, no. 1, pp. 179-192. Cape Town, South Africa: International Biometric Society Invited Papers.

\bibitem[\protect\citeauthoryear{Sankaran}{1970}]{Sankaran70} Sankaran, M. (1970). 275. note: The discrete poisson-lindley distribution. {\it Biometrics} 26(1), 145--149.

\bibitem[\protect\citeauthoryear{Singh}{1963}]{Singh63} Singh, S. (1963). A note on inflated Poisson distribution. {\it Journal of the Indian statistical association} 1, 140--144.

\bibitem[\protect\citeauthoryear{Skulpakdee and Hunkrajok}{2022}]{Skul22} Skulpakdee, W. and Hunkrajok, M. (2022). Unusual-event processes for count data. {\it SORT-Statistics and Operations Research Transactions} 46(1), 39--66.


\bibitem[\protect\citeauthoryear{Su et al.}{2013}]{Su13} Su, X. , Fan, J. , Levine, R. A. , Tan, X. and Tripathi, A. (2013). Multiple-inflation Poisson model with L1 regularization. {\it Statistica Sinica} 23, 1071--1090.


\bibitem[\protect\citeauthoryear{Westfall}{2014}]{Westfall14} Westfall, P.H.  (2014). Kurtosis as peakedness, 1905–2014. RIP. {\it The American Statistician} 68(3), 191--195.

\bibitem[\protect\citeauthoryear{Winkelmann}{1995}]{Winkelmann95} Winkelmann, R. (1995). Duration dependence and dispersion in count-data models. {\it Journal of Business \& Economic Statistics} 13(4), 467--474.  

\bibitem[\protect\citeauthoryear{Winkelmann}{2008}]{Winkelmann08} Winkelmann, R. (2008). {\it Econometric Analysis of Count Data}. Heidelberg: Springer Berlin Heidelberg, Berlin.

\bibitem[\protect\citeauthoryear{Wise}{1962}]{Wise62} Wise, J. (1962). The relationship between the mean and variance of a stationary birth-death process, and its economic application. {\it Biometrika} 49(1/2), 253--255.

\bibitem[\protect\citeauthoryear{Zhang et al.}{2016}]{Zhang16} Zhang, C., Tian, G. L., and Ng, K. W. (2016). Properties of the zero-and-one inflated Poisson distribution and likelihood-based inference methods. {\it Statistics and its interface} 9(1), 11--32.

\end{thebibliography}

\begin{thebibliography}{99}
	\renewcommand{\itemsep}{0pt}
	
	\bibitem[\protect\citeauthoryear{Bickel and Doksum}{2015}]{sBickel15} Bickel, P. J. and Doksum, K. A. (2015). {\it Mathematical Statistics: Basic Ideas and Selected Topics Volume I}, 2nd Ed. Chapman and Hall/CRC.
	
	\bibitem[\protect\citeauthoryear{Sankaran}{1970}]{sSankaran70} Sankaran, M. (1970). 275. note: The discrete poisson-lindley distribution. {\it Biometrics} 26(1), 145--149.
	
\end{thebibliography}
\appendix

\newpage
\setcounter{page}{1}
\setcounter{equation}{0}
\renewcommand{\theequation}{S\arabic{equation}}
\setcounter{section}{0}
\renewcommand{\thesection}{S\arabic{section}}
\setcounter{table}{0}
\renewcommand{\thetable}{S\arabic{table}}
\setcounter{figure}{0}
\renewcommand{\thefigure}{S\arabic{figure}}
\setcounter{proposition}{0}
\renewcommand{\theproposition}{S\arabic{proposition}}

\addtocontents{toc}{\protect\fi}

\begin{landscape}
\pagestyle{empty}%

\begin{center}
	\title{\huge Supplemental material for ``Stationary birth-death processes generating inflation-deflation distributions: Avoiding the issue of dominance''}
\end{center}

\tableofcontents

\newpage
\section{Derivation of the type 1 model}
\vspace{-0.5cm}
\begin{figure}[!h]
	\begin{subfigure}{\linewidth}
		\centering
		\resizebox{\linewidth}{!}{
			\begin{tikzpicture}
				\newcommand\xr{0.5}
				\newcommand\xs{2}
				
				\newcommand\xx{0} 
				\draw[black, very thick] (\xx,0) node {$\scriptstyle 0$} circle (\xr);
				\draw[ultra thick, ->] (\xx+\xr,\xr) arc (120:60:\xs);
				\draw[ultra thick, ->] (\xx+\xs+\xr,-\xr) arc (300:240:\xs);
				\node[above] at (\xx+\xr+\xs/2,0.8) {$\scriptstyle \gamma_0$};
				\node[below] at (\xx+\xr+\xs/2,-0.8) {$\scriptstyle \mu_1$};
				\node[below] at (\xx+\xr+\xs/2-0.4,-1.3) {$\scriptstyle \lambda_0=\frac{\gamma_0}{\mu_1}$};
				\node[below] at (\xx+\xr+\xs/2+0.3,-1.9) {$\scriptstyle =\frac{\alpha_1^{\textbf{1}_{\mathcal{F}}(1)}}{\alpha_0^{\textbf{1}_{\mathcal{F}}(0)}}\lambda_0^b$};
				
				\renewcommand\xx{3}
				\draw[black, very thick] (\xx,0) node {$\scriptstyle 1$} circle (\xr);
				\draw[ultra thick, ->] (\xx+\xr,\xr) arc (120:60:\xs);
				\draw[ultra thick, ->] (\xx+\xs+\xr,-\xr) arc (300:240:\xs);
				\node[above] at (\xx+\xr+\xs/2,0.8) {$\scriptstyle \gamma_1$};
				\node[below] at (\xx+\xr+\xs/2,-0.8) {$\scriptstyle \mu_2$};
				\node[below] at (\xx+\xr+\xs/2-0.4,-1.3) {$\scriptstyle \lambda_1=\frac{\gamma_1}{\mu_2}$};
				\node[below] at (\xx+\xr+\xs/2+0.3,-1.9) {$\scriptstyle =\frac{\alpha_2^{\textbf{1}_{\mathcal{F}}(2)}}{\alpha_1^{\textbf{1}_{\mathcal{F}}(1)}}\lambda_1^b$};

				\renewcommand\xx{2*3}
				\draw[black, very thick] (\xx,0) node {$\scriptstyle 2$} circle (\xr);
				\draw[black, very thick] (\xx+1,0) node {\textbf{...}};
				
				\renewcommand\xx{3*3-1}
				\draw[black, very thick] (\xx,0) node {$\scriptstyle q-1$} circle (\xr);
				\draw[ultra thick, ->] (\xx+\xr,\xr) arc (120:60:\xs);
				\draw[ultra thick, ->] (\xx+\xs+\xr,-\xr) arc (300:240:\xs);
				\node[above] at (\xx+\xr+\xs/2,0.8) {$\scriptstyle \gamma_{q-1}$};
				\node[below] at (\xx+\xr+\xs/2,-0.8) {$\scriptstyle \mu_q$};
				\node[below] at (\xx+\xr+\xs/2-0.4,-1.3) {$\scriptstyle \lambda_{q-1}=\frac{\gamma_{q-1}}{\mu_q}$};
				\node[below] at (\xx+\xr+\xs/2+0.65,-1.9) {$\scriptstyle =\frac{\alpha_q^{\textbf{1}_{\mathcal{F}}(q)}}{\alpha_{q-1}^{\textbf{1}_{\mathcal{F}}({q-1})}}\lambda_{q-1}^b$};

				\renewcommand\xx{4*3-1}
				\draw[black, very thick] (\xx,0) node {$\scriptstyle q$} circle (\xr);
				\draw[ultra thick, ->] (\xx+\xr,\xr) arc (120:60:\xs);
				\draw[ultra thick, ->] (\xx+\xs+\xr,-\xr) arc (300:240:\xs);
				\node[above] at (\xx+\xr+\xs/2,0.8) {$\scriptstyle \gamma_{q}$};
				\node[below] at (\xx+\xr+\xs/2,-0.8) {$\scriptstyle \mu_{q+1}$};
				\node[below] at (\xx+\xr+\xs/2-0.1,-1.3) {$\scriptstyle \lambda_{q}=\frac{\gamma_{q}}{\mu_{q+1}}$};
				\node[below] at (\xx+\xr+\xs/2+0.4,-2.2) {$\scriptstyle =\frac{1}{\alpha_{q}^{\textbf{1}_{\mathcal{F}}({q})}}\lambda_{q}^b$};

				\renewcommand\xx{5*3-1}
				\draw[black, very thick] (\xx,0) node {$\scriptstyle q+1$} circle (\xr);
				\draw[ultra thick, ->] (\xx+\xr,\xr) arc (120:60:\xs);
				\draw[ultra thick, ->] (\xx+\xs+\xr,-\xr) arc (300:240:\xs);
				\node[above] at (\xx+\xr+\xs/2,0.8) {$\scriptstyle \gamma_{q+1}$};
				\node[below] at (\xx+\xr+\xs/2,-0.8) {$\scriptstyle \mu_{q+2}$};
				\node[below] at (\xx+\xr+\xs/2,-1.3) {$\scriptstyle \lambda_{q+1}=\frac{\gamma_{q+1}}{\mu_{q+2}}$};
				\node[below] at (\xx+\xr+\xs/2+0.33,-2.2) {$\scriptstyle =\lambda_{q+1}^b$};

				\renewcommand\xx{6*3-1}
				\draw[black, very thick] (\xx,0) node {$\scriptstyle q+2$} circle (\xr);
				\draw[black, very thick] (\xx+1,0) node {\textbf{...}};
			\end{tikzpicture}
		}
	\end{subfigure}
	\caption{State-transition-rate ratio diagram for type 1 stationary process}
\end{figure}
By substituting $\lambda_{n}=\left\{\begin{array}{cll}
	\displaystyle \frac{\alpha_{n+1}^{\textbf{1}_{\mathcal{F}}(n+1)}}{\alpha_n^{\textbf{1}_{\mathcal{F}}(n)}}\lambda_n^b & , & n = 0,1,...,q-1 \\
	\displaystyle \frac{1}{\alpha_q^{\textbf{1}_{\mathcal{F}}(q)}}\lambda_q^b & , & n = q \\
	\displaystyle \lambda_n^b & , & n > q, \\
\end{array}\right.$, where $\mathcal{F}=\left\{ n_0,n_1,...,n_m \right\}\subseteq \left\{ 0,1,...,q \right\} \subset \left\{ 0,1,... \right\}$
, and $\textbf{1}_{\mathcal{F}}(n)$ is the indicator function defined as $\textbf{1}_{\mathcal{F}}(n)=1$ if $n \in \mathcal{F}$ and $0$ otherwise, into \\

\vspace{0.1cm}
\noindent $p(0,\boldsymbol{\theta},\boldsymbol{\alpha})^{-1}=  1+\sum_{i=0}^{\infty}\lambda_0\lambda_1...\lambda_{i}=1+\sum_{i=0}^{\infty}\prod_{j=0}^{i}\lambda_j$,
we have

\begin{align*}
	p(0,\boldsymbol{\theta},\boldsymbol{\alpha})^{-1}
	&= 1+\sum_{i=0}^{q-1}\frac{\alpha_{i+1}^{\textbf{1}_{\mathcal{F}}(i+1)}}{\alpha_0^{\textbf{1}_{\mathcal{F}}(0)}}\prod_{j=0}^{i}\lambda_j^b+\sum_{i=q}^{\infty}\frac{1}{\alpha_0^{\textbf{1}_{\mathcal{F}}(0)}}\prod_{j=0}^{i}   \lambda_j^b\\
	&=1+\sum_{i=0}^{q-1}\frac{\alpha_{i+1}^{\textbf{1}_{\mathcal{F}}(i+1)}}{\alpha_0^{\textbf{1}_{\mathcal{F}}(0)}}\prod_{j=0}^{i} \lambda_j^b-\left( \frac{1}{\alpha_0^{\textbf{1}_{\mathcal{F}}(0)}}+\sum_{i=0}^{q-1}\frac{1}{\alpha_0^{\textbf{1}_{\mathcal{F}}(0)}}\prod_{j=0}^{i}   \lambda_j^b \right) +\left( \frac{1}{\alpha_0^{\textbf{1}_{\mathcal{F}}(0)}}+\sum_{i=0}^{q-1}\frac{1}{\alpha_0^{\textbf{1}_{\mathcal{F}}(0)}}\prod_{j=0}^{i}   \lambda_j^b \right)+\sum_{i=q}^{\infty }\frac{1}{\alpha_0^{\textbf{1}_{\mathcal{F}}(0)}}\prod_{j=0}^{i}   \lambda_j^b\\
	&=\left( 1-\frac{1}{\alpha_0^{\textbf{1}_{\mathcal{F}}(0)}} \right)+\sum_{i=0}^{q-1}\left( \frac{\alpha_{i+1}^{\textbf{1}_{\mathcal{F}}(i+1)}}{\alpha_0^{\textbf{1}_{\mathcal{F}}(0)}}-\frac{1}{\alpha_0^{\textbf{1}_{\mathcal{F}}(0)}} \right)\prod_{j=0}^{i} \lambda_j^b+\frac{1}{\alpha_0^{\textbf{1}_{\mathcal{F}}(0)}}\left( 1+\sum_{i=0}^{\infty }\prod_{j=0}^{i} \lambda_j^b \right)\\
	&=\frac{1}{\alpha_0^{\textbf{1}_{\mathcal{F}}(0)}}\left( \left( \alpha_0^{\textbf{1}_{\mathcal{F}}(0)} -1\right)+\sum_{i=0}^{q-1}\left( \alpha_{i+1}^{\textbf{1}_{\mathcal{F}}(i+1)}-1 \right)\prod_{j=0}^{i} \lambda_j^b+z_b \left ( \boldsymbol{\theta} \right )  \right)\\
	&=\frac{z_b \left ( \boldsymbol{\theta} \right )}{\alpha_0^{\textbf{1}_{\mathcal{F}}(0)}}\left( \left( \alpha_0^{\textbf{1}_{\mathcal{F}}(0)} -1\right)\frac{1}{z_b \left ( \boldsymbol{\theta} \right ) }+\sum_{i=0}^{q-1}\left( \alpha_{i+1}^{\textbf{1}_{\mathcal{F}}(i+1)}-1 \right)\frac{\prod_{j=0}^{i} \lambda_j^b}{z_b \left ( \boldsymbol{\theta} \right ) }+1  \right)\\
	&=\frac{z_b \left ( \boldsymbol{\theta} \right )}{\alpha_0^{\textbf{1}_{\mathcal{F}}(0)}}\left( 1+\left( \alpha_0^{\textbf{1}_{\mathcal{F}}(0)} -1\right)b\left ( 0, \boldsymbol{\theta} \right )+\sum_{i=0}^{q-1}\left( \alpha_{i+1}^{\textbf{1}_{\mathcal{F}}(i+1)}-1 \right)b\left ( i+1, \boldsymbol{\theta} \right ) \right)\\
	&=\frac{z_b \left ( \boldsymbol{\theta} \right )}{\alpha_0^{\textbf{1}_{\mathcal{F}}(0)}}\left( 1+\left( \alpha_0^{\textbf{1}_{\mathcal{F}}(0)} -1\right)b\left ( 0, \boldsymbol{\theta} \right )+\sum_{i=1}^{q}\left( \alpha_{i}^{\textbf{1}_{\mathcal{F}}(i)}-1 \right)b\left (i, \boldsymbol{\theta} \right ) \right)\\
	&=\frac{z_b \left ( \boldsymbol{\theta} \right )}{\alpha_0^{\textbf{1}_{\mathcal{F}}(0)}}\left( 1+\sum_{i=0}^{q}\left( \alpha_{i}^{\textbf{1}_{\mathcal{F}}(i)}-1 \right)b\left (i, \boldsymbol{\theta} \right ) \right)\\ &=\frac{1}{\alpha_0^{\textbf{1}_{\mathcal{F}}(0)} b\left ( 0, \boldsymbol{\theta} \right )}\left( 1+\sum_{i=0}^{m}\left( \alpha_{n_i}-1 \right)b\left (n_i, \boldsymbol{\theta} \right ) \right).
\end{align*}
\begin{align*}
	p(0,\boldsymbol{\theta},\boldsymbol{\alpha})&=\frac{\alpha_0^{\textbf{1}_{\mathcal{F}}(0)} b\left ( 0, \boldsymbol{\theta} \right )}{z\left (\boldsymbol{\theta}, \boldsymbol{\alpha} \right )}.
\end{align*}

\noindent Substituting $\lambda_n$ and $p(0,\boldsymbol{\theta},\boldsymbol{\alpha})$ into $p(n,\boldsymbol{\theta},\boldsymbol{\alpha})=\lambda_0\lambda_1...\lambda_{n-1}p(0,\boldsymbol{\theta},\boldsymbol{\alpha})$
gives
\begin{align*}
	p(n,\boldsymbol{\theta},\boldsymbol{\alpha})
	&=\left\{\begin{array}{cll}
		\displaystyle \frac{\alpha_0^{\textbf{1}_{\mathcal{F}}(0)} b\left ( 0, \boldsymbol{\theta} \right )}{z\left (\boldsymbol{\theta}, \boldsymbol{\alpha} \right )} & , & n = 0 \\
		\displaystyle \frac{\alpha_0^{\textbf{1}_{\mathcal{F}}(0)} \alpha_1^{\textbf{1}_{\mathcal{F}}(1)} \cdots \alpha_{n-1}^{\textbf{1}_{\mathcal{F}}(n-1)} \alpha_{n}^{\textbf{1}_{\mathcal{F}}(n)} }{\alpha_0^{\textbf{1}_{\mathcal{F}}(0)} \alpha_1^{\textbf{1}_{\mathcal{F}}(1)} \cdots \alpha_{n-1}^{\textbf{1}_{\mathcal{F}}(n-1)}} \frac{\lambda_0^b \lambda_1^b ... \lambda_{n-1}^b b\left ( 0, \boldsymbol{\theta} \right )}{z\left (\boldsymbol{\theta}, \boldsymbol{\alpha} \right )}  & , & n = 1,...,q \\
		\displaystyle \frac{\alpha_0^{\textbf{1}_{\mathcal{F}}(0)} \alpha_1^{\textbf{1}_{\mathcal{F}}(1)} \cdots \alpha_{q}^{\textbf{1}_{\mathcal{F}}(q)} }{\alpha_0^{\textbf{1}_{\mathcal{F}}(0)} \alpha_1^{\textbf{1}_{\mathcal{F}}(1)} \cdots \alpha_{q}^{\textbf{1}_{\mathcal{F}}(q)}} \frac{\lambda_0^b \lambda_1^b ... \lambda_{q}^b b\left ( 0, \boldsymbol{\theta} \right )}{z\left (\boldsymbol{\theta}, \boldsymbol{\alpha} \right )} & , & n = q+1  \\
		\displaystyle \frac{\alpha_0^{\textbf{1}_{\mathcal{F}}(0)} \alpha_1^{\textbf{1}_{\mathcal{F}}(1)} \cdots \alpha_{q}^{\textbf{1}_{\mathcal{F}}(q)} }{\alpha_0^{\textbf{1}_{\mathcal{F}}(0)} \alpha_1^{\textbf{1}_{\mathcal{F}}(1)} \cdots \alpha_{q}^{\textbf{1}_{\mathcal{F}}(q)}} \frac{\lambda_0^b \lambda_1^b ... \lambda_{n-1}^b b\left ( 0, \boldsymbol{\theta} \right )}{z\left (\boldsymbol{\theta}, \boldsymbol{\alpha} \right )}  & , &n>q+1  \\
	\end{array}\right.\\
	&=\left\{\begin{array}{cll}
		\displaystyle \frac{\alpha_0^{\textbf{1}_{\mathcal{F}}(0)} b\left ( 0, \boldsymbol{\theta} \right )}{z\left (\boldsymbol{\theta}, \boldsymbol{\alpha} \right )} & , & n = 0 \\
		\displaystyle \frac{\alpha_{n}^{\textbf{1}_{\mathcal{F}}(n)} b(n,\boldsymbol{\theta})}{z(\boldsymbol{\theta},\boldsymbol{\alpha})} & , & n = 1,...,q \\
		\displaystyle \frac{ b(n,\boldsymbol{\theta})}{z(\boldsymbol{\theta},\boldsymbol{\alpha})} & , & n = q+1  \\
		\displaystyle \frac{ b(n,\boldsymbol{\theta})}{z(\boldsymbol{\theta},\boldsymbol{\alpha})} & , &n>q+1  \\
	\end{array}\right.\\
	&= \displaystyle \frac{\alpha_{n}^{\textbf{1}_{\mathcal{F}}(n)} b(n,\boldsymbol{\theta})}{z(\boldsymbol{\theta},\boldsymbol{\alpha})} = \frac{ \prod_{i=0}^{m}\alpha_{n_i}^{\textbf{1}_{n_i}(n)} b(n,\boldsymbol{\theta})}{z(\boldsymbol{\theta},\boldsymbol{\alpha})}.
\end{align*}
\noindent This completes the derivation.

\section{Derivation of the type 2 model}
\vspace{-0.5cm}
\begin{figure}[!h]
	\begin{subfigure}{\linewidth}
		\centering
		\resizebox{\linewidth}{!}{
			\begin{tikzpicture}
				\newcommand\xr{0.5}
				\newcommand\xs{2}

				\newcommand\xx{0} 
				\draw[black, very thick] (\xx,0) node {$\scriptstyle 0$} circle (\xr);
				\draw[ultra thick, ->] (\xx+\xr,\xr) arc (120:60:\xs);
				\draw[ultra thick, ->] (\xx+\xs+\xr,-\xr) arc (300:240:\xs);
				\node[above] at (\xx+\xr+\xs/2,0.8) {$\scriptstyle \gamma_0$};
				\node[below] at (\xx+\xr+\xs/2,-0.8) {$\scriptstyle \mu_1$};
				\node[below] at (\xx+\xr+\xs/2-0.4,-1.3) {$\scriptstyle \lambda_0=\frac{\gamma_0}{\mu_1}$};
				\node[below] at (\xx+\xr+\xs/2+0.3,-1.9) {$\scriptstyle =\frac{1}{\varphi_0^{\textbf{1}_{\mathcal{F}}({0})}}\lambda_0^b$};
				
				\renewcommand\xx{3}
				\draw[black, very thick] (\xx,0) node {$\scriptstyle 1$} circle (\xr);
				\draw[ultra thick, ->] (\xx+\xr,\xr) arc (120:60:\xs);
				\draw[ultra thick, ->] (\xx+\xs+\xr,-\xr) arc (300:240:\xs);
				\node[above] at (\xx+\xr+\xs/2,0.8) {$\scriptstyle \gamma_1$};
				\node[below] at (\xx+\xr+\xs/2,-0.8) {$\scriptstyle \mu_2$};
				\node[below] at (\xx+\xr+\xs/2-0.4,-1.3) {$\scriptstyle \lambda_1=\frac{\gamma_1}{\mu_2}$};
				\node[below] at (\xx+\xr+\xs/2+0.3,-1.9) {$\scriptstyle =\frac{1}{\varphi_1^{\textbf{1}_{\mathcal{F}}({1})}}\lambda_1^b$};
				
				\renewcommand\xx{2*3}
				\draw[black, very thick] (\xx,0) node {$\scriptstyle 2$} circle (\xr);
				\draw[black, very thick] (\xx+1,0) node {\textbf{...}};
				
				\renewcommand\xx{3*3-1}
				\draw[black, very thick] (\xx,0) node {$\scriptstyle q-1$} circle (\xr);
				\draw[ultra thick, ->] (\xx+\xr,\xr) arc (120:60:\xs);
				\draw[ultra thick, ->] (\xx+\xs+\xr,-\xr) arc (300:240:\xs);
				\node[above] at (\xx+\xr+\xs/2,0.8) {$\scriptstyle \gamma_{q-1}$};
				\node[below] at (\xx+\xr+\xs/2,-0.8) {$\scriptstyle \mu_{q}$};
				\node[below] at (\xx+\xr+\xs/2-0.4,-1.3) {$\scriptstyle \lambda_{q-1}=\frac{\gamma_{q-1}}{\mu_{q}}$};
				\node[below] at (\xx+\xr+\xs/2+0.65,-1.9) {$\scriptstyle =\frac{1}{\varphi_{q-1}^{\textbf{1}_{\mathcal{F}}({q-1})}}\lambda_{q-1}^b$};
				
				\renewcommand\xx{4*3-1}
				\draw[black, very thick] (\xx,0) node {$\scriptstyle q$} circle (\xr);
				\draw[ultra thick, ->] (\xx+\xr,\xr) arc (120:60:\xs);
				\draw[ultra thick, ->] (\xx+\xs+\xr,-\xr) arc (300:240:\xs);
				\node[above] at (\xx+\xr+\xs/2,0.8) {$\scriptstyle \gamma_{q}$};
				\node[below] at (\xx+\xr+\xs/2,-0.8) {$\scriptstyle \mu_{q+1}$};
				\node[below] at (\xx+\xr+\xs/2-0.1,-1.3) {$\scriptstyle \lambda_{q}=\frac{\gamma_{q}}{\mu_{q+1}}$};
				\node[below] at (\xx+\xr+\xs/2+0.4,-1.9) {$\scriptstyle =\frac{1}{\varphi_{q}^{\textbf{1}_{\mathcal{F}}({q})}}\lambda_{q}^b$};
				
				\renewcommand\xx{5*3-1}
				\draw[black, very thick] (\xx,0) node {$\scriptstyle q+1$} circle (\xr);
				\draw[ultra thick, ->] (\xx+\xr,\xr) arc (120:60:\xs);
				\draw[ultra thick, ->] (\xx+\xs+\xr,-\xr) arc (300:240:\xs);
				\node[above] at (\xx+\xr+\xs/2,0.8) {$\scriptstyle \gamma_{q+1}$};
				\node[below] at (\xx+\xr+\xs/2,-0.8) {$\scriptstyle \mu_{q+2}$};
				\node[below] at (\xx+\xr+\xs/2,-1.3) {$\scriptstyle \lambda_{q+1}=\frac{\gamma_{q+1}}{\mu_{q+2}}$};
				\node[below] at (\xx+\xr+\xs/2+0.33,-2.0) {$\scriptstyle =\lambda_{q+1}^b$};
				
				\renewcommand\xx{6*3-1}
				\draw[black, very thick] (\xx,0) node {$\scriptstyle q+2$} circle (\xr);
				\draw[black, very thick] (\xx+1,0) node {\textbf{...}};
			\end{tikzpicture}
		}
	\end{subfigure}
	\caption{State-transition-rate ratio diagram for type 2 stationary process}. 
\end{figure}
By substituting $\lambda_{n}=\left\{\begin{array}{cll}
	\displaystyle \frac{1}{\varphi_n^{\textbf{1}_{\mathcal{F}}({n})}}\lambda_n^b & , & n = 0,1,...,q \\
	\displaystyle \lambda_n^b & , & n>q, \\
\end{array}\right.$, where $\mathcal{F}=\left\{ n_0,n_1,...,n_m \right\}\subseteq \left\{ 0,1,...,q \right\} \subset \left\{ 0,1,... \right\}$
, and $\textbf{1}_{\mathcal{F}}(n)$ is the indicator function defined as $\textbf{1}_{\mathcal{F}}(n)=1$ if $n \in \mathcal{F}$ and $0$ otherwise, into\\

\vspace{0.1cm}
\noindent $p(0,\boldsymbol{\theta},\boldsymbol{\varphi})^{-1} =  1+\sum_{i=0}^{\infty}\lambda_0\lambda_1...\lambda_{i}=1+\sum_{i=0}^{\infty}\prod_{j=0}^{i}\lambda_j$,
we have

\begin{align*}
	p(0,\boldsymbol{\theta},\boldsymbol{\varphi})^{-1}
	&= 1+\sum_{i=0}^{q-1}\frac{1}{\prod_{j=0}^{i}\varphi _{j}^{\textbf{1}_{\mathcal{F}}({j})}} \prod_{j=0}^{i}\lambda_j^b+\sum_{i=q}^{\infty}\frac{1}{\prod_{j=0}^{q}\varphi _{j}^{\textbf{1}_{\mathcal{F}}({j})}} \prod_{j=0}^{i}   \lambda_j^b\\
\end{align*}

\begin{align*}
	&= 1+\sum_{i=0}^{q-1}\frac{1}{\prod_{j=0}^{i}\varphi _{j}^{\textbf{1}_{\mathcal{F}}({j})}} \prod_{j=0}^{i}\lambda_j^b -\left( \frac{1}{\prod_{j=0}^{q}\varphi _{j}^{\textbf{1}_{\mathcal{F}}({j})}}+\sum_{i=0}^{q-1}\frac{1}{\prod_{j=0}^{q}\varphi _{j}^{\textbf{1}_{\mathcal{F}}({j})}}\prod_{j=0}^{i}   \lambda_j^b \right) + \left( \frac{1}{\prod_{j=0}^{q}\varphi _{j}^{\textbf{1}_{\mathcal{F}}({j})}}+\sum_{i=0}^{q-1}\frac{1}{\prod_{j=0}^{q}\varphi _{j}^{\textbf{1}_{\mathcal{F}}({j})}}\prod_{j=0}^{i}   \lambda_j^b \right)\\ 
	&\quad +\sum_{i=q}^{\infty}\frac{1}{\prod_{j=0}^{q}\varphi _{j}^{\textbf{1}_{\mathcal{F}}({j})}} \prod_{j=0}^{i}   \lambda_j^b  \\ 
	&= \left ( 1- \frac{1}{\prod_{j=0}^{q}\varphi _{j}^{\textbf{1}_{\mathcal{F}}({j})}} \right ) 
	+ \sum_{i=0}^{q-1} \left ( \frac{1}{\prod_{j=0}^{i}\varphi _{j}^{\textbf{1}_{\mathcal{F}}({j})}}-\frac{1}{\prod_{j=0}^{q}\varphi _{j}^{\textbf{1}_{\mathcal{F}}({j})}} \right )  \prod_{j=0}^{i}\lambda_j^b  + \frac{1}{\prod_{j=0}^{q}\varphi _{j}^{\textbf{1}_{\mathcal{F}}({j})}} \left ( 1+ \sum_{i=0}^{\infty}\prod_{j=0}^{i}\lambda_j^b \right ) \\
	&= \frac{1}{\prod_{j=0}^{q}\varphi _{j}^{\textbf{1}_{\mathcal{F}}({j})}} \left ( \left (\prod_{j=0}^{q}\varphi _{j}^{\textbf{1}_{\mathcal{F}}({j})}- 1 \right ) 
	+ \sum_{i=0}^{q-1} \left ( \prod_{j=i+1}^{q}\varphi _{j}^{\textbf{1}_{\mathcal{F}}({j})}- 1 \right )  \prod_{j=0}^{i}\lambda_j^b  + z_b\left ( \boldsymbol{\theta } \right )  \right ) \\
	&= \frac{z_b\left ( \boldsymbol{\theta } \right ) }{\prod_{j=0}^{q}\varphi _{j}^{\textbf{1}_{\mathcal{F}}({j})}} \left ( \left (\prod_{j=0}^{q}\varphi _{j}^{\textbf{1}_{\mathcal{F}}({j})}- 1 \right ) \frac{1}{z_b\left ( \boldsymbol{\theta } \right )}
	+ \sum_{i=0}^{q-1} \left ( \prod_{j=i+1}^{q}\varphi _{j}^{\textbf{1}_{\mathcal{F}}({j})}- 1 \right )  \frac{\prod_{j=0}^{i}\lambda_j^b}{z_b\left ( \boldsymbol{\theta } \right )}  + 1 \right ) \\
	&= \frac{z_b\left ( \boldsymbol{\theta } \right ) }{\prod_{j=0}^{q}\varphi _{j}^{\textbf{1}_{\mathcal{F}}({j})}} \left ( 1+ \left (\prod_{j=0}^{q}\varphi _{j}^{\textbf{1}_{\mathcal{F}}({j})}- 1 \right ) b\left ( 0,\boldsymbol{\theta } \right )
	+ \sum_{i=0}^{q-1} \left ( \prod_{j=i+1}^{q}\varphi _{j}^{\textbf{1}_{\mathcal{F}}({j})}- 1 \right )  b\left ( i+1,\boldsymbol{\theta } \right )  \right )  \\
	&= \frac{z_b\left ( \boldsymbol{\theta } \right ) }{\prod_{j=0}^{q}\varphi _{j}^{\textbf{1}_{\mathcal{F}}({j})}} \left ( 1+ \left (\prod_{j=0}^{q}\varphi _{j}^{\textbf{1}_{\mathcal{F}}({j})}- 1 \right ) b\left ( 0,\boldsymbol{\theta } \right )
	+ \sum_{i=1}^{q} \left ( \prod_{j=i}^{q}\varphi _{j}^{\textbf{1}_{\mathcal{F}}({j})}- 1 \right )  b\left ( i,\boldsymbol{\theta } \right )  \right )  \\
	&= \frac{z_b\left ( \boldsymbol{\theta } \right ) }{\prod_{j=0}^{q}\varphi _{j}^{\textbf{1}_{\mathcal{F}}({j})}} \left ( 1
	+ \sum_{i=0}^{q} \left ( \prod_{j=i}^{q}\varphi _{j}^{\textbf{1}_{\mathcal{F}}({j})}- 1 \right )  b\left ( i,\boldsymbol{\theta } \right )  \right ) \\
\end{align*}

\begin{align*}
	&= \frac{z_b\left ( \boldsymbol{\theta } \right ) }{\prod_{j=0}^{q}\varphi _{j}^{\textbf{1}_{\mathcal{F}}({j})}} \left ( 1
	+ \sum_{k=0}^{n_0} \left ( \prod_{j=0}^{m}\varphi _{n_j}- 1 \right )  b\left ( k,\boldsymbol{\theta } \right )
	+ \sum_{k=n_0+1}^{n_1} \left ( \prod_{j=1}^{m}\varphi _{n_j}- 1 \right )  b\left ( k,\boldsymbol{\theta } \right )
	+...+ \sum_{k=n_{m-1}+1}^{n_m} \left ( \prod_{j=m}^{m}\varphi _{n_j}- 1 \right )  b\left ( k,\boldsymbol{\theta } \right )
	\right )  \\
	&=\frac{z_b\left ( \boldsymbol{\theta } \right ) }{\prod_{j=0}^{q}\varphi _{j}^{\textbf{1}_{\mathcal{F}}({j})}} \left ( 1
	+  \left ( \prod_{j=0}^{m}\varphi _{n_j}- 1 \right ) \sum_{k=n_{-1}+1}^{n_0} b\left ( k,\boldsymbol{\theta } \right )
	+ \left ( \prod_{j=1}^{m}\varphi _{n_j}- 1 \right ) \sum_{k=n_0+1}^{n_1} b\left ( k,\boldsymbol{\theta } \right )
	+... + \left ( \prod_{j=m}^{m}\varphi _{n_j}- 1 \right ) \sum_{k=n_{m-1}+1}^{n_m} b\left ( k,\boldsymbol{\theta } \right )
	\right ) \\
	&= \frac{1}{\prod_{j=0}^{q}\varphi _{j}^{\textbf{1}_{\mathcal{F}}({j})} b\left ( 0,\boldsymbol{\theta } \right )} \left ( 1 
	+ \sum_{i=0}^{m} \left ( \prod_{j=i}^{m}\varphi_{n_j}-1 \right ) \sum_{k=n_{i-1}+1}^{n_i} b\left ( k,\boldsymbol{\theta } \right )  \right ). \\
\end{align*}
\begin{equation*}
	\displaystyle p(0,\boldsymbol{\theta},\boldsymbol{\varphi}) =\frac{\prod_{j=0}^{q}\varphi _{j}^{\textbf{1}_{\mathcal{F}}({j})} b\left ( 0,\boldsymbol{\theta} \right )}{z\left ( \boldsymbol{\theta},\boldsymbol{\varphi } \right ) }. 
\end{equation*}

\vspace{0.5cm}
\noindent Substituting $\lambda_n$ and $p(0,\boldsymbol{\theta},\boldsymbol{\varphi})$ into $p(n,\boldsymbol{\theta},\boldsymbol{\varphi})=\lambda_0\lambda_1...\lambda_{n-1}p(0,\boldsymbol{\theta},\boldsymbol{\varphi})$
gives

\begin{align*}
	p(n,\boldsymbol{\theta},\boldsymbol{\varphi})
	&=\left\{\begin{array}{cll}
		\displaystyle \frac{\prod_{j=0}^{q}\varphi _{j}^{\textbf{1}_{\mathcal{F}}({j})} b\left ( 0,\boldsymbol{\theta} \right )}{z\left ( \boldsymbol{\theta},\boldsymbol{\varphi } \right ) } & , & n = 0 \\
		\displaystyle \frac{\prod_{j=0}^{q}\varphi _{j}^{\textbf{1}_{\mathcal{F}}({j})} }{\varphi _{0}^{\textbf{1}_{\mathcal{F}}({0})}\varphi _{1}^{\textbf{1}_{\mathcal{F}}({1})}...\varphi _{n-1}^{\textbf{1}_{\mathcal{F}}({n-1})}}\frac{\lambda_0^b\lambda_1^b...\lambda_{n-1}^b b\left ( 0,\boldsymbol{\theta} \right )}{z\left ( \boldsymbol{\theta},\boldsymbol{\varphi } \right ) } \displaystyle  & , & n = 1,2,...,q\\
		\displaystyle \frac{\prod_{j=0}^{q}\varphi _{j}^{\textbf{1}_{\mathcal{F}}({j})} }{\varphi _{0}^{\textbf{1}_{\mathcal{F}}({0})}\varphi _{1}^{\textbf{1}_{\mathcal{F}}({1})}...\varphi _{q}^{\textbf{1}_{\mathcal{F}}({q})}}\frac{\lambda_0^b\lambda_1^b...\lambda_{q}^b b\left ( 0,\boldsymbol{\theta} \right )}{z\left ( \boldsymbol{\theta},\boldsymbol{\varphi } \right ) }\displaystyle  & , & n = q+1  \\
		\displaystyle \frac{\prod_{j=0}^{q}\varphi _{j}^{\textbf{1}_{\mathcal{F}}({j})} }{\varphi _{0}^{\textbf{1}_{\mathcal{F}}({0})}\varphi _{1}^{\textbf{1}_{\mathcal{F}}({1})}...\varphi _{q}^{\textbf{1}_{\mathcal{F}}({q})}}\frac{\lambda_0^b\lambda_1^b...\lambda_{n}^b b\left ( 0,\boldsymbol{\theta} \right )}{z\left ( \boldsymbol{\theta},\boldsymbol{\varphi } \right ) }\displaystyle  & , & n>q+1 \\
	\end{array}\right.\\
	&=\left\{\begin{array}{cll}
		\displaystyle \frac{\prod_{j=0}^{q}\varphi _{j}^{\textbf{1}_{\mathcal{F}}({j})} b\left ( 0,\boldsymbol{\theta} \right )}{z\left ( \boldsymbol{\theta},\boldsymbol{\varphi } \right ) } & , & n = 0 \\
		\displaystyle \frac{\prod_{j=n}^{q}\varphi _{j}^{\textbf{1}_{\mathcal{F}}({j})} b\left ( 0,\boldsymbol{\theta} \right )}{z\left ( \boldsymbol{\theta},\boldsymbol{\varphi } \right ) }  & , & n=1,2,...,q\\
		\displaystyle \frac{ b\left ( n,\boldsymbol{\theta} \right )}{z\left ( \boldsymbol{\theta},\boldsymbol{\varphi } \right ) } & , & n > q \\
	\end{array}\right.\\
	&= \frac{\prod_{j=n}^{q}\varphi _{j}^{\textbf{1}_{\mathcal{F}}({j})} b\left (n,\boldsymbol{\theta} \right )}{z\left ( \boldsymbol{\theta},\boldsymbol{\varphi } \right ) }\\
	&= \frac{\prod_{i=0}^{m}\varphi_{n_i}^{u_{n_i}(n)} b(n,\boldsymbol{\theta})}{z(\boldsymbol{\theta},\boldsymbol{\varphi})}.
\end{align*}

\noindent This completes the derivation.

\section{Derivations of the PL ratio sequence and its probability mass function}

\vspace{-0.5cm}
\begin{figure}[!h]
	\begin{subfigure}{\linewidth}
		\centering
		\resizebox{\linewidth}{!}{
			\begin{tikzpicture}
				\newcommand\xr{0.5}
				\newcommand\xs{2}

				\newcommand\xx{0} 
				\draw[black, very thick] (\xx,0) node {$\scriptstyle 0$} circle (\xr);
				\draw[ultra thick, ->] (\xx+\xr,\xr) arc (120:60:\xs);
				\draw[ultra thick, ->] (\xx+\xs+\xr,-\xr) arc (300:240:\xs);
				\node[above] at (\xx+\xr+\xs/2,0.8) {$\scriptstyle \gamma_0$};
				\node[below] at (\xx+\xr+\xs/2,-0.8) {$\scriptstyle \mu_1$};
				\node[below] at (\xx+\xr+\xs/2-0.25,-1.4) {$\scriptstyle \lambda_0=\frac{\gamma_0}{\mu_1}$};
				\node[below] at (\xx+\xr+\xs/2+0.3,-2) {$\scriptstyle =\frac{\left ( 1+2\lambda \right )}{\left ( 1+\lambda \right )}\lambda$};
				
				\renewcommand\xx{3}
				\draw[black, very thick] (\xx,0) node {$\scriptstyle 1$} circle (\xr);
				\draw[ultra thick, ->] (\xx+\xr,\xr) arc (120:60:\xs);
				\draw[ultra thick, ->] (\xx+\xs+\xr,-\xr) arc (300:240:\xs);
				\node[above] at (\xx+\xr+\xs/2,0.8) {$\scriptstyle \gamma_1$};
				\node[below] at (\xx+\xr+\xs/2,-0.8) {$\scriptstyle \mu_2$};
				\node[below] at (\xx+\xr+\xs/2-0.25,-1.4) {$\scriptstyle \lambda_1=\frac{\gamma_1}{\mu_2}$};
				\node[below] at (\xx+\xr+\xs/2+0.3,-2) {$\scriptstyle =\frac{\left ( 1+3\lambda \right )}{\left ( 1+2\lambda \right )}\lambda$};
				
				\renewcommand\xx{2*3}
				\draw[black, very thick] (\xx,0) node {$\scriptstyle 2$} circle (\xr);
				\draw[black, very thick] (\xx+1,0) node {\textbf{...}};
				
				\renewcommand\xx{3*3-1}
				\draw[black, very thick] (\xx,0) node {$\scriptstyle q-1$} circle (\xr);
				\draw[ultra thick, ->] (\xx+\xr,\xr) arc (120:60:\xs);
				\draw[ultra thick, ->] (\xx+\xs+\xr,-\xr) arc (300:240:\xs);
				\node[above] at (\xx+\xr+\xs/2,0.8) {$\scriptstyle \gamma_{q-1}$};
				\node[below] at (\xx+\xr+\xs/2,-0.8) {$\scriptstyle \mu_{q}$};
				\node[below] at (\xx+\xr+\xs/2-0.4,-1.4) {$\scriptstyle \lambda_{q-1}=\frac{\gamma_{q-1}}{\mu_{q}}$};
				\node[below] at (\xx+\xr+\xs/2+0.45,-2) {$\scriptstyle =\frac{\left ( 1+\left ( q+1 \right )\lambda \right ) }{\left ( 1+q\lambda \right )}\lambda$};
				
				\renewcommand\xx{4*3-1}
				\draw[black, very thick] (\xx,0) node {$\scriptstyle q$} circle (\xr);
				\draw[ultra thick, ->] (\xx+\xr,\xr) arc (120:60:\xs);
				\draw[ultra thick, ->] (\xx+\xs+\xr,-\xr) arc (300:240:\xs);
				\node[above] at (\xx+\xr+\xs/2,0.8) {$\scriptstyle \gamma_{q}$};
				\node[below] at (\xx+\xr+\xs/2,-0.8) {$\scriptstyle \mu_{q+1}$};
				\node[below] at (\xx+\xr+\xs/2-0.3,-1.4) {$\scriptstyle \lambda_{q}=\frac{\gamma_{q}}{\mu_{q+1}}$};
				\node[below] at (\xx+\xr+\xs/2+0.35,-2) {$\scriptstyle =\frac{\left ( 1+\left ( q+2 \right )\lambda \right )}{\left ( 1+\left ( q+1 \right )\lambda \right )}\lambda$};
				
				\renewcommand\xx{5*3-1}
				\draw[black, very thick] (\xx,0) node {$\scriptstyle q+1$} circle (\xr);
				\draw[ultra thick, ->] (\xx+\xr,\xr) arc (120:60:\xs);
				\draw[ultra thick, ->] (\xx+\xs+\xr,-\xr) arc (300:240:\xs);
				\node[above] at (\xx+\xr+\xs/2,0.8) {$\scriptstyle \gamma_{q+1}$};
				\node[below] at (\xx+\xr+\xs/2,-0.8) {$\scriptstyle \mu_{q+2}$};
				\node[below] at (\xx+\xr+\xs/2-0.3,-1.4) {$\scriptstyle \lambda_{q+1}=\frac{\gamma_{q+1}}{\mu_{q+2}}$};
				\node[below] at (\xx+\xr+\xs/2+0.55,-2.0) {$\scriptstyle =\frac{\left ( 1+\left ( q+3 \right )\lambda \right )}{\left ( 1+\left ( q+2 \right )\lambda \right )}\lambda$};
				
				\renewcommand\xx{6*3-1}
				\draw[black, very thick] (\xx,0) node {$\scriptstyle q+2$} circle (\xr);
				\draw[black, very thick] (\xx+1,0) node {\textbf{...}};
			\end{tikzpicture}
		}
	\end{subfigure}
	\caption{State-transition-rate ratio diagram for the PL distribution}. 
\end{figure}
\citet[p.~145]{sSankaran70} proposes the PL distribution 
\begin{equation*}
	p\left ( n,\phi  \right )=\frac{\phi ^{2}\left ( \phi+2+n \right )}{\left ( \phi+1 \right )^{n+3}},
\end{equation*}
where $\phi >0$. We know that 
\begin{equation*}
	\lambda_n = \frac{p\left ( n+1,\phi  \right )}{p\left ( n,\phi  \right )} = \frac{\phi ^{2}\left ( \phi+2+n+1 \right )}{\left ( \phi+1 \right )^{n+4}} \frac{\left ( \phi+1 \right )^{n+3}}{\phi ^{2}\left ( \phi+2+n \right )} = \frac{\left ( \phi+3+n \right )}{\left ( \phi+1 \right )\left ( \phi+2+n \right )}. 
\end{equation*}
Let $\phi=\left ( \mu-\gamma\right )/\gamma$. Then
\begin{equation*}
	\lambda_n = \frac{\left ( \frac{\mu-\gamma}{\gamma}+3+n \right )}{\left ( \frac{\mu-\gamma}{\gamma}+1 \right )\left ( \frac{\mu-\gamma}{\gamma}+2+n \right )}= \frac{\left ( \mu-\gamma+3\gamma+n\gamma \right )\gamma }{ \left ( \mu-\gamma+\gamma \right )\left ( \mu-\gamma+2\gamma+n\gamma \right )} = \frac{\left ( \mu+2\gamma+n\gamma \right )\gamma }{ \left ( \mu+\gamma+n\gamma \right )\mu}.
\end{equation*}
Let $\lambda=\gamma/\mu$. Then
\begin{equation*}
	\lambda_n = \frac{\left ( 1+2\frac{\gamma}{\mu}+n\frac{\gamma}{\mu}\right )}{ \left ( 1+\frac{\gamma}{\mu}+n\frac{\gamma}{\mu} \right )} \frac{\gamma}{\mu} = \frac{\left ( 1+2\lambda+n\lambda\right )}{ \left ( 1+\lambda+n\lambda \right )} \lambda.
\end{equation*}
The PL distribution exists if $\lambda<1$ because $\lim_{n \to \infty} \lambda_n=\lambda$. By substituting $\lambda_n$ into $p\left ( 0,\lambda\right )^{-1}=1+\sum_{i=0}^{\infty}\lambda_0\lambda_1...\lambda_{i}$, we have
\begin{align*}
	p\left ( 0,\lambda  \right )^{-1}
	&= 1 + \frac{\left ( 1+2\lambda \right )}{ \left ( 1+\lambda \right )} \lambda + \frac{\left ( 1+3\lambda \right )}{ \left ( 1+\lambda \right )} \lambda^2 + \frac{\left ( 1+4\lambda \right )}{ \left ( 1+\lambda \right )} \lambda^3 + \frac{\left ( 1+5\lambda \right )}{ \left ( 1+\lambda \right )} \lambda^4 + ...\\
	&= \frac{\left ( 1+\lambda \right ) + \left ( 1+2\lambda \right )\lambda + \left ( 1+3\lambda \right )\lambda^2 + \left ( 1+4\lambda \right )\lambda^3 + \left ( 1+5\lambda \right )\lambda^4 + ... }{\left ( 1+\lambda \right )}\\
	&= \frac{1+\lambda +\lambda +2\lambda^2 + \lambda^2 + 3\lambda^3 + \lambda^3 + 4\lambda^4 + \lambda^4 + 5 \lambda^5 + ... }{\left ( 1+\lambda \right )}\\
	&= \frac{1+2\lambda +3\lambda^2 + 4\lambda^3 + 5\lambda^4 + ... }{\left ( 1+\lambda \right )}\\
	&= \frac{1+\left ( \lambda +\lambda  \right ) + \left ( \lambda^2 +\lambda\lambda+\lambda^2  \right ) + \left ( \lambda^3 +\lambda^2\lambda+\lambda \lambda^2+\lambda^3 \right ) + ... }{\left ( 1+\lambda \right )}\\
\end{align*}
\begin{align*}
	&= \frac{h_0\left ( \lambda, \lambda  \right ) + h_1\left ( \lambda, \lambda  \right ) + h_2\left ( \lambda, \lambda  \right ) + h_3\left ( \lambda, \lambda  \right ) + ... }{\left ( 1+\lambda \right )}\\
	&= \frac{\left ( \frac{1}{1-\lambda} \right )\left ( \frac{1}{1-\lambda} \right ) }{\left ( 1+\lambda \right )}\\
	&= \frac{1}{\left ( 1-\lambda \right )^2 \left ( 1+\lambda \right )}.
\end{align*}
$h_d\left ( \lambda, \lambda  \right ),\, d=0,1,2,...$, is the complete homogeneous symmetric polynomial with two variables. 
\begin{equation*}
	p\left ( 0,\lambda\right )=\left ( 1-\lambda \right )^2 \left ( 1+\lambda \right ) .
\end{equation*}
Substituting $\lambda_n$ and $p\left ( 0,\lambda\right )$ into $p\left ( n,\lambda\right )=\lambda_0\lambda_1...\lambda_{n-1}p\left ( 0,\lambda\right )$ gives
\begin{equation*}
	p\left ( n,\lambda\right )= \left ( 1-\lambda \right )^2 \left ( \frac{1+2\lambda +\left ( n-1 \right )\lambda }{1+\lambda } \right )\lambda^n \left ( 1+\lambda \right ) = \left ( 1-\lambda \right )^2 \left ( 1+\lambda +n\lambda \right )\lambda^n. 
\end{equation*}
Another easier derivation of this new form of the PL is obtained by setting $\phi=\left ( 1-\lambda  \right )/\lambda $.
\begin{equation*}
	p\left ( n,\phi \right )=\frac{\phi ^{2}\left ( \phi+2+n \right )}{\left ( \phi+1 \right )^{n+3}}.
\end{equation*}
\begin{equation*}
	p\left ( n,\lambda \right )=\left ( \frac{1-\lambda }{\lambda } \right )^{2}\left ( \frac{1-\lambda }{\lambda } +2+n \right )\lambda ^{n+3}=\left ( 1-\lambda \right )^2 \left ( 1+\lambda +n\lambda \right )\lambda^n.
\end{equation*}
We can recognize the PL distribution as a stationary birth-death process. This completes the derivations. 

\newpage
\section{Derivation of Equation (35)}
\vspace{-0.5cm}
\begin{figure}[!h]
	\begin{subfigure}{\linewidth}
		\centering
		\resizebox{\linewidth}{!}{
			\begin{tikzpicture}
				\newcommand\xr{0.5}
				\newcommand\xs{2}

				\newcommand\xx{0} 
				\draw[black, very thick] (\xx,0) node {$\scriptstyle 0$} circle (\xr);
				\draw[ultra thick, ->] (\xx+\xr,\xr) arc (120:60:\xs);
				\draw[ultra thick, ->] (\xx+\xs+\xr,-\xr) arc (300:240:\xs);
				\node[above] at (\xx+\xr+\xs/2,0.8) {$\scriptstyle \gamma_0$};
				\node[below] at (\xx+\xr+\xs/2,-0.8) {$\scriptstyle \mu_1$};
				\node[below] at (\xx+\xr+\xs/2-0.55,-1.4) {$\scriptstyle \lambda_0=\frac{\gamma_0}{\mu_1}$};
				\node[below] at (\xx+\xr+\xs/2+0.3,-2.0) {$\scriptstyle =\frac{h\left ( 1 \right )e^{\textbf{T}\left ( 1 \right )\boldsymbol{\eta}^T}}{h\left ( 0 \right )e^{\textbf{T}\left ( 0 \right )\boldsymbol{\eta}^T}}$};
				
				\renewcommand\xx{3}
				\draw[black, very thick] (\xx,0) node {$\scriptstyle 1$} circle (\xr);
				\draw[ultra thick, ->] (\xx+\xr,\xr) arc (120:60:\xs);
				\draw[ultra thick, ->] (\xx+\xs+\xr,-\xr) arc (300:240:\xs);
				\node[above] at (\xx+\xr+\xs/2,0.8) {$\scriptstyle \gamma_1$};
				\node[below] at (\xx+\xr+\xs/2,-0.8) {$\scriptstyle \mu_2$};
				\node[below] at (\xx+\xr+\xs/2-0.55,-1.4) {$\scriptstyle \lambda_1=\frac{\gamma_1}{\mu_2}$};
				\node[below] at (\xx+\xr+\xs/2+0.3,-2.0) {$\scriptstyle =\frac{h\left ( 2 \right )e^{\textbf{T}\left ( 2 \right )\boldsymbol{\eta}^T}}{h\left ( 1 \right )e^{\textbf{T}\left ( 1 \right )\boldsymbol{\eta}^T}}$};
				
				\renewcommand\xx{2*3}
				\draw[black, very thick] (\xx,0) node {$\scriptstyle 2$} circle (\xr);
				\draw[black, very thick] (\xx+1,0) node {\textbf{...}};
				
				\renewcommand\xx{3*3-1}
				\draw[black, very thick] (\xx,0) node {$\scriptstyle q-1$} circle (\xr);
				\draw[ultra thick, ->] (\xx+\xr,\xr) arc (120:60:\xs);
				\draw[ultra thick, ->] (\xx+\xs+\xr,-\xr) arc (300:240:\xs);
				\node[above] at (\xx+\xr+\xs/2,0.8) {$\scriptstyle \gamma_{q-1}$};
				\node[below] at (\xx+\xr+\xs/2,-0.8) {$\scriptstyle \mu_{q}$};
				\node[below] at (\xx+\xr+\xs/2-0.9,-1.4) {$\scriptstyle \lambda_{q-1}=\frac{\gamma_{q-1}}{\mu_{q}}$};
				\node[below] at (\xx+\xr+\xs/2+0.3,-2.0) {$\scriptstyle =\frac{h\left ( q \right )e^{\textbf{T}\left ( q \right )\boldsymbol{\eta}^T}}{h\left ( q-1 \right )e^{\textbf{T}\left ( q-1 \right )\boldsymbol{\eta}^T}}$};
				
				\renewcommand\xx{4*3-1}
				\draw[black, very thick] (\xx,0) node {$\scriptstyle q$} circle (\xr);
				\draw[ultra thick, ->] (\xx+\xr,\xr) arc (120:60:\xs);
				\draw[ultra thick, ->] (\xx+\xs+\xr,-\xr) arc (300:240:\xs);
				\node[above] at (\xx+\xr+\xs/2,0.8) {$\scriptstyle \gamma_{q}$};
				\node[below] at (\xx+\xr+\xs/2,-0.8) {$\scriptstyle \mu_{q+1}$};
				\node[below] at (\xx+\xr+\xs/2-0.5,-1.4) {$\scriptstyle \lambda_{q}=\frac{\gamma_{q}}{\mu_{q+1}}$};
				\node[below] at (\xx+\xr+\xs/2+0.5,-2.0) {$\scriptstyle =\frac{h\left ( q+1 \right )e^{\textbf{T}\left ( q+1 \right )\boldsymbol{\eta}^T}}{h\left ( q \right )e^{\textbf{T}\left ( q \right )\boldsymbol{\eta}^T}}$};
				
				\renewcommand\xx{5*3-1}
				\draw[black, very thick] (\xx,0) node {$\scriptstyle q+1$} circle (\xr);
				\draw[ultra thick, ->] (\xx+\xr,\xr) arc (120:60:\xs);
				\draw[ultra thick, ->] (\xx+\xs+\xr,-\xr) arc (300:240:\xs);
				\node[above] at (\xx+\xr+\xs/2,0.8) {$\scriptstyle \gamma_{q+1}$};
				\node[below] at (\xx+\xr+\xs/2,-0.8) {$\scriptstyle \mu_{q+2}$};
				\node[below] at (\xx+\xr+\xs/2-0.3,-1.4) {$\scriptstyle \lambda_{q+1}=\frac{\gamma_{q+1}}{\mu_{q+2}}$};
				\node[below] at (\xx+\xr+\xs/2+0.9,-2.0) {$\scriptstyle =\frac{h\left ( q+2 \right )e^{\textbf{T}\left ( q+2 \right )\boldsymbol{\eta}^T}}{h\left ( q+1 \right )e^{\textbf{T}\left ( q+1 \right )\boldsymbol{\eta}^T}}$};
				
				\renewcommand\xx{6*3-1}
				\draw[black, very thick] (\xx,0) node {$\scriptstyle q+2$} circle (\xr);
				\draw[black, very thick] (\xx+1,0) node {\textbf{...}};
			\end{tikzpicture}
		}
	\end{subfigure}
	\caption{State-transition-rate ratio diagram for the exponential-family distribution}. 
\end{figure}
We know from Section 2.1 and Definition 1 of the main manuscript that 
\begin{equation*}
	\lambda_n=\frac{q(n+1,\boldsymbol{\eta})}{q(n,\boldsymbol{\eta})}=
	\frac{h\left ( n+1 \right )e^{\textbf{T}\left ( n+1 \right )\boldsymbol{\eta}^T-A(\boldsymbol{\eta})}}{h\left ( n \right )e^{\textbf{T}\left ( n \right )\boldsymbol{\eta}^T-A(\boldsymbol{\eta})}}
	=\frac{h\left ( n+1 \right )e^{\textbf{T}\left ( n+1 \right )\boldsymbol{\eta}^T}}{h\left ( n \right )e^{\textbf{T}\left ( n \right )\boldsymbol{\eta}^T}}.
\end{equation*}

\vspace{0.3cm}
\noindent Substituting $\lambda_n$ into $1+\lambda_0 + \lambda_0\lambda_1 + \lambda_0\lambda_1\lambda_2 + ...$ gives

\begin{align*}
	1+\lambda_0 + \lambda_0\lambda_1 + \lambda_0\lambda_1\lambda_2 + ...
	&=1+ \frac{h\left ( 1 \right )e^{\textbf{T}\left ( 1 \right )\boldsymbol{\eta}^T}}{h\left ( 0 \right )e^{\textbf{T}\left ( 0 \right )\boldsymbol{\eta}^T}}+ \frac{h\left ( 2 \right )e^{\textbf{T}\left ( 2 \right )\boldsymbol{\eta}^T}}{h\left ( 0 \right )e^{\textbf{T}\left ( 0 \right )\boldsymbol{\eta}^T}}+ \frac{h\left ( 3 \right )e^{\textbf{T}\left ( 3 \right )\boldsymbol{\eta}^T}}{h\left ( 0 \right )e^{\textbf{T}\left ( 0 \right )\boldsymbol{\eta}^T}} +...\\
	&=\frac{1}{h\left ( 0 \right ) e^{\textbf{T}\left ( 0 \right ) \boldsymbol{\eta}^T}} 
	\left ( h\left ( 0 \right )e^{\textbf{T}\left ( 0 \right )\boldsymbol{\eta}^T} + h\left ( 1 \right )e^{\textbf{T}\left ( 1 \right )\boldsymbol{\eta}^T} + h\left ( 2 \right )e^{\textbf{T}\left ( 2 \right )\boldsymbol{\eta}^T}+...\right ) \\
	&=\frac{1}{h\left ( 0 \right ) e^{\textbf{T}\left ( 0 \right ) \boldsymbol{\eta}^T}} e^{A(\boldsymbol{\eta})}.
\end{align*}
\begin{equation*}
	e^{A(\boldsymbol{\eta})}= \left ( 1+\lambda_0 + \lambda_0\lambda_1 + \lambda_0\lambda_1\lambda_2 + ... \right ) h\left ( 0 \right )e^{\textbf{T}\left ( 0 \right ) \boldsymbol{\eta}^T }.
\end{equation*}
Taking the logarithm of both sides yields

\begin{equation*}
	A(\boldsymbol{\eta}) - \log h\left ( 0 \right ) - \textbf{T}\left ( 0 \right ) \boldsymbol{\eta}^T=\log \left( 1+\lambda_0 + \lambda_0\lambda_1 + \lambda_0\lambda_1\lambda_2 + ...\right).
\end{equation*}
This completes the derivation.

\section{Proof of Proposition 4}
By equating the type 1 and mixture probabilities, we obtain
\begin{equation*}
	w_n+\left ( 1-\sum_{i=0}^{m} w_{n_i} \right )b\left ( n,\boldsymbol{\theta} \right )=\frac{\alpha_n b\left ( n,\boldsymbol{\theta} \right )}{z (\boldsymbol{\theta},\boldsymbol{\alpha})},
\end{equation*}
and
\begin{equation*}
	\left ( 1-\sum_{i=0}^{m} w_{n_i} \right )=\frac{1}{z (\boldsymbol{\theta},\boldsymbol{\alpha})}.
\end{equation*}
Substitution of $1-\sum_{i=0}^{m} w_{n_i} $ into the first equation gives
\begin{align*}
	w_n+\frac{b\left ( n,\boldsymbol{\theta} \right )}{z (\boldsymbol{\theta},\boldsymbol{\alpha})}&=\frac{\alpha_n b\left ( n,\boldsymbol{\theta} \right )}{z (\boldsymbol{\theta},\boldsymbol{\alpha})}.\\
	w_n&=\frac{\left ( \alpha_n-1 \right ) b\left ( n,\boldsymbol{\theta} \right )}{z (\boldsymbol{\theta},\boldsymbol{\alpha})}.
\end{align*}
Substitution of $z (\boldsymbol{\theta},\boldsymbol{\alpha})$ into the first equation gives
\begin{align*}
	w_n+\left ( 1-\sum_{i=0}^{m} w_{n_i} \right )b\left ( n,\boldsymbol{\theta} \right )&=\alpha_n \left ( 1-\sum_{i=0}^{m} w_{n_i} \right ) b\left ( n,\boldsymbol{\theta} \right ). \\
	\alpha_n &= \frac{w_n+\left ( 1-\sum_{i=0}^{m} w_{n_i} \right ) b\left ( n,\boldsymbol{\theta} \right )}{\left ( 1-\sum_{i=0}^{m} w_{n_i} \right )b\left ( n,\boldsymbol{\theta} \right )}.
\end{align*}
This completes the proof.

\section{Proof of Proposition 6}
From Proposition 5 and (25) of the main menuscript we know that 
\begin{equation*}
	A (\boldsymbol{\eta})=A_b (\boldsymbol{\eta}_b) + \log z (\boldsymbol{\eta})=A_b (\boldsymbol{\eta}_b) + \log \left ( 1+\sum_{i=0}^{m} \left ( \prod_{j=i}^{m}e^{\eta_{l+j+1}}-1 \right )  \sum_{k=n_{i-1}+1}^{n_i}b\left ( k,\boldsymbol{\eta}_b \right ) \right ),
\end{equation*}
where $b\left ( k,\boldsymbol{\eta}_b \right )=h \left ( k \right ) e^{\sum_{s=1}^{l}T_s\left ( k \right )\eta_s -A_b \left ( \boldsymbol{\eta}_b \right )}$, and $\boldsymbol{\eta}=\left [ \boldsymbol{\eta}_b,\eta_{l+1},...,\eta_{l+m+1} \right ]$.\\

\noindent Applying Corollary 1.6.1 in \citet[p.~59]{sBickel15}, we obtain
\begin{align*}
	E\left [ N \right ]	
	&=\frac{\partial A_b}{\partial \eta_1}\left ( \boldsymbol{\eta}_b \right ) + \frac{\partial \log z}{\partial \eta_1} \left ( \boldsymbol{\eta} \right ) \\
	&=\frac{\partial A_b}{\partial \eta_1}\left ( \boldsymbol{\eta}_b \right ) + \frac{1}{z \left ( \boldsymbol{\eta} \right )} \frac{\partial  z}{\partial \eta_1}\left ( \boldsymbol{\eta}  \right )\\
	&=E_b\left [ N \right ] + \frac{1}{z \left ( \boldsymbol{\eta} \right )}\sum_{i=0}^{m} \left ( \prod_{j=i}^{m}e^{\eta_{l+j+1}}-1 \right ) \sum_{k=n_{i-1}+1}^{n_i} \frac{\partial  b}{\partial \eta_1}\left ( k,\boldsymbol{\eta}_b \right )\\
	&=E_b\left [ N \right ] + \frac{1}{z \left ( \boldsymbol{\eta} \right )} \sum_{i=0}^{m} \left ( \prod_{j=i}^{m}e^{\eta_{l+j+1}}-1 \right ) \sum_{k=n_{i-1}+1}^{n_i} h \left ( k \right ) e^{\sum_{s=1}^{l}T_s\left ( k \right )\eta_s -A_b \left ( \boldsymbol{\eta}_b \right )} \left ( T_{1}\left ( k \right ) - \frac{\partial  A_b}{\partial \eta_1} \left ( \boldsymbol{\eta}_b \right ) \right )  \\
	&=E_b\left [ N \right ] + \frac{1}{z \left ( \boldsymbol{\eta} \right )} \sum_{i=0}^{m} \left ( \prod_{j=i}^{m}e^{\eta_{l+j+1}}-1 \right ) \sum_{k=n_{i-1}+1}^{n_i} b\left ( k,\boldsymbol{\eta}_b \right ) \left ( T_1 \left ( k \right ) - \frac{\partial A_b}{\partial \eta_1} \left ( \boldsymbol{\eta}_b \right ) \right )\\
	&=E_b\left [ N \right ] + \frac{1}{z\left ( \boldsymbol{\theta},\boldsymbol{\varphi} \right )} \sum_{i=0}^{m} \left ( \prod_{j=i}^{m}\varphi_{n_j}-1 \right ) \sum_{k=n_{i-1}+1}^{n_i} b\left ( k,\boldsymbol{\theta} \right ) \left ( k - E_b \left [ N \right ] \right ), 		
\end{align*}
and

\begin{align*}
	V\left [ N \right ]
	&=\frac{\partial^2 A_b}{\partial \eta_1^2}\left ( \boldsymbol{\eta}_b \right ) + \frac{1}{z \left ( \boldsymbol{\eta} \right )^2} \left ( z\left ( \boldsymbol{\eta} \right ) \frac{\partial^2  z}{\partial \eta_1^2}\left ( \boldsymbol{\eta} \right ) -\left ( \frac{\partial  z}{\partial \eta_1}\left ( \boldsymbol{\eta} \right ) \right )^2 \right )\\
	&=V_b\left [ N \right ] + \frac{1}{z \left ( \boldsymbol{\eta} \right )}  \sum_{i=0}^{m} \left ( \prod_{j=i}^{m}e^{\eta_{l+j+1}}-1 \right ) \sum_{k=n_{i-1}+1}^{n_i} \frac{\partial^2  b}{\partial \eta_1^2}\left ( k,\boldsymbol{\eta}_b \right ) - \frac{1}{z \left ( \boldsymbol{\eta} \right )^2} \left ( z \left ( \boldsymbol{\eta} \right )   \left ( E\left [ N \right ] - E_b\left [ N \right ] \right ) \right )^2 \\		
	&=V_b\left [ N \right ] -  \left ( E\left [ N \right ] - E_b\left [ N \right ]  \right )^2 + \frac{1}{z \left ( \boldsymbol{\eta} \right )}  \sum_{i=0}^{m} \left ( \prod_{j=i}^{m}e^{\eta_{l+j+1}}-1 \right ) \sum_{k=n_{i-1}+1}^{n_i} \frac{\partial }{\partial \eta_1} \left ( b\left ( k,\boldsymbol{\eta}_b \right ) \left ( T_1 \left ( k \right )-\frac{\partial A_b}{\partial \eta_1} \left ( \boldsymbol{\eta}_b \right ) \right )  \right )\\
	&=V_b\left [ N \right ] -  \left ( E\left [ N \right ] - E_b\left [ N \right ]  \right )^2 + \frac{1}{z \left ( \boldsymbol{\eta} \right )}  \sum_{i=0}^{m} \left ( \prod_{j=i}^{m}e^{\eta_{l+j+1}}-1 \right ) \sum_{k=n_{i-1}+1}^{n_i} \left ( \left ( T_1 \left ( k \right )-\frac{\partial A_b}{\partial \eta_1} \left ( \boldsymbol{\eta}_b \right ) \right ) \frac{\partial b}{\partial \eta_1} \left ( k, \boldsymbol{\eta}_b \right )  \right. \\
	&\quad \left. + b\left ( k, \boldsymbol{\eta}_b \right ) \frac{\partial }{\partial \eta_1} \left ( T_1 \left ( k \right )-\frac{\partial A_b}{\partial \eta_1} \left ( \boldsymbol{\eta}_b \right ) \right ) \right )\\
	&=V_b\left [ N \right ] -  \left ( E\left [ N \right ] - E_b\left [ N \right ]  \right )^2 + \frac{1}{z \left ( \boldsymbol{\eta} \right )}  \sum_{i=0}^{m} \left ( \prod_{j=i}^{m}e^{\eta_{l+j+1}}-1 \right ) \sum_{k=n_{i-1}+1}^{n_i} \left ( b \left ( k, \boldsymbol{\eta}_b \right ) \left ( T_1 \left ( k \right )-\frac{\partial A_b}{\partial \eta_1} \left ( \boldsymbol{\eta}_b \right ) \right )^2  \right.\\
	&\quad \left. - b\left ( k, \boldsymbol{\eta}_b \right ) \frac{\partial^2 A_b}{\partial \eta_1^2} \left ( \boldsymbol{\eta}_b \right ) \right )\\
	&=V_b\left [ N \right ] -  \left ( E\left [ N \right ] - E_b\left [ N \right ]  \right )^2 + \frac{1}{z \left ( \boldsymbol{\theta},\boldsymbol{\varphi} \right )}  \sum_{i=0}^{m} \left ( \prod_{j=i}^{m}\varphi_{n_j}-1 \right ) \sum_{k=n_{i-1}+1}^{n_i} b \left ( k, \boldsymbol{\theta} \right ) \left ( \left ( k- E_b\left [ N \right ] \right )^2 - V_b\left [ N \right ]  \right ).
\end{align*}
This completes the proof.

\section{Proof of Proposition 7}
From Proposition 6 we know that if the base distribution is Poisson, and $\mathcal{F}= \left\{q \neq0 \right\}$, then
\begin{align*}
	E[N]&=E_b[N]+\frac{1}{z\left ( \boldsymbol{\theta},\boldsymbol{\varphi} \right )}\sum_{i=0}^{m} \left ( \prod_{j=i}^{m}\varphi_{n_j}-1 \right )  \sum_{k=n_{i-1}+1}^{n_i}b\left ( k,\boldsymbol{\theta} \right ) \left ( k-E_b[N] \right )\\
	&= \lambda+\frac{1}{z\left (\lambda,\varphi \right )}  \left ( \varphi-1 \right ) \sum_{k=0}^{q}\left ( k-\lambda \right )\frac{\lambda^k e^{-\lambda}}{k!},\, \text{and}
\end{align*}\vspace{-0.5cm} 
\begin{align*}
	V[N] &=V_b [N]-\left ( E[N]- E_b[N]\right )^2   +\frac{1}{z\left ( \boldsymbol{\theta},\boldsymbol{\varphi} \right )}\sum_{i=0}^{m} \left ( \prod_{j=i}^{m}\varphi_{n_j}-1 \right )  \sum_{k=n_{i-1}+1}^{n_i}b\left ( k,\boldsymbol{\theta} \right ) \left ( \left ( k-E_b[N] \right )^2-V_b[N] \right ) \\
	&= \lambda-\left ( q-\lambda \right )^2+\frac{1}{z\left ( \lambda,\varphi  \right )} \left ( \varphi-1 \right )  \sum_{k=0}^{q} \left (\left ( k-\lambda \right )^2-\lambda  \right ) \frac{\lambda^k e^{-\lambda}}{k!},
\end{align*}
where $\displaystyle z\left ( \lambda,\varphi \right ) =1+\left (\varphi-1 \right )\sum_{k=0}^{q}\frac{\lambda^k e^{-\lambda}}{k!}$.\\
Let
\begin{equation*}
	\varphi =  1+\frac{\lambda-q}{\sum_{k=0}^{q-1}\left ( q-k \right )\frac{\lambda^k e^{-\lambda}}{k!}}.
\end{equation*}
Then
\begin{align*}
	z\left ( \lambda,\varphi \right )&\displaystyle = 1+\left ( \varphi-1 \right )\sum_{k=0}^{q}\frac{\lambda^k e^{-\lambda}}{k!}\\
	&\displaystyle = 1+\frac{\lambda-q}{\sum_{k=0}^{q-1}\left ( q-k \right )\frac{\lambda^k e^{-\lambda}}{k!}}\sum_{k=0}^{q}\frac{\lambda^k e^{-\lambda}}{k!}\\
	&\displaystyle = \frac{\sum_{k=0}^{q-1}\left ( q-k \right )\frac{\lambda^k e^{-\lambda}}{k!}+\left ( \lambda-q \right )\sum_{k=0}^{q}\frac{\lambda^k e^{-\lambda}}{k!}}{\sum_{k=0}^{q-1}\left ( q-k \right )\frac{\lambda^k e^{-\lambda}}{k!}}\\
	&\displaystyle = \frac{\sum_{k=0}^{q}\left ( q-k \right )\frac{\lambda^k e^{-\lambda}}{k!}+\sum_{k=0}^{q}\left ( \lambda-q \right )\frac{\lambda^k e^{-\lambda}}{k!}}{\sum_{k=0}^{q-1}\left ( q-k \right )\frac{\lambda^k e^{-\lambda}}{k!}}\\
	&\displaystyle = \frac{\sum_{k=0}^{q} \left ( \lambda-k \right ) \frac{\lambda^k e^{-\lambda}}{k!}}{\sum_{k=0}^{q-1}\left ( q-k \right )\frac{\lambda^k e^{-\lambda}}{k!}},
\end{align*}

\begin{align*}
	E[N]
	&= \lambda+\left ( \frac{\sum_{k=0}^{q-1}  \left ( q-k \right )\frac{\lambda^k e^{-\lambda}}{k!}}{\sum_{k=0}^{q} \left ( \lambda-k \right ) \frac{\lambda^k e^{-\lambda}}{k!}} \right ) \left ( \frac{\lambda-q}{\sum_{k=0}^{q-1}\left ( q-k \right )\frac{\lambda^k e^{-\lambda}}{k!}} \right )    \sum_{k=0}^{q}\left ( k-\lambda \right )\frac{\lambda^k e^{-\lambda}}{k!}\\
	&= \lambda-\left ( \lambda-q \right )\frac{\sum_{k=0}^{q}\left ( \lambda-k \right )\frac{\lambda^k e^{-\lambda}}{k!}}{\sum_{k=0}^{q}\left ( \lambda-k \right )\frac{\lambda^k e^{-\lambda}}{k!}} \\
	&=q,
\end{align*}
and 
\begin{align*}
	V[N] 
	&= \lambda-\left ( q-\lambda \right )^2 +\left ( \frac{\sum_{k=0}^{q-1}\left ( q-k \right )\frac{\lambda^k e^{-\lambda}}{k!}}{\sum_{k=0}^{q}\left ( \lambda-k \right ) \frac{\lambda^k e^{-\lambda}}{k!}} \right ) \left ( \frac{\lambda-q}{\sum_{k=0}^{q-1}\left ( q-k \right )\frac{\lambda^k e^{-\lambda}}{k!}} \right )\sum_{k=0}^{q} \left ( \left ( k-\lambda \right )^2-\lambda \right ) \frac{\lambda^k e^{-\lambda}}{k!}\\	
	&= \lambda-\left ( q-\lambda \right )^2 +\left ( \lambda-q \right )\frac{\sum_{k=0}^{q} \left ( \left ( k-\lambda \right )^2-\lambda \right ) \frac{\lambda^k e^{-\lambda}}{k!}}{\sum_{k=0}^{q}\left ( \lambda-k \right )\frac{\lambda^k e^{-\lambda}}{k!}}\\
	&= \lambda-\left ( q-\lambda \right )^2 +\left ( \lambda-q \right )\frac{\sum_{k=0}^{q} \left ( k^2-2k \lambda+ \lambda^2-\lambda \right ) \frac{\lambda^k e^{-\lambda}}{k!}}{\sum_{k=0}^{q}\left ( \lambda-k \right )\frac{\lambda^k e^{-\lambda}}{k!}}\\
	&= \lambda-\left ( q-\lambda \right )^2 +\left ( \lambda-q \right )\frac{\sum_{k=0}^{q}k^2 \frac{\lambda^k e^{-\lambda}}{k!}-2\sum_{k=0}^{q} k \lambda \frac{\lambda^k e^{-\lambda}}{k!} +\sum_{k=0}^{q} \lambda^2 \frac{\lambda^k e^{-\lambda}}{k!}-\sum_{k=0}^{q} \lambda \frac{\lambda^k e^{-\lambda}}{k!} }{\sum_{k=0}^{q} \lambda \frac{\lambda^k e^{-\lambda}}{k!} - \sum_{k=0}^{q} k \frac{\lambda^k e^{-\lambda}}{k!}}\\
	&= \lambda-\left ( q-\lambda \right )^2 +\left ( \lambda-q \right )\frac{\sum_{k=1}^{q} k \frac{\lambda^k e^{-\lambda}}{\left ( k-1 \right )!} - \sum_{k=0}^{q} \frac{\lambda^{k+1} e^{-\lambda}}{k!} - 2\sum_{k=1}^{q} \frac{\lambda^{k+1} e^{-\lambda}}{\left ( k-1 \right )!} + \sum_{k=0}^{q} \frac{\lambda^{k+2} e^{-\lambda}}{k!}}{\sum_{k=0}^{q} \frac{\lambda^{k+1} e^{-\lambda}}{k!} - \sum_{k=1}^{q} \frac{\lambda^k e^{-\lambda}}{\left ( k-1 \right )!}}\\
	&= \lambda-\left ( q-\lambda \right )^2 +\left ( \lambda-q \right )\frac{\sum_{k=0}^{q-1} \left ( k+1 \right ) \frac{\lambda^{k+1} e^{-\lambda}}{k!}-\sum_{k=0}^{q} \frac{\lambda^{k+1} e^{-\lambda}}{k!}-2\sum_{k=0}^{q-1} \frac{\lambda^{k+2} e^{-\lambda}}{k!} +\sum_{k=0}^{q} \frac{\lambda^{k+2} e^{-\lambda}}{k!}}{\sum_{k=0}^{q-1} \frac{\lambda^{k+1} e^{-\lambda}}{k!} +\frac{\lambda^{q+1} e^{-\lambda}}{q!} - \sum_{k=0}^{q-1} \frac{\lambda^{k+1} e^{-\lambda}}{k!}}\\
	&= \lambda-\left ( q-\lambda \right )^2 +\left ( \lambda-q \right )\frac{\sum_{k=0}^{q-1} k \frac{\lambda^{k+1} e^{-\lambda}}{k!}-\frac{\lambda^{q+1} e^{-\lambda}}{q!} -\sum_{k=0}^{q-1} \frac{\lambda^{k+2} e^{-\lambda}}{k!} +\frac{\lambda^{q+2} e^{-\lambda}}{q!}}{\frac{\lambda^{q+1} e^{-\lambda}}{q!}}\\
	&= \lambda-\left ( q-\lambda \right )^2 +\left ( \lambda-q \right )\frac{\sum_{k=1}^{q-1} \frac{\lambda^{k+1} e^{-\lambda}}{\left ( k-1 \right )!} -\sum_{k=0}^{q-1} \frac{\lambda^{k+2} e^{-\lambda}}{k!} -\frac{\lambda^{q+1} e^{-\lambda}}{q!} +\frac{\lambda^{q+2} e^{-\lambda}}{q!}}{\frac{\lambda^{q+1} e^{-\lambda}}{q!}}\\
	&= \lambda-\left ( q-\lambda \right )^2 +\left ( \lambda-q \right )\frac{\sum_{k=0}^{q-2} \frac{\lambda^{k+2} e^{-\lambda}}{k!} -\sum_{k=0}^{q-1} \frac{\lambda^{k+2} e^{-\lambda}}{k!} -\frac{\lambda^{q+1} e^{-\lambda}}{q!} +\frac{\lambda^{q+2} e^{-\lambda}}{q!}}{\frac{\lambda^{q+1} e^{-\lambda}}{q!}}\\
\end{align*}

\begin{align*}
	&= \lambda-\left ( q-\lambda \right )^2 +\left ( \lambda-q \right )\frac{-\frac{\lambda^{q+1} e^{-\lambda}}{\left ( q-1 \right )!} -\frac{\lambda^{q+1} e^{-\lambda}}{q!} +\frac{\lambda^{q+2} e^{-\lambda}}{q!}}{\frac{\lambda^{q+1} e^{-\lambda}}{q!}}\\
	&= \lambda-\left ( q-\lambda \right )^2 +\left ( \lambda-q \right )\left ( -q-1+\lambda \right )\\
	&= \lambda-\left ( q-\lambda \right )^2 +\left ( \lambda-q \right )^2 -\left ( \lambda-q \right )\\
	&= q.
\end{align*}
This completes the proof.

\section{Mean and variance of the type 1 model}

\begin{proposition}
	\label{type1_meanvariance}
	Suppose $N$ is a type 1 stationary distribution (13), and $b(n,\boldsymbol{\theta})$ is a member of the exponential family with $T_1(n)=n$. Then
	\begin{equation*} 
		E[N]=E_b[N]+\frac{1}{z\left ( \boldsymbol{\theta},\boldsymbol{\alpha} \right )}\sum_{i=0}^{m} \left ( \alpha_{n_i}-1 \right ) b\left ( n_i,\boldsymbol{\theta} \right ) \left ( n_i-E_b[N] \right ),
	\end{equation*}
	and
	\begin{align*}
		&V[N] =V_b [N]-\left ( E[N]- E_b[N]\right )^2 +\frac{1}{z\left ( \boldsymbol{\theta},\boldsymbol{\alpha} \right )}\sum_{i=0}^{m} \left ( \alpha_{n_i}-1 \right ) b\left ( n_i,\boldsymbol{\theta} \right ) \left ( \left ( n_i-E_b[N] \right )^2-V_b[N] \right ),
	\end{align*}
	$E_b[N]$ and $V_b[N]$ denote the mean and variance of $b(n,\boldsymbol{\theta})$, respectively.
\end{proposition}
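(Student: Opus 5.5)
The plan is to mirror the proof of \Cref{ZkIS_meanvariance}, using the exponential-family representation of the type 1 model from \Cref{ZkIS_expofamily}. In that representation $A(\boldsymbol{\eta})=A_b(\boldsymbol{\eta}_b)+\log z(\boldsymbol{\eta})$, where, by \eqref{eq:z},
\begin{equation*}
z(\boldsymbol{\eta})=1+\sum_{i=0}^{m}\left(e^{\eta_{l+i+1}}-1\right)b(n_i,\boldsymbol{\eta}_b),
\qquad
b(k,\boldsymbol{\eta}_b)=h(k)\exp\Bigl\{\sum_{s=1}^{l}T_s(k)\eta_s-A_b(\boldsymbol{\eta}_b)\Bigr\}.
\end{equation*}
Since $T_1(n)=n$, Corollary 1.6.1 of \citet{sBickel15} gives $E[N]=\partial A/\partial\eta_1$ and $V[N]=\partial^2A/\partial\eta_1^2$. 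Both follow by differentiating with respect to $\eta_1$ only; the extra parameters $\eta_{l+i+1}=\log\alpha_{n_i}$ are held fixed.

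\emph{Mean.} The key identity is
\begin{equation*}
\frac{\partial b(k,\boldsymbol{\eta}_b)}{\partial\eta_1}=b(k,\boldsymbol{\eta}_b)\,\bigl(k-E_b[N]\bigr),
\end{equation*}
which holds because $\partial A_b/\partial\eta_1=E_b[N]$. Then
\begin{equation*}
E[N]=E_b[N]+\frac{1}{z}\frac{\partial z}{\partial\eta_1}
=E_b[N]+\frac{1}{z}\sum_{i=0}^{m}(\alpha_{n_i}-1)\,b(n_i,\boldsymbol{\theta})\,(n_i-E_b[N]),
\end{equation*}
which is the stated mean.

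\emph{Variance.} Write
\begin{equation*}
\frac{\partial^2\log z}{\partial\eta_1^2}=\frac{1}{z}\frac{\partial^2 z}{\partial\eta_1^2}-\Bigl(\frac{1}{z}\frac{\partial z}{\partial\eta_1}\Bigr)^2 .
\end{equation*}
The second term equals $(E[N]-E_b[N])^2$ by the mean computation. For the first term, differentiate $b(k)(k-E_b[N])$ once more with respect to $\eta_1$, using the product rule and $\partial E_b[N]/\partial\eta_1=V_b[N]$:
\begin{equation*}
\frac{\partial}{\partial\eta_1}\Bigl[b(k)\bigl(k-E_b[N]\bigr)\Bigr]=b(k)\Bigl(\bigl(k-E_b[N]\bigr)^2-V_b[N]\Bigr).
\end{equation*}
Adding $\partial^2A_b/\partial\eta_1^2=V_b[N]$ gives the stated variance.

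Nothing here is a real obstacle; the argument is the type 2 proof with each block sum $\sum_{k=n_{i-1}+1}^{n_i}$ replaced by the single point $k=n_i$, and each product $\prod_{j\ge i}\varphi_{n_j}$ replaced by $\alpha_{n_i}$. The only points needing care are keeping the $(E[N]-E_b[N])^2$ term with the correct sign, and justifying differentiation of $A$ inside $\mathcal{E}$. The latter is standard for exponential families and, in any case, $z$ is a finite sum.
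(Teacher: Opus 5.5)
Your proposal is correct and follows the paper's own proof of this proposition. That proof likewise writes $A(\boldsymbol{\eta})=A_b(\boldsymbol{\eta}_b)+\log z(\boldsymbol{\eta})$ with $z(\boldsymbol{\eta})=1+\sum_{i=0}^{m}(e^{\eta_{l+i+1}}-1)\,b(n_i,\boldsymbol{\eta}_b)$ and applies Corollary 1.6.1 of Bickel and Doksum. It then differentiates in $\eta_1$, using $\partial b(k,\boldsymbol{\eta}_b)/\partial\eta_1=b(k,\boldsymbol{\eta}_b)\,(k-E_b[N])$ together with the quotient and product rules, exactly as you describe, as the type 2 argument with each block sum collapsed to the single point $k=n_i$.
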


\noindent \textit{Proof.} From Proposition 5 and (15) of the main menuscript we know that 
\begin{equation*}
	A (\boldsymbol{\eta})=A_b (\boldsymbol{\eta}_b) + \log z (\boldsymbol{\eta})=A_b (\boldsymbol{\eta}_b) + \log \left ( 1+\sum_{i=0}^{m} \left ( e^{\eta_{l+i+1}}-1 \right ) b\left ( n_i,\boldsymbol{\eta}_b \right ) \right ),
\end{equation*}
where $b\left ( n_i,\boldsymbol{\eta}_b \right )=h \left ( n_i \right ) e^{\sum_{s=1}^{l}T_s\left ( n_i \right )\eta_s -A_b \left ( \boldsymbol{\eta}_b \right )}$, and $\boldsymbol{\eta}=\left [ \boldsymbol{\eta}_b,\eta_{l+1},...,\eta_{l+m+1} \right ]$.\\

\noindent Applying Corollary 1.6.1 in \citet[p.~59]{sBickel15}, we obtain
\begin{align*}
	E\left [ N \right ]	
	&=\frac{\partial A_b}{\partial \eta_1}\left ( \boldsymbol{\eta}_b \right ) + \frac{\partial \log z}{\partial \eta_1} \left ( \boldsymbol{\eta} \right ) \\
	&=\frac{\partial A_b}{\partial \eta_1}\left ( \boldsymbol{\eta}_b \right ) + \frac{1}{z \left ( \boldsymbol{\eta} \right )} \frac{\partial  z}{\partial \eta_1}\left ( \boldsymbol{\eta}  \right )\\
	&=E_b\left [ N \right ] + \frac{1}{z \left ( \boldsymbol{\eta} \right )}\sum_{i=0}^{m} \left ( e^{\eta_{l+i+1}}-1 \right ) \frac{\partial  b}{\partial \eta_1}\left ( n_i,\boldsymbol{\eta}_b \right )\\
	&=E_b\left [ N \right ] + \frac{1}{z \left ( \boldsymbol{\eta} \right )} \sum_{i=0}^{m} \left ( e^{\eta_{l+i+1}}-1 \right ) h \left ( n_i \right ) e^{\sum_{s=1}^{l}T_s\left ( n_i \right )\eta_s -A_b \left ( \boldsymbol{\eta}_b \right )} \left ( T_{1}\left ( n_i \right ) - \frac{\partial  A_b}{\partial \eta_1} \left ( \boldsymbol{\eta}_b \right ) \right )  \\
	&=E_b\left [ N \right ] + \frac{1}{z \left ( \boldsymbol{\eta} \right )} \sum_{i=0}^{m} \left ( e^{\eta_{l+i+1}}-1 \right ) b\left ( n_i,\boldsymbol{\eta}_b \right ) \left ( T_1 \left ( n_i \right ) - \frac{\partial A_b}{\partial \eta_1} \left ( \boldsymbol{\eta}_b \right ) \right )\\
	&=E_b\left [ N \right ] + \frac{1}{z\left ( \boldsymbol{\theta},\boldsymbol{\alpha} \right )} \sum_{i=0}^{m} \left ( \alpha_{n_i}-1 \right ) b\left ( n_i,\boldsymbol{\theta} \right ) \left ( n_i - E_b \left [ N \right ] \right ), 		
\end{align*}
and

\begin{align*}
	V\left [ N \right ]
	&=\frac{\partial^2 A_b}{\partial \eta_1^2}\left ( \boldsymbol{\eta}_b \right ) + \frac{1}{z \left ( \boldsymbol{\eta} \right )^2} \left ( z\left ( \boldsymbol{\eta} \right ) \frac{\partial^2  z}{\partial \eta_1^2}\left ( \boldsymbol{\eta} \right ) -\left ( \frac{\partial  z}{\partial \eta_1}\left ( \boldsymbol{\eta} \right ) \right )^2 \right )\\
	&=V_b\left [ N \right ] + \frac{1}{z \left ( \boldsymbol{\eta} \right )}  \sum_{i=0}^{m} \left ( e^{\eta_{l+i+1}}-1 \right ) \frac{\partial^2  b}{\partial \eta_1^2}\left ( n_i,\boldsymbol{\eta}_b \right ) - \frac{1}{z \left ( \boldsymbol{\eta} \right )^2} \left ( z \left ( \boldsymbol{\eta} \right )   \left ( E\left [ N \right ] - E_b\left [ N \right ] \right ) \right )^2 \\		
	&=V_b\left [ N \right ] -  \left ( E\left [ N \right ] - E_b\left [ N \right ]  \right )^2 + \frac{1}{z \left ( \boldsymbol{\eta} \right )}  \sum_{i=0}^{m} \left ( e^{\eta_{l+i+1}}-1 \right ) \frac{\partial }{\partial \eta_1} \left ( b\left ( n_i,\boldsymbol{\eta}_b \right ) \left ( T_1 \left ( n_i \right )-\frac{\partial A_b}{\partial \eta_1} \left ( \boldsymbol{\eta}_b \right ) \right )  \right )\\
	&=V_b\left [ N \right ] -  \left ( E\left [ N \right ] - E_b\left [ N \right ]  \right )^2 + \frac{1}{z \left ( \boldsymbol{\eta} \right )}  \sum_{i=0}^{m} \left ( e^{\eta_{l+i+1}}-1 \right ) \left ( \left ( T_1 \left ( n_i \right )-\frac{\partial A_b}{\partial \eta_1} \left ( \boldsymbol{\eta}_b \right ) \right ) \frac{\partial b}{\partial \eta_1} \left ( n_i, \boldsymbol{\eta}_b \right )  \right. \\
	&\quad \left. + b\left ( n_i, \boldsymbol{\eta}_b \right ) \frac{\partial }{\partial \eta_1} \left ( T_1 \left ( n_i \right )-\frac{\partial A_b}{\partial \eta_1} \left ( \boldsymbol{\eta}_b \right ) \right ) \right )\\
	&=V_b\left [ N \right ] -  \left ( E\left [ N \right ] - E_b\left [ N \right ]  \right )^2 + \frac{1}{z \left ( \boldsymbol{\eta} \right )}  \sum_{i=0}^{m} \left ( e^{\eta_{l+i+1}}-1 \right ) \left ( b \left ( n_i, \boldsymbol{\eta}_b \right ) \left ( T_1 \left ( n_i \right )-\frac{\partial A_b}{\partial \eta_1} \left ( \boldsymbol{\eta}_b \right ) \right )^2  \right.\\
	&\quad \left. - b\left ( n_i, \boldsymbol{\eta}_b \right ) \frac{\partial^2 A_b}{\partial \eta_1^2} \left ( \boldsymbol{\eta}_b \right ) \right )\\
	&=V_b\left [ N \right ] -  \left ( E\left [ N \right ] - E_b\left [ N \right ]  \right )^2 + \frac{1}{z \left ( \boldsymbol{\theta},\boldsymbol{\alpha} \right )}  \sum_{i=0}^{m} \left ( \alpha_{n_i}-1 \right ) b \left ( n_i, \boldsymbol{\theta} \right ) \left ( \left ( n_i- E_b\left [ N \right ] \right )^2 - V_b\left [ N \right ]  \right ).
\end{align*}
This completes the proof.

\afterpage{
	\begin{figure}[H]
		\begin{subfigure}[b]{0.245\linewidth}
			\centering
			\includegraphics[height=4.4cm]{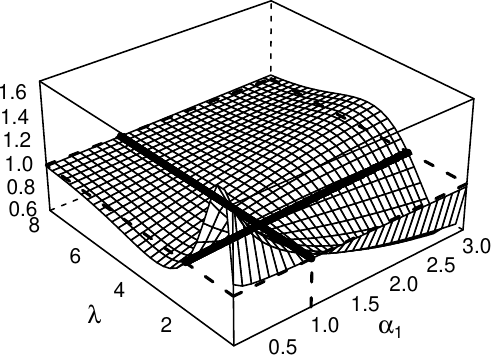}
			\caption{$q=1$}
			\label{fig:ssfig1}
		\end{subfigure}
		\begin{subfigure}[b]{0.245\linewidth}
			\includegraphics[height=4.4cm]{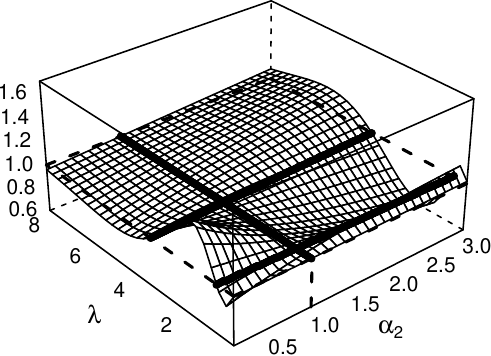}
			\caption{$q=2$}
			\label{fig:ssfig2}
		\end{subfigure}
		\begin{subfigure}[b]{0.245\linewidth}
			\includegraphics[height=4.4cm]{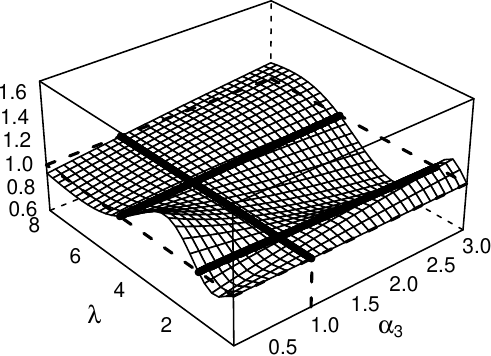}
			\caption{$q=3$}
			\label{fig:ssfig3}
		\end{subfigure}
		\begin{subfigure}[b]{0.245\linewidth}
			\includegraphics[height=4.4cm]{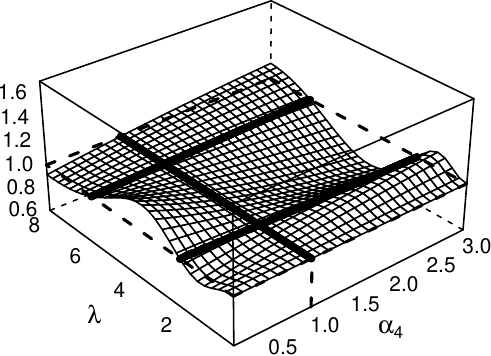}
			\caption{$q=4$}
			\label{fig:ssfig4}
		\end{subfigure} \vspace{0.1em}
		
		\begin{subfigure}[b]{0.245\linewidth}
			\centering
			\includegraphics[height=4.4cm]{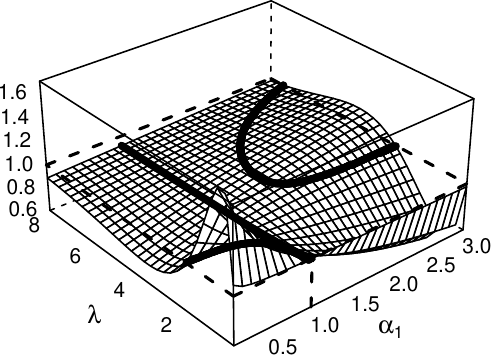}
			\caption{$q=1$, $\nu=1.1$}
			\label{fig:ssfig5}
		\end{subfigure}
		\begin{subfigure}[b]{0.245\linewidth}
			\includegraphics[height=4.4cm]{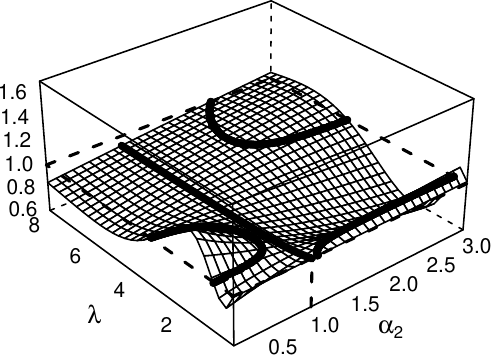}
			\caption{$q=2$, $\nu=1.1$}
			\label{fig:ssfig6}
		\end{subfigure}
		\begin{subfigure}[b]{0.245\linewidth}
			\includegraphics[height=4.4cm]{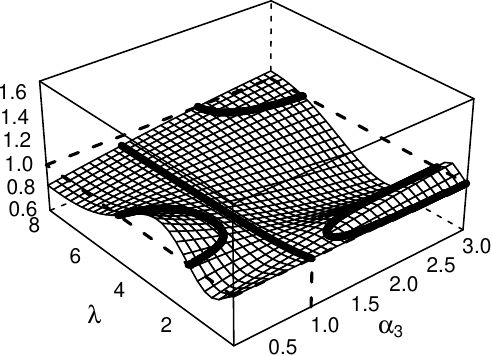}
			\caption{$q=3$, $\nu=1.1$ }
			\label{fig:ssfig7}
		\end{subfigure}
		\begin{subfigure}[b]{0.245\linewidth}
			\includegraphics[height=4.4cm]{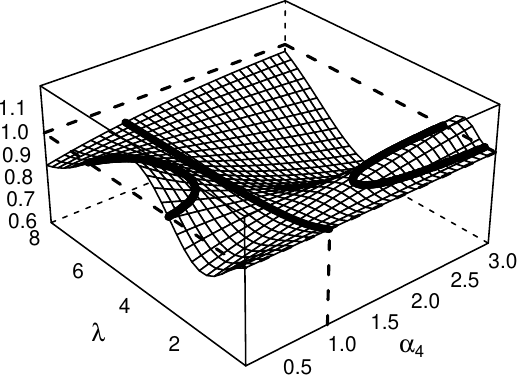}
			\caption{$q=4$, $\nu=1.1$}
			\label{fig:ssfig8}
		\end{subfigure} \vspace{0.1em}
		\caption{Dispersion surfaces of the type 1 Poisson (top) and CMP (bottom) models with $\mathcal{F}=\left\{ q \right\}$. The straight and curved (Poisson and non-Poisson) lines display equidispersion , that is, the level lines are equal to one. An exception is the CMP lines with $\alpha_q=1$, which show underdispersion.}
		\label{fig:vmr_type1}
		
	\end{figure}
}

\end{landscape}
\end{document}